\definecolor {processblue}{cmyk}{0.96,0,0,0}
\newtheorem{theorem}{Theorem}
\newtheorem{lemma}[theorem]{Lemma}
\newtheorem{claim}[theorem]{Claim}
\newtheorem{corollary}[theorem]{Corollary}
\newcommand{\cA}{{\cal A}}
\newcommand{\cI}{{\cal I}}
\newcommand{\cL}{{\cal L}}
\newcommand{\cP}{{\cal P}}
\newcommand{\cT}{{\cal T}}
\begin{document}

\title{
Efficient Approximations for the Online Dispersion Problem}

\author{Jing Chen \hspace{30pt} Bo Li \hspace{30pt} Yingkai Li \\
Department of Computer Science, Stony Brook University\\
Stony Brook, NY 11794\\
\texttt{\{jingchen, boli2, yingkli\}@cs.stonybrook.edu}}

\maketitle

\begin{abstract}
The {\em dispersion} problem has been widely studied in computational geometry and facility location, and is closely related to the packing problem.
The goal is to locate $n$ points (e.g., facilities or persons) in a $k$-dimensional polytope, so that {\em they are far away from each other
and from the boundary of the polytope.}
In many real-world scenarios however,
the points arrive and depart at different times,
 and decisions must be made without knowing future events.
Therefore we study,
for the first time in the literature,
the {\em online dispersion} problem in Euclidean space.

There are two natural objectives when time is involved:
the {\em all-time worst-case} (ATWC) problem tries to  maximize the minimum distance that ever appears
at any time;
and the {\em cumulative distance} (CD) problem tries to maximize
 the integral of the minimum distance throughout the whole time interval.
Interestingly,
the online  problems are highly non-trivial even on a segment.
For cumulative distance, this remains the case even when the problem is time-dependent but offline,
with all the arriving and departure times given in advance.

For the online ATWC problem on a segment, we
construct a deterministic polynomial-time algorithm
which is
$(2\ln 2+\epsilon)$-competitive, where
$\epsilon>0$ can be arbitrarily small and the algorithm's running time is polynomial in $\frac{1}{\epsilon}$.
We show this algorithm is actually
{\em optimal}.
For the same problem in a square, we provide a $1.591$-competitive algorithm and a $1.183$ lower-bound.
Furthermore, for arbitrary $k$-dimensional polytopes with $k\geq 2$,
we provide a $\frac{2}{1-\epsilon}$-competitive algorithm and
a~$\frac{7}{6}$ lower-bound.
All our lower-bounds come from the structure of the online problems and
hold even when computational complexity is not a concern.
Interestingly,
for the offline CD problem in arbitrary $k$-dimensional polytopes,
we provide a polynomial-time black-box reduction to the online ATWC problem,
and the resulting competitive ratio increases by a factor of at most~2.
Our techniques also apply to online dispersion problems with different boundary conditions.

\bigskip

\noindent
{\bf Keywords:}
dispersion, online algorithms, geometric optimization, packing, competitive algorithms

\end{abstract}

\section{Introduction}

The problem of assigning elements to locations in a given area comes up only too often
in real life: where to seat the customers in a restaurant,
where to put certain facilities in a city,
where to build nuclear power stations
in a country, etc.
Different problems have different features and constraints,
but one common feature that appears in many of them is {\em
not to locate the elements too close to each other}:
for people's privacy, for  environmental safety, and/or for serving more users.
Another feature is also common in many applications: that is,
{\em not
to locate the elements too close to the boundary of the area.} Indeed,
for security reasons,
important national industrial facilities in many countries are built
at a safe distance away from the border.
Such problems have been widely studied in computational geometry and facility location; see, e.g.,
\cite{ravi1994heuristic, ben2000obnoxious, baur2001approximation, benkert2006polynomial}.
In particular, in the {\em dispersion} problem defined by \cite{baur2001approximation},
there is
a $k$-dimensional polytope $P$ and an integer $n$,
and the goal is to locate $n$ points in $P$ so as to maximize
{\em the minimum distance among them and from them to the boundary of $P$.}

%

However, there is another important feature in all the scenarios mentioned above
and many other real-world scenarios:
the presence of elements is {\em time-dependent}
and decisions need to be made along time, without knowing
when the elements will come and go in the future.
Indeed, it may be hard to move an element once it is located,
making it infeasible for the decision maker to
relocate all the present elements according to
the optimal static solution
when an arrival/departure event occurs.
Online dispersion and facility location problems have been studied when the underlying locations
are vertices of a graph \cite{meyerson2001online, fotakis2008competitive, mettu2003online}.
In this paper we consider, for the first time in the literature, the
online dispersion problem in Euclidean space.
The arriving and departure times of points are chosen by an adversary who
knows everything and works adaptively. An online dispersion algorithm decides where to locate a point upon its arrival, without any knowledge about future events.

\subsection{Main Results}

We focus on two natural objectives for the online problem:
the {\em all-time worst-case} (ATWC) problem, which aims at maximizing the minimum distance that ever appears
at any time; and the {\em cumulative distance} (CD) problem, which aims at maximizing the integral of the minimum distance throughout the whole time interval.
Although polynomial-time constant approximations
have been given when time is not involved \cite{baur2001approximation, benkert2006polynomial},
nothing was known about the online problem.
As we will show, solutions for the online problem
are already
complex even on a segment.
For cumulative distance,
even when the problem is time-dependent but offline,
with all the arriving and departure times given in advance, it remains unclear how
 to efficiently compute the optimal solution.
We formally define the problem in Section~\ref{sec:preliminary}
and summarize our results in Table~\ref{table:summary} below.

The most technical parts are the online ATWC problem and the offline time-dependent
CD problem. Interestingly, we provide an efficient reduction from
the offline CD problem to the online ATWC problem, and
 show that in order to solve the former, one
  can use an algorithm for the latter as a black-box.
For the online ATWC problem, it is not hard to see that a natural greedy algorithm provides a 2-competitive ratio.
Our main contributions for this problem are to
provide an efficient algorithm that is optimal for the 1-dimensional case,
improve the competitive ratio and prove a lower-bound for squares, and
%
provide an efficient implementation of the greedy algorithm for the general case.
We also prove a simple lower-bound for the general case.

To establish our results, we show interesting new connections between dispersion and ball-packing ---both {\em uniform} packing (i.e., with balls of identical radius) and {\em non-uniform} packing (i.e., with balls of different radii).
All our algorithms are deterministic and of polynomial time.
Some of them take an arbitrarily small constant~$\epsilon$
as a parameter and the running time is polynomial in $\frac{1}{\epsilon}$.
All inapproximability results hold even when running time is not a concern.
Most proofs are given in the appendix.

\begin{table}[htbp]
\begin{center}
\begin{tabular}{|c|p{0.45\textwidth}|p{0.35\textwidth}|}
  \hline
    & Online & Offline time-dependent \\ \hline
ATWC &
\begin{minipage}{0.45\textwidth}
\vspace{5pt}
$k=1$: a $2\ln 2$ ($\approx 1.386$) lower-bound and an optimal algorithm; see Theorems \ref{thm:1d-lower} and~\ref{thm:1-dtight}.

$k=2$: a $1.183$ lower-bound and a $1.591$-competitive algorithm for squares; see Theorems \ref{thm:2d-lower} and~\ref{thm:2d-worst}.

$k\geq 2$: a $\frac{7}{6}$ lower-bound and a $\frac{2}{1-\epsilon}$-competitive algorithm
for arbitrary polytopes~$P$;  see Theorems~\ref{thm:kd-lower} and \ref{thm:kd:approv}.
\vspace{5pt}
\end{minipage}
&
\begin{minipage}{0.35\textwidth}
Equivalent to
dispersion without  \\ time; see Claim \ref{clm:atwc=disp}.
\end{minipage} \\ \hline
CD & \begin{minipage}{0.45\textwidth}
No constant competitive algorithm even when $k=1$; see Claim~\ref{claim:ic:online}.
\end{minipage}
& \begin{minipage}{0.35\textwidth}
\vspace{5pt}
A black-box reduction to online ATWC for arbitrary $k$ and $P$, with the competitive ratio scaling up by at most 2; see Theorem \ref{thm:integral}.
\vspace{5pt}
\end{minipage}
\\ \hline
\end{tabular}
\end{center}
\caption{Online and offline time-dependent dispersion problems in a $k$-dimensional polytope $P$.}
\vspace{-15pt}
\label{table:summary}
\end{table}


\paragraph{Discussion and future directions.}
An algorithm for high-dimensional polytopes may not be directly applicable in dimension 1, because
all locations are on the boundary when a segment is treated as a high-dimensional polytope, and the minimum distance is always~0.
Accordingly, we do not know whether the lower-bound for dimension 1 carries through to higher dimensions, and proving better lower-bounds
will be an interesting problem for future studies. In the appendix, we consider online dispersion without the boundary constraint, where
the lower-bound for dimension~1 indeed carries through.
We show all our algorithms
can be adapted for this setting.
Another important future direction is to understand the role of randomized algorithms in the online dispersion problem. Finally, improving the (deterministic or randomized)
algorithms' competitive ratios in various classes of polytopes
is certainly a long-lasting theme for the online dispersion problem. Special classes such as regular polytopes and uniform polytopes may be reasonable starting points. Given the connections between dispersion and ball-packing,
it is conceivable that new competitive algorithms for online dispersion may stem from and also imply new findings on ball-packing.


%

\subsection{Related Work}

%

\paragraph{Dispersion without time.}
In dispersion problems in general,
the possible locations
can be
either a {\em continuous} region
or a set of {\em discrete} candidates.
Two objectives have been studied in the literature:
the {\em max-min distance} as considered in this paper,
 and the {\em maximum total distance.}
In continuous settings, the authors of \cite{baur2001approximation}
consider the max-min distance with the boundary condition.
Under $L_{\infty}$-norm, they give a polynomial-time 1.5-approximation in rectilinear polygons%
\footnote{A rectilinear polygon is a 2-dimensional polytope whose edges are axis-parallel.}
and show that a $\frac{14}{13}$-approximation in arbitrary polygons is NP-hard.
Moreover, they show there is no PTAS under any norm unless P=NP.
\cite{benkert2006polynomial} considers a similar boundary condition under $L_2$-norm
and provides a
$1.5$-approximation in polygons with obstacles.
\cite{dumitrescu2012dispersion} considers the problem of selecting $n$ points in
$n$ given unit-disks, one per disk, and the objective is to maximize the minimum distance.

In discrete settings,
\cite{ravi1994heuristic, birnbaum2009improved} show that,
if the distances among the candidate locations do not satisfy triangle inequality, then
there
is no polynomial-time constant approximation for either objective,
unless P=NP;
while if triangle inequality is satisfied, then there are efficient 2-approximations for
both objectives.
If the goal is to maximize total distance and the candidate locations are in a $k$-dimensional space,
\cite{fekete2004maximum} gives a PTAS under $L_{1}$-norm;
and \cite{cevallos2016max, cevallos2017local} provide PTASes
when locations need to satisfy matroid constraints.
Finally, \cite{ben2000obnoxious, qin2000some,katz2002improved, abravaya2010maximizing}
consider various dispersion problems in obnoxious facility allocation.

\paragraph{Packing without time.}
It is well known that dispersion and packing are ``dual'' problems of each other \cite{locatelli2002packing}.
In this paper we show interesting new connections between them
and use several important results for packing in our analysis.
Thus we briefly introduce this literature.
Indeed, the packing problem
is one of the most extensively studied problems
in geometric optimization, and a huge amount of work has been done on different variants of the problem; see \cite{peikert1992packing, hifi2009literature} for surveys on this topic.

One important problem is
to {\em pack circles with identical radius, as many as possible, in a bounded region.}
\cite{fowler1981optimal}
shows this problem to be
strongly NP-hard and \cite{hochbaum1985approximation} gives a PTAS for it.
An APTAS for the {\em circle bin packing} problem is given in \cite{miyazawa2014polynomial}.
The {\em dispersal packing} problem
tries to maximize the radius of a given number of circles packed in a square.
A lot of effort
has been made in finding the optimal radius and the corresponding packing when the number of circles
is a small constant;
see \cite{schaer1965geometric,schaer1965densest,schwartz1970separating,peikert1992packing}.
Heuristic methods
have also been used
in finding approximations when the number of circles gets large \cite{huang2010greedy,szabo2007packing}.
Finally, an important packing problem is to understand the {\em packing density}: that is, the maximum fraction of an infinite space covered by 
a packing of unit circles/spheres.
The packing density is solved for dimension~2
in \cite{fejes1942dichteste} and
for dimension 3 in \cite{hales2005proof}.
Very recently, \cite{viazovska2016sphere} and \cite{cohn2016sphere} solve it for dimensions~8 and 24, respectively.
Asymptotic lower bounds (as the dimension grows) for the density of the densest packing
are provided in \cite{rogers1947existence}.


\paragraph{Online geometric optimization.}
Many important geometric optimization problems have been studied in online settings, although
the settings
and the objectives
are quite different from ours.
In particular, the seminal work of \cite{seiden2002online} provides a nearly-optimal competitive algorithm
for the classic {\em online bin-packing} problem.
Algorithms for variants of the problem have been considered ever since,
such as a constant competitive ratio for packing circles in square bins \cite{hokama2016bounded},
and  constant competitive ratios for bin-packing in higher dimensions \cite{epstein2005optimal, epstein2007bounds}.

In {\em online facility location} \cite{meyerson2001online},
it is the demands rather than the facilities that come along time.
The facilities have open costs and the goal is to minimize
the total open cost and the total distance
between demands and facilities.
As shown in \cite{meyerson2001online},
when the demands arrive adversarially,
there is a randomized polynomial-time $O(\log n)$-competitive algorithm,
 and a constant competitive ratio is impossible.
A deterministic $O(\frac{\log n}{\log \log n})$-competitive algorithm and a matching lower-bound are provided in \cite{fotakis2008competitive} for the same problem.
In {\em incremental facility location} \cite{fotakis2006incremental},
the facilities can be opened, closed or merged, depending on the arriving demands.
In \cite{mettu2003online, rosenkrantz2006obtaining},
there is a cost for each location configuration
and the goal is to minimize the cost when the facilities arrive online.
A constant competitive algorithm for this problem is provided in \cite{mettu2003online},
and \cite{rosenkrantz2006obtaining} gives
a reduction from the online problem
to the offline version of the problem.

\paragraph{Dynamic resource division.}
Fair resource division is an important problem in economics \cite{ghodsi2011dominant,gutman2012fair,kash2014no}. When the resource is 1-dimensional and homogenous,
dynamic fair division is in some sense the ``dual'' of online dispersion: locating $n$ points as far as possible from each other
and from the boundary
is the same as partitioning the segment into $n+1$ pieces as evenly as possible.
\cite{friedman2015dynamic} provides
an optimal $d$-disruptive mechanism for 1-dimensional homogenous resource.
Interestingly, our algorithm for the 1-dimensional case provides an optimal
 mechanism when $d=1$,
although the techniques are
quite different.
Optimal mechanisms for heterogenous or high-dimensional resource remain unknown.
It would be interesting to see
 if our techniques for dispersion can be used in resource division problems in general.

\section{The Online Dispersion Problem}\label{sec:preliminary}

Given a $k$-dimensional polytope $P$, the {\em dispersion} problem \cite{baur2001approximation} takes as input a positive integer~$n$
and outputs $n$ locations, $X_1, \dots, X_n\in P$,
specifying how to locate $n$ points.
For each point $i$, let $dis(X_i, \partial P)$ be the distance from $X_i$ to $\partial P$, the boundary of $P$, measured by $L_2$-norm.
Also, let $dis(X_i, X_j)$ be the distance between $X_i$ and $X_j$ for any $i\neq j$.
The objective is
$$Disp(n; P) \triangleq \max_{X_1,\dots, X_n \in P}\ \min_{i, j\in [n]}\{dis(X_i, \partial P), dis(X_i, X_j)\}.$$
In Appendix~\ref{appendix:scattering} we also consider the dispersion problem where the distances to the boundary are not taken into consideration. Most of our techniques can be applied there.
%

We now define the {\em online dispersion} problem, where
each point $i$ arrives at time $s_i$ and departs at time $d_i$, with $d_i>s_i$.
Without loss of generality, $0=s_1\leq s_2 \leq \cdots\leq s_n$.
An online algorithm is notified upon the occurrence of  an arrival/departure event.
It must decide the location $X_i$ for a point $i$ upon its arrival, knowing neither the future events nor the total number of points $n$.
An adversary knows how the algorithm works and chooses future events after seeing the output of the algorithm so far.
In the {\em time-dependent offline} version of the problem, the times of all events,
denoted by a vector $S = ((s_1, d_1),\dots, (s_n, d_n))$, is given to the algorithm in advance.

Given such a vector $S$, let $T = \max_{i\in [n]} d_i$ be the last departure time.
Moreover, given locations $X = (X_1,\dots, X_n)$, for any $t\leq T$, let
$$d_{min}(t; X) = \min_{i, j\in [n] : s_i\leq t \leq d_i, s_j\leq t\leq d_i} \{dis(X_i, \partial P), dis(X_i, X_j)\}$$
be the minimum distance corresponding to the points that are present at time $t$.
When $X$ is clear from the context, we may write $d_{min}(t)$ for short.
We consider two natural objectives:
the {\em all-time worst-case} (ATWC) problem, where the objective is
$$OPT_A(S; P)  \triangleq \max_{X_1,\dots, X_n} \min_{t\leq T} d_{min}(t);$$
and the {\em cumulative distance} (CD) problem, where the objective is
$$OPT_C(S; P) \triangleq \max_{X_1, \dots, X_n} \int_0^T d_{min}(t) dt.$$
Note that both objectives
are defined to be the optimum of the corresponding {\em offline} problems,
the same
as the {\em ex-post} optimum for the online problems.
Below we provide two simple observations about the objectives, proved in Appendix \ref{apdx:preliminary}.

\begin{claim}
\label{claim:ic:online}
For the CD problem, even when $k=1$ and $P$ is the unit segment, no (randomized) online algorithm achieves a competitive ratio
to $OPT_C$ better than $\Omega(n)$.
\end{claim}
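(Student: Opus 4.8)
The plan is to exhibit an adversarial arrival/departure sequence on the unit segment $P=[0,1]$ that forces any online algorithm to choose an ``early'' location which later becomes catastrophic, while the offline optimum stays bounded below by a constant. The intuition is that in the CD objective the integral rewards keeping the minimum distance large \emph{over time}, so a single badly-placed point that must remain present for a long interval can drag the cumulative distance down by a factor growing with~$n$. Because the online algorithm must commit to $X_1$ before learning anything about the future, the adversary can always place subsequent points so as to crowd whatever region $X_1$ sits in.

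\textbf{Construction.} First I would have a single point $1$ arrive at time $s_1=0$; the online algorithm must place it at some $X_1\in[0,1]$. By symmetry assume $X_1\le \tfrac12$, so $X_1$ lies in the left half (the distance from $X_1$ to the nearer endpoint is at most $\tfrac12$, and in fact one can push the adversary's advantage by noting $\min\{dis(X_1,\partial P)\}\le \tfrac12$ always). The adversary then, over a long horizon, sends a batch of roughly $n-1$ additional points all with a short lifespan but arriving at times that force them into the same sub-region as $X_1$, or alternatively keeps point $1$ alive while flooding the segment so that the present configuration is always tightly packed. The key quantitative point is to arrange the lifespans so that the ``bad'' minimum distance $d_{min}(t)$ of order $O(1/n)$ persists for a constant fraction of the total horizon $T$, giving $\int_0^T d_{min}(t)\,dt = O(T/n)$ for the online algorithm. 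Meanwhile, the offline optimum knows in advance which points coexist and can spread them out evenly at each instant, achieving $d_{min}(t)=\Theta(1)$ (or at least $\Omega(1/\text{(number simultaneously present)})$ integrated to a constant) so that $OPT_C(S;P)=\Omega(T)$.

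\textbf{Main steps.} (i) Fix the first placement and invoke the halving symmetry to pin $X_1$ into a known half. (ii) Design the event vector $S$ in phases: a phase that commits the algorithm, followed by phases whose simultaneous-presence patterns the offline solver can anticipate but the online algorithm cannot. (iii) Lower-bound $OPT_C(S;P)$ by exhibiting an explicit offline placement achieving $\Theta(T)$. (iv) Upper-bound the online algorithm's integral by $O(T/n)$, using that at every commitment the adversary can respond to the already-revealed locations. (v) Take the ratio to obtain $\Omega(n)$. For the randomized lower bound I would invoke Yao's principle: fix a distribution over the adversary's sequences (e.g. randomize which half gets flooded, or the arrival order) and argue that every \emph{deterministic} algorithm has expected ratio $\Omega(n)$ against it, which then transfers to randomized algorithms.

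\textbf{Main obstacle.} The delicate part is the simultaneous bookkeeping of the integral: I must ensure that the interval on which the online algorithm suffers its small $d_{min}$ is a \emph{constant fraction} of $T$ (not a vanishing sliver), while also ensuring the offline solution genuinely achieves $\Theta(1)$ minimum distance on essentially the whole horizon. These two requirements pull in opposite directions, since forcing the online algorithm into a tight configuration typically requires many simultaneously-present points, which also shrinks what the offline optimum can achieve at that instant. Balancing the number of concurrent points against the lifespans --- so that offline stays $\Omega(1)$ integrated while online is forced to $O(1/n)$ over a $\Theta(1)$-fraction of time --- is where the real care is needed, and getting the randomized (Yao) version to yield the same $\Omega(n)$ bound rather than a weaker one is the secondary technical hurdle.
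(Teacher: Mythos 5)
There is a genuine gap here, and it is precisely the tension you flag at the end as the ``main obstacle.'' Your construction tries to make the online algorithm's $d_{min}(t)$ small by keeping the segment crowded (flooding around $X_1$, or keeping the configuration ``always tightly packed''), but any instant at which many points are simultaneously present also caps the \emph{offline} optimum at $O(1/m)$ for $m$ concurrent points, so the two requirements --- online stuck at $O(1/n)$ over a constant fraction of $[0,T]$, offline at $\Theta(1)$ over essentially all of $[0,T]$ --- really do contradict each other under your construction. Note also that your adversary cannot ``force'' flood points into the sub-region containing $X_1$: the adversary controls only times, never locations, so crowding a chosen region is not something it can arrange. As written, if the flood points are short-lived, the online algorithm suffers only on a short time window; if they are long-lived, the offline optimum suffers too. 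Either way the ratio does not reach $\Omega(n)$.

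The missing idea, which resolves this tension in one stroke, is that the adversary's weapon is \emph{selective departure after observing the placements}, not crowding over time. Have all $n$ points arrive at time $0$. The online algorithm must place them, so $d_{min}(0)\leq \frac{1}{n+1}$, realized either by some pair $X_i, X_j$ or by some $X_i$ and the boundary. The adaptive adversary then keeps \emph{only} that pair (or that single point) alive until a large time $T$ and lets every other point depart at time $1$. Since online placements are irrevocable, the surviving pair remains at distance at most $\frac{1}{n+1}$ for the entire horizon even though only two points are present after time $1$; hence the online integral is at most $\frac{T}{n+1}$. The offline optimum, knowing which two points survive, places them at $\frac13$ and $\frac23$, giving $OPT_C(S;P) > \frac{T-1}{3}$, and the ratio is $\Omega(n)$ as $T\to\infty$. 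This also disposes of the randomized case without Yao's principle: because the adversary in this model is adaptive, it sees the realized (random) placements at time $0$ and selects the closest pair, so the bound holds pathwise for every realization; your Yao-based route would require constructing an oblivious input distribution, which is extra work the model does not demand.
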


Next, given any ex-post instance $S = ((s_1, d_1),\dots, (s_n, d_n))$,
let $m$ be the maximum number of points simultaneously present at any time $t$:
that is, $m = \max_{t\leq T} |\{i : s_i\leq t\leq d_i\}|$.

\begin{claim}\label{clm:atwc=disp}
$\forall$ $S = ((s_1, d_1),\dots, (s_n, d_n))$, letting $m = \max_{t\leq T} |\{i : s_i\leq t\leq d_i\}|$, we have
$OPT_A(S; P) = Disp(m; P)$.
\end{claim}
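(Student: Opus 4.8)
The plan is to establish the two inequalities $OPT_A(S;P)\le Disp(m;P)$ and $OPT_A(S;P)\ge Disp(m;P)$ separately. The first is essentially immediate: let $t^*$ be a time at which exactly $m$ points are simultaneously present, which exists by the definition of $m$. For any location assignment $X=(X_1,\dots,X_n)$, the $m$ points present at $t^*$ sit at $m$ locations in $P$, and their mutual distances together with their distances to $\partial P$ constitute a feasible configuration for the static $m$-point dispersion problem. Hence $d_{min}(t^*;X)\le Disp(m;P)$, and therefore $\min_{t\le T} d_{min}(t;X)\le Disp(m;P)$. Taking the maximum over $X$ gives $OPT_A(S;P)\le Disp(m;P)$.

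For the reverse inequality I would exhibit a single assignment $X$ whose all-time worst-case value is at least $Disp(m;P)$. First fix an optimal static configuration $Y_1,\dots,Y_m\in P$ for $m$ points, so that $\min_{i\ne j}\{dis(Y_i,\partial P),dis(Y_i,Y_j)\}=Disp(m;P)$. I then want to map each of the $n$ points to one of these $m$ \emph{slots} in such a way that, at every instant, all present points occupy distinct slots. This is exactly a proper coloring of the interval graph on $[s_1,d_1],\dots,[s_n,d_n]$ with $m$ colors. Concretely I would run a sweep over the ordered arrival/departure events, maintaining the set of slots currently occupied by present points: when point $i$ arrives, at most $m-1$ other points are present (since at most $m$ are ever present simultaneously), so at most $m-1$ slots are occupied and at least one slot is free; assign $X_i:=Y_j$ for some free slot $j$, and release the slot when $i$ departs. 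By induction this invariant is preserved, so at any time $t$ the present points sit at distinct locations among $\{Y_1,\dots,Y_m\}$. Consequently $d_{min}(t;X)\ge\min_{i\ne j}\{dis(Y_i,\partial P),dis(Y_i,Y_j)\}=Disp(m;P)$ for every $t$, whence $OPT_A(S;P)\ge\min_{t\le T} d_{min}(t;X)\ge Disp(m;P)$.

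Combining the two inequalities yields the claimed equality. The only non-trivial ingredient is the slot-assignment step, which is the crux of the argument: it relies on the fact that a collection of intervals can be colored with a number of colors equal to the maximum number of intervals sharing a common point (equivalently, that interval graphs are perfect). The greedy sweep described above makes this self-contained, and its correctness hinges precisely on the definition of $m$ as the maximum overlap, which guarantees a free slot at each arrival. A minor point to handle carefully is that when only one point is present the mutual-distance term is vacuous and $d_{min}(t)$ reduces to the boundary distance, which is still bounded below by $Disp(m;P)$ under the constructed assignment.
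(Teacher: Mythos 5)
Your proposal is correct and follows essentially the same route as the paper's proof: the upper bound via the configuration at a time when $m$ points are simultaneously present, and the lower bound by fixing an optimal static $m$-point configuration $Y_1,\dots,Y_m$ and assigning arriving points to currently unoccupied locations, which is feasible precisely because at most $m$ points are ever present at once. Your explicit framing of the slot assignment as greedy interval coloring is just a more detailed account of the paper's ``arbitrarily pick a location $Y_i$ that is not currently occupied'' argument.
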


In light of the claims above,
the online CD problem is highly inapproximable and the offline ATWC problem is equivalent to the dispersion problem without time.
Thus we will focus on the online ATWC problem and the offline CD problem, especially the former.
Our results will also imply a simple $O(n)$-competitive algorithm for the online CD problem (Corollary~\ref{col:onlineCD} of Appendix \ref{sec:kdatwc}), matching the lower-bound in Claim \ref{claim:ic:online}.

Below we point out some connections between dispersion and ball-packing, proved in Appendix \ref{apdx:preliminary}: they are not hard to show, and
similar results
for the dispersion problem without the boundary condition have been pointed out in \cite{locatelli2002packing}.
More precisely, the {\em (uniform) ball-packing} problem \cite{hochbaum1985approximation}
in a polytope $P$ takes as input a non-negative
value~$r$ and outputs an integer $n$,
the maximum number of balls of radius $r$
that can be packed non-overlappingly in~$P$, together with
a corresponding packing.
We denote the solution by $Pack(r; P)$.
The {\em dispersal packing} problem~\cite{baur2001approximation} is a ``mixture'' of dispersion and packing:
it takes as input an integer $n$ and
outputs the maximum radius for~$n$ balls with identical radius that can be packed in~$P$,
together with a corresponding packing.
That is,
$DP(n; P) \triangleq \max\{r:  Pack(r; P)\geq n\}$.

Recall that a $k$-dimensional convex polytope $P$ has an {\em insphere} if the largest ball contained wholly in $P$
is tangent to {\em all} the facets (i.e., $(k-1)$-faces) of $P$.
Such a ball, if it exists, is unique. It is referred to as {\em the} insphere of $P$.
The center of the insphere maximizes the minimum distance for any point in $P$ to its facets, and
has the same distance to all facets ---the radius of the insphere.
We have the following two claims.

\begin{claim}\label{clm:Disp-DP}
For any $k\geq 1$ and any $k$-dimensional convex polytope $P$ with an insphere,
letting~$x$ be the radius of the insphere,
we have
$Disp(n; P) = \frac{2xDP(n; P)}{x+DP(n; P)}$.
\end{claim}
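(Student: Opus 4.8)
The plan is to prove both inequalities $Disp(n;P)\ge \frac{2x\,DP(n;P)}{x+DP(n;P)}$ and $Disp(n;P)\le \frac{2x\,DP(n;P)}{x+DP(n;P)}$ by a single geometric device: scaling a configuration homothetically about the incenter $O$ of $P$. First I would record the convenient reformulations. Writing $R:=DP(n;P)$, a packing of $n$ balls of radius $R$ is exactly a choice of centers $X_1,\dots,X_n$ with $dis(X_i,\partial P)\ge R$ and $dis(X_i,X_j)\ge 2R$, whereas $Disp(n;P)$ asks for points with $dis(X_i,\partial P)\ge D$ and $dis(X_i,X_j)\ge D$, where $D:=Disp(n;P)$. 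The only difference is the factor $2$ on the pairwise constraint, and this asymmetry is precisely what produces the harmonic-mean formula.

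The heart of the argument is an affine identity for the homothety $X_i\mapsto X_i':=O+\lambda(X_i-O)$. Using that $P$ has an insphere, each facet $F$ lies on a hyperplane at distance exactly $x$ from $O$; writing the distance from a point $Y$ to that hyperplane as $x-\langle Y-O,\nu_F\rangle$ for the outward unit normal $\nu_F$, substituting $Y=X_i'$, and taking the minimum over facets, I would establish
$$dis(X_i',\partial P) = (1-\lambda)x + \lambda\, dis(X_i,\partial P), \qquad dis(X_i',X_j') = \lambda\, dis(X_i,X_j).$$
Thus scaling multiplies pairwise distances by $\lambda$ but pulls the boundary distance toward $x$ along a convex combination. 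This identity is the main obstacle: it is exactly where the insphere hypothesis is used (the incenter must be equidistant from every facet hyperplane), and it requires checking that $X_i'$ stays strictly interior so that ``distance to the nearest facet hyperplane'' really equals $dis(X_i',\partial P)$.

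For the lower bound on $Disp$, I would start from an optimal packing and shrink with $\lambda=\frac{x}{x+R}\in(0,1)$. Convexity of $P$ keeps the points inside (they are convex combinations of $O$ and $X_i$), and the identity gives both $dis(X_i',\partial P)\ge\frac{2xR}{x+R}$ and $dis(X_i',X_j')\ge\frac{2xR}{x+R}$, so $Disp(n;P)\ge\frac{2xR}{x+R}$. For the reverse, I would start from an optimal dispersion configuration and \emph{expand} with $\mu=\frac{2x}{2x-D}>1$, which is well defined since $D\le x<2x$. Here the delicate point is that expanding could push points out of $P$; but the identity yields $dis(X_i',\partial P)\ge\frac{xD}{2x-D}=:r>0$, so every scaled point remains strictly interior, and the balls of radius $r$ centered at the $X_i'$ are non-overlapping (their centers are at pairwise distance $\ge\mu D=2r$) and contained in $P$. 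Hence $DP(n;P)\ge\frac{xD}{2x-D}$, and solving this inequality for $D$ gives $Disp(n;P)=D\le\frac{2xR}{x+R}$.

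Combining the two directions yields the claimed equality. As a sanity check, the case $n=1$ gives $Disp(1;P)=DP(1;P)=x$, and the formula correctly returns $\frac{2x\cdot x}{x+x}=x$. The only routine computations are the two one-line substitutions of $\lambda$ and $\mu$ into the affine identity and the final rearrangement $R(2x-D)\ge xD\Rightarrow D\le\frac{2xR}{x+R}$; the conceptual weight is entirely carried by the scaling identity and the feasibility checks in the expanding direction.
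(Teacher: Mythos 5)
Your proof is correct and follows essentially the same route as the paper: both directions are homotheties about the incenter with exactly the same scale factors (contracting by $\frac{x}{x+R}$ to turn an optimal packing into a dispersion, expanding by $\frac{2x}{2x-D}$ to turn an optimal dispersion into a packing); the paper merely phrases the homothety through the offset polytopes $P'$ and $P''$ obtained by translating the facets toward or away from the incenter, while you phrase it through the affine identity for distances to the facet hyperplanes. Your explicit check that the expanded points stay strictly interior (via positivity of the signed distances) is a slightly more careful rendering of a step the paper treats implicitly, but the underlying argument is the same.
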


\begin{claim}\label{clm:DispPacking}
For any $k\geq 1$ and any $k$-dimensional convex polytope $P$ with an insphere,
given the radius of the insphere,
\begin{itemize}
\item[(1)] any polynomial-time algorithm for $Disp(n; P)$ implies such an algorithm for $Pack(r; P)$;
\item[(2)] any polynomial-time algorithm for $Pack(r; P)$ implies an FPTAS for $Disp(n; P)$.
\end{itemize}
\end{claim}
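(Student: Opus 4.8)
The plan is to exploit the exact identity of Claim~\ref{clm:Disp-DP}, namely $Disp(n;P)=\frac{2x\,DP(n;P)}{x+DP(n;P)}$, together with the observation that $Pack$ and $DP$ are generalized inverses of one another: for every radius $r$ and integer $n$ we have $Pack(r;P)\geq n \iff DP(n;P)\geq r$, since a packing of $n$ balls of radius $r$ exists precisely when the best common radius for $n$ balls is at least $r$. Consequently $DP(n;P)$ is non-increasing in $n$ and $Pack(r;P)$ is non-increasing in $r$, so either quantity can be located by a monotone (binary) search once we can evaluate the other side. Inverting the identity gives the two closed forms I will use throughout: $DP(n;P)=\frac{x\cdot Disp(n;P)}{2x-Disp(n;P)}$ (valid since $Disp(n;P)\leq Disp(1;P)=x<2x$) and $Disp(n;P)=g(DP(n;P))$, where $g(r)=\frac{2xr}{x+r}$ is increasing and concave in $r$.

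For part~(1), given $r$ and $x$ I would compute $Pack(r;P)=\max\{n: DP(n;P)\geq r\}$ as follows. Because $DP(n;P)\geq r \iff Disp(n;P)\geq g(r)$, each membership test reduces to one call to the assumed $Disp$-algorithm followed by a comparison against the computable threshold $g(r)=\frac{2xr}{x+r}$; by monotonicity in $n$ I binary-search for the largest $n^*$ passing the test. A crude volume bound $n\leq \mathrm{Vol}(P)/\mathrm{Vol}(B_r)$ supplies an a~priori upper limit whose logarithm is polynomial in the input size, so only polynomially many $Disp$-calls are made. To output an actual packing I run the $Disp$-algorithm at $n^*$, obtain a placement with minimum distance $d=Disp(n^*;P)$, and push it outward by the homothety $H_\mu(X)=c+\mu(X-c)$ about the insphere center $c$ with $\mu=\frac{2x}{2x-d}$; the insphere property makes every facet recede uniformly by $(1-\mu)x$, so a routine computation shows the images are centers of $n^*$ non-overlapping balls of radius $\frac{xd}{2x-d}=DP(n^*;P)\geq r$, which therefore contain a valid radius-$r$ packing.

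For part~(2) the direction is reversed: I approximate $DP(n;P)=\sup\{r:Pack(r;P)\geq n\}$ by binary-searching $r$ using the assumed $Pack$-algorithm as the monotone test $Pack(r;P)\geq n$, over the interval $[r_{\min},x]$, where $r_{\min}$ is an explicit volume-type lower bound on $DP(n;P)$ with $\log(x/r_{\min})$ polynomial in the input. Driving the search to additive accuracy $\epsilon\,r_{\min}\leq \epsilon\,DP(n;P)$ produces $\tilde r$ with $(1-\epsilon)DP(n;P)\leq \tilde r\leq DP(n;P)$ in $O(\log\frac{x}{\epsilon r_{\min}})$ calls, i.e. time polynomial in the input and in $1/\epsilon$. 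Returning the value $g(\tilde r)=\frac{2x\tilde r}{x+\tilde r}$ then yields an FPTAS for $Disp(n;P)$, since monotonicity of $g$ gives $\frac{g(\tilde r)}{g(DP(n;P))}=(1-\epsilon)\frac{x+DP(n;P)}{x+(1-\epsilon)DP(n;P)}\geq 1-\epsilon$; the corresponding near-optimal placement is recovered by taking any $n$ of the centers returned by $Pack(\tilde r;P)$ and applying the inward homothety $H_\nu$ with $\nu=\frac{x}{x+\tilde r}$, which the same insphere computation shows has minimum distance exactly $g(\tilde r)$.

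The routine parts are the two homothety computations (already implicit in the proof of Claim~\ref{clm:Disp-DP}) and the volume bounds. The step needing the most care is the accuracy/running-time bookkeeping of the binary search in part~(2): one must choose the search interval and granularity so that an \emph{additive} error in $r$ certifies a \emph{multiplicative} $(1-\epsilon)$ guarantee on $Disp$ while keeping the number of $Pack$-oracle calls polynomial in $1/\epsilon$, which is exactly where the lower bound $r_{\min}$ and the concavity of $g$ enter.
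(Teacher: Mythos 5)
Your proposal is correct and follows essentially the same route as the paper: binary search driven by the monotonicity of $Disp(n;P)$ and $DP(n;P)$, the identity of Claim~\ref{clm:Disp-DP} to convert between dispersion values and packing radii, and volume bounds to limit the search range to polynomially many oracle calls. If anything, your treatment of part~(2) is more careful than the paper's (which simply stops the search when the interval length is at most $\epsilon$): your explicit lower bound $r_{\min}$ makes the additive-to-multiplicative error conversion precise, and your homothety constructions spell out the recovery of actual packings and placements that the paper leaves implicit.
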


To the best of our knowledge, it is still unknown whether ball-packing in regular polytopes
(which is a special case of convex polytopes with an insphere)
is NP-hard or not. Therefore the complexity of dispersion in regular polytopes remains open.
Note that ball-packing in arbitrary polytopes is NP-hard \cite{fowler1981optimal}, so is a $\frac{14}{13}$-approximation for
dispersion in rectilinear polygons \cite{baur2001approximation}.
Moreover, a claim similar to Claim \ref{clm:DispPacking} applies to $DP(n; P)$ and $Pack(r; P)$ in arbitrary polytopes.
The relation between dispersion and packing in arbitrary polytopes is not so clear and worth further investigation: for example,
it would be interesting to know if there exists a counterpart of Claim \ref{clm:DispPacking} when the polytope does not have an insphere.

%


Finally, the {\em insert-only} model,
where all points have the same departure time,
is a special case of our general model.
Interestingly, as will become clear in our analysis,
the difficulty of
the general online ATWC problem is captured by the
problem under this special model.
The insert-only model was also considered by \cite{mettu2003online, rosenkrantz2006obtaining}
in settings different from ours and with a different objective function.
We further discuss this model in Appendix \ref{app:insert_only}.


\section{The 1-Dimensional Online All-Time Worst-Case Problem}
\label{sec:1-D}

Note that a 1-dimensional polytope is simply a segment.
Without loss of generality, we consider the unit segment $P = [0, 1]$.
Below we first provide a lower bound for the competitive ratio of any algorithm, even computationally unbounded ones.
%

\subsection{The Lower Bound}


\begin{theorem}\label{thm:1d-lower}
No online algorithm
 achieves a competitive ratio better than $2\ln 2$~($\approx 1.386$) for the 1-dimensional ATWC problem.
\end{theorem}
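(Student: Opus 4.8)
The plan is to construct an adversarial arrival/departure sequence on the unit segment $[0,1]$ that forces any online algorithm to incur a loss of at least $2\ln 2$ against the offline optimum. By Claim~\ref{clm:atwc=disp}, the all-time worst-case value equals $Disp(m;[0,1])$ where $m$ is the peak number of simultaneously-present points, so the adversary's power lies in controlling \emph{when} points depart, thereby changing the effective $m$ at different times and punishing an algorithm that commits to locations tuned for the wrong count. On a segment, $Disp(m;[0,1]) = \frac{1}{2m}$ (the optimal static placement puts the $m$ points at $\frac{2j-1}{2m}$ for $j=1,\dots,m$, yielding pairwise and boundary gaps of $\frac{1}{2m}$). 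So the online algorithm, having placed some set of points, is always being measured against $\frac{1}{2m}$ for whatever $m$ the adversary currently realizes.

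First I would set up the adversary to feed points in one at a time so that at the moment of placing the $i$-th point the algorithm knows only the locations it has already chosen. The key structural observation is that after $i$ points are present, the algorithm has carved $[0,1]$ into subintervals, and the relevant quantity is the \emph{largest gap} (including the two boundary half-gaps). Whatever the algorithm does, by a pigeonhole/averaging argument the minimum distance it can guarantee when $i$ points are present is at most $\frac{1}{2i}$ only if the points are perfectly spread; but the adversary's leverage is that the algorithm cannot simultaneously be perfectly spread for every prefix $i=1,2,\dots,m$. I would then let the adversary, after the algorithm reaches a peak of $m$ points, selectively \emph{remove} points so as to expose the worst-spread sub-configuration, driving $d_{min}$ below the offline optimum $\frac{1}{2m'}$ for the reduced count $m'$.

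The heart of the argument is an optimization that produces the constant $2\ln 2$. I would parametrize the adversary's schedule by a continuous ``density'' of arrivals and analyze the ratio between the offline optimum and the best the online algorithm can maintain across the whole time horizon. The natural calculation replaces the discrete prefix sums by an integral: if the algorithm must perform well at a continuum of scales $m \in [1, M]$, the competitive ratio is governed by something of the form $\int_1^2 \frac{dt}{t} = \ln 2$, and the factor of $2$ arises from the boundary contribution (points near the ends of the segment see only a one-sided gap, effectively doubling the relevant distance). Concretely, I expect to reduce the lower bound to showing that $\sup$ over online strategies of $\min$ over adversary removals of the ratio is at least $\left(\int \cdots\right)^{-1}$, which evaluates to $\frac{1}{2\ln 2}$ in the algorithm's favor, i.e.\ a competitive ratio of at least $2\ln 2$.

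\textbf{The main obstacle} will be making the continuous relaxation rigorous: one must show that the discrete adversary (finitely many integer-counted points) can approach the integral bound arbitrarily closely, which requires carefully choosing the number of points and their departure times so that the worst exposed sub-configuration tracks the continuous worst case. A second delicate point is handling the boundary: because distance-to-boundary is part of $d_{min}$, the analysis of gaps near $0$ and $1$ differs from interior gaps by a factor of two, and getting this bookkeeping exactly right is what turns a bound of $\ln 2$ into $2\ln 2$. I would therefore isolate the boundary effect as a separate lemma before assembling the full adversarial schedule, so that the final optimization reduces cleanly to the integral $\int_1^2 \frac{dt}{t}$.
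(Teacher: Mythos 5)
Your proposal has three genuine problems, and the first is fatal to the arithmetic. The static optimum on the unit segment is $Disp(m;[0,1]) = \frac{1}{m+1}$, not $\frac{1}{2m}$: in the dispersion objective the distance to the boundary counts at \emph{full} value, so the even placement $X_j = \frac{j}{m+1}$ (all pairwise and boundary gaps equal to $\frac{1}{m+1}$) beats your placement $X_j = \frac{2j-1}{2m}$, whose boundary gap is only $\frac{1}{2m}$. What you wrote down is essentially the optimum of the \emph{dispersal packing} problem $DP(m;P)$, where the boundary only needs to accommodate a radius. With the deflated benchmark $\frac{1}{2m}$, the gap-counting argument can only yield a lower bound of $\ln 2$, and your hope of recovering the missing factor of $2$ from "one-sided boundary gaps" is not a correct mechanism. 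In the paper's proof the factor $2$ has nothing to do with the boundary: it arises because each point inserted into an existing interval creates \emph{two} gaps, each of which must have length at least $\frac{Disp(i;P)}{\alpha} = \frac{1}{\alpha(i+1)}$ if the algorithm is $\alpha$-competitive at that prefix; summing the consumed length $\frac{2}{\alpha(i+1)}$ over the arrivals $i = r,\dots,2r-1$ forces total length at least $\frac{2\sigma'_r}{\alpha}$, where $\sigma'_r = \sum_{i=r+1}^{2r}\frac{1}{i} \to \ln 2$, whence $\alpha \geq 2\sigma'_r$.

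The second problem is your adversary design. In the ATWC problem, removing points cannot hurt the algorithm: $d_{min}(t)$ is a minimum over the \emph{present} points, so a departure deletes terms from that minimum and can only weakly increase it — your plan to "selectively remove points so as to expose the worst-spread sub-configuration, driving $d_{min}$ below the offline optimum" is backwards. Moreover, comparing the post-removal configuration against $Disp(m')$ for the reduced count $m'$ uses the wrong benchmark: by Claim~\ref{clm:atwc=disp} the benchmark for the whole sequence is $Disp(m;P)$ for the \emph{peak} $m$, which removals do not change. The paper's adversary is purely insert-only (the terminal simultaneous departure plays no role), and its actual leverage is the observation you mention only in passing: an online algorithm ignorant of $n$ must be $\alpha$-competitive after \emph{every prefix}. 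Finally, even granting the right benchmark and adversary, your proposal never supplies the contradiction itself: the paper needs a case analysis showing that, if the ratio were better than $2\sigma'_r$, no interval created by the initial $r-1$ points could absorb two of the next $r$ arrivals, hence each interval is split exactly once, and then the interval lengths sum to strictly more than $1$. Your "pigeonhole/averaging" remark and the integral $\int_1^2 \frac{dt}{t}$ name the correct limit of the harmonic sum, but without that interval-accounting argument there is no discrete bound to take a limit of.
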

\begin{proof}[Proof ideas]
Letting $\sigma'_r = \sum_{i=r+1}^{2r} \frac{1}{i}$ for
any positive integer $r$, we show that no algorithm achieves a competitive ratio better than $2\sigma'_r$.
Roughly speaking, we construct an instance (i.e., an adversary) for the online ATWC problem with three stages.
In the first stage, $r-1$ points arrive simultaneously;
in the second stage, $r$ new points arrive one by one;
and finally, all $2r-1$ points depart simultaneously.
If an algorithm $\cA$ is $\alpha$-competitive to $OPT_A$ with $\alpha<2\sigma'_r$,
it must be $\alpha$-competitive after the arrival of each point, as it does not know the
total number of points.
Thus for each arriving point, there must exist an interval long enough such that
putting the new point inside the interval does not violate the competitive ratio.
We show that in order for $\cA$ to do so, the segment must be longer than $P$ itself,
a contradiction.
Theorem \ref{thm:1d-lower} holds by setting $r \to \infty$.
The complete proof is provided in Appendix \ref{appendix:Theorem 1}.
\end{proof}

\subsection{A  Polynomial-Time Online Algorithm}

Next, we provide a deterministic polynomial-time online algorithm
whose competitive ratio to $OPT_A$ can be arbitrarily close to $2\ln 2$.
Intuitively, a good algorithm should disperse the points as evenly as possible.
However, if at some point of time with $m$ points present,
the resulting $m+1$ intervals on the segment have almost the same length,
then the next arriving point will force the minimum distance to drop by a factor of 2,
while the optimum only changes from $\frac{1}{m+1}$ to $\frac{1}{m+2}$,
causing the competitive ratio to drop by almost~2.
To overcome this problem,
the algorithm must find a balance between two consecutive points, choosing a sub-optimal solution for the former so as to leave enough space for the latter. The difficulty, as for online algorithms in general, is that this balance needs to be kept for arbitrarily many pairs of
consecutive point, as the sequence of points is chosen by an adversary who observes the algorithm's output.
Inspired by our lower bound,
roughly speaking,
our algorithm uses a parameter $r$
to pre-fix the locations of the  first $r$ points and the resulting $r+1$ intervals,
and then inserts the next $r+1$ points in the middle of these intervals.
The idea is that, when done properly, after these $2r+1$ points, the resulting configuration
is almost the same as if the algorithm has used parameter $2r+1$ to pre-fix the first $2r+2$ intervals:
then the procedure can repeat for arbitrary sequences.

More specifically, given a positive integer $r$,
let $Q = \{q_1, \dots, q_r\}$ be a set of positions on the segment,
such that the length ratios of the $r+1$ intervals sliced by them
are
$\frac{1}{r+1}: \frac{1}{r+2}: \cdots :\frac{1}{2r+1}.$
That is, letting $\sigma_r=\sum_{i=r+1}^{2r+1} \frac{1}{i}$, the lengths of the intervals are
$\frac{1}{\sigma_r(r+1)},\ \frac{1}{\sigma_r(r+2)},\ \cdots,\ \frac{1}{\sigma_r(2r+1)}$,
and $q_{i}=\frac{1}{\sigma_r}\sum_{j=r+1}^{r+i} \frac{1}{j}$ for each $i\in [r]$, as illustrated by Figure \ref{P-hat},
with $q_0 = 0$ and $q_{r+1} = 1$.
Note that $\sigma_r$ differs from $\sigma_r'$ in Theorem \ref{thm:1d-lower} by $\frac{1}{2r+1}$.
Also,
$\sigma_{r}$ is strictly decreasing in $r$ and
$\lim_{r\to\infty} \sigma_{r} = \ln 2$.
Moreover,
for any two intervals $(q_{j-1}, q_j)$ and $(q_{j'-1}, q_{j'})$ with $j< j'\leq r+1$,
  we have
$|(q_{j'-1}, q_{j'})|< |(q_{j-1}, q_j)| < 2|(q_{j'-1}, q_{j'})|$.

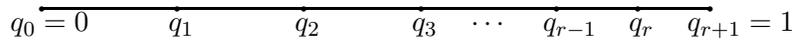
\begin{figure}[htbp]
\begin{center}
\setlength{\unitlength}{0.6cm}
\thicklines
\begin{picture}(15.8,1)

\put(0.5, 0.5){\line(1,0){14.8}}

\put(0.5, 0.5){\circle*{0.1}}
\put(-0.2, 0){$q_0 = 0$}

\put(3.5, 0.5){\circle*{0.1}}
\put(3.3, 0){$q_{1}$}

\put(6.3, 0.5){\circle*{0.1}}
\put(6.1, 0){$q_{2}$}

\put(8.9, 0.5){\circle*{0.1}}
\put(8.7, 0){$q_{3}$}

\put(10, 0){$\cdots$}

\put(11.9, 0.5){\circle*{0.1}}
\put(11.6, 0){$q_{r-1}$}

\put(13.7, 0.5){\circle*{0.1}}
\put(13.5, 0){$q_{r}$}

\put(15.3, 0.5){\circle*{0.1}}
\put(14.8, 0){$q_{r+1} = 1$}

\end{picture}
\caption{The pre-fixed positions in $Q$ for dimension 1.}

\label{P-hat}
\end{center}
\end{figure}
Our algorithm also takes as parameter an ordering for the positions in $Q$, denoted by $\tau = (\tau_1, \tau_2,\dots, \tau_r)$.
It is defined in Algorithm~\ref{alg:1d-online} and
we have the following two lemmas, proved in Appendix \ref{appendix:Lemma 2} and \ref{Appendix:claims567}, respectively.
We only sketch the main ideas below.
Recall that, given $S = ((s_1, d_1),\dots, (s_n, d_n))$,
$m$ is the maximum number of points simultaneously present at any time t.

\begin{algorithm}
  \caption{{\hspace{-3pt}{\bf .}} A polynomial-time algorithm for the 1-dimensional online ATWC problem.}
  \label{alg:1d-online}
  \begin{algorithmic}[1]
  \ENSURE A positive integer $r$, the corresponding set $Q=\{q_1,\dots, q_r\}$, and an ordering $\tau$ for $Q$.
  \REQUIRE  A sequence of points arriving and departing along time.
  \STATE Denote by $\hat{Q}$ the set of positions ever occupied by a point. At any point of time, a position in $\hat{Q}$ is labeled {\em occupied} if currently there is a point there and {\em vacant} otherwise. Initially $\hat{Q}=\emptyset$.
  \STATE When a point $i$ leaves, change the label of its position in $\hat{Q}$ from {\em occupied} to {\em vacant}.
  \STATE When a point $i$ arrives:
   \IF{$\hat{Q}=\emptyset$ or all positions in $\hat{Q}$ are labelled {\em occupied}}
    \IF{$Q\not\subseteq \hat{Q}$}
      \STATE\label{step:alg1-6} Choose the first position $q$ according to $\tau$ with $q\in Q\setminus \hat{Q}$,
       add it to $\hat{Q}$ and
       label it {\it occupied}.
      \STATE Put $i$ at position $q$.
    \ELSE
      \STATE Find position $q$ which is the middle of the largest interval created by the positions in $\hat{Q}$.
      \STATE Put $i$ at position $q$, add $q$ to $\hat{Q}$ and label it {\it occupied}.
    \ENDIF
  \ELSE
   \STATE Arbitrarily choose a vacant position $q$ from $\hat{Q}$ and label it {\it occupied}.
   \STATE Put $i$ at position $q$.
  \ENDIF
  \end{algorithmic}
\end{algorithm}

\begin{lemma}\label{lem:1d-online}
Given any $r$ and $\tau$,
Algorithm \ref{alg:1d-online} is $2\sigma_r$-competitive to $OPT_A$
for any $S$ with $m> r$.
\end{lemma}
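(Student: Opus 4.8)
The plan is to collapse the whole statement onto a single static quantity — the length of the shortest interval in the partition of $[0,1]$ cut out by the full set of positions the algorithm ever uses — and then to track that quantity through the bisection dynamics. First I would record that $OPT_A(S;P) = Disp(m;P) = \frac{1}{m+1}$: the first equality is Claim~\ref{clm:atwc=disp}, and the second is the elementary fact that the optimal dispersion of $m$ points on the unit segment spreads them evenly. So it suffices to show the algorithm keeps $d_{min}(t) \geq \frac{1}{2\sigma_r(m+1)}$ at all times. The engine of this reduction is a monotonicity observation: for finite sets $A \subseteq B \subseteq (0,1)$, the minimum distance of $A$ (to its neighbors and to $\{0,1\}$) is at least that of $B$, because the minimum distance of a set equals the length of the shortest subinterval it induces on $[0,1]$, and deleting a point only merges two adjacent intervals into a longer one. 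Since the set $\hat Q$ of ever-occupied positions grows monotonically and attains final size exactly $m$ (a new position is created only when the number of present points reaches a new peak), and since at every time $t$ the present points form a subset of the current $\hat Q$, which is itself contained in the final size-$m$ configuration $Q_m := \{p_1,\dots,p_m\}$, we get $d_{min}(t) \geq \delta_m$, where $\delta_m$ is the shortest interval induced by $Q_m$. Crucially this lower bound is independent of the ordering $\tau$ and of the adversarial schedule, so it remains only to prove $\delta_m \geq \frac{1}{2\sigma_r(m+1)}$.

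Next I would analyze the bisection process producing $Q_m$. The first $r$ placements fill $Q$, cutting $[0,1]$ into $r+1$ intervals of lengths $\ell_i := \frac{1}{\sigma_r(r+i)}$ for $1 \le i \le r+1$, which are strictly decreasing and obey the factor-$2$ property $\ell_1 < 2\ell_{r+1}$ stated before the algorithm; each later placement bisects the current longest interval. I claim this yields a clean level-by-level schedule: calling level $t$ the configuration in which each original interval $\ell_i$ has been split into $2^t$ equal pieces of length $\ell_i/2^t$, the process advances from level $t$ to level $t+1$ by bisecting, in order of decreasing length, all $2^t$ copies of $\ell_1/2^t$, then all copies of $\ell_2/2^t$, and so on. The factor-$2$ property drives this: since $\ell_1 < 2\ell_{r+1}$, every freshly created half $\ell_j/2^{t+1}$ is shorter than every not-yet-bisected piece $\ell_i/2^t$, so the greedy rule never revisits a level-$(t+1)$ piece until level $t$ is exhausted, and the same inequality shows the factor-$2$ property is inherited at level $t+1$.

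Finally I would do the bookkeeping. A configuration in this schedule is described by a triple $(t,j,b)$ with $1 \le j \le r+1$ and $1 \le b \le 2^t$: we are at level $t$, bisecting copies of $\ell_j$, having finished $b$ of the $2^t$ copies. Counting intervals gives $m+1 = 2^t(r+j)+b$, while the shortest interval present is the most recently created half, $\delta_m = \ell_j/2^{t+1} = \frac{1}{\sigma_r(r+j)2^{t+1}}$. A one-line computation then gives
\[
(m+1)\,\delta_m \;=\; \frac{2^t(r+j)+b}{\sigma_r(r+j)\,2^{t+1}} \;=\; \frac{1}{2\sigma_r}\Bigl(1+\tfrac{b}{2^t(r+j)}\Bigr)\;\ge\;\frac{1}{2\sigma_r},
\]
so $\delta_m \ge \frac{1}{2\sigma_r(m+1)}$ and the ratio is $\tfrac{OPT_A}{\delta_m}\le 2\sigma_r$. (Configurations with $b=0$ coincide with the $(t,j-1,2^t)$ case, and the hypothesis $m>r$ guarantees at least one bisection, so the formula applies.)

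The main obstacle is making the level-by-level schedule rigorous, i.e.\ proving that the factor-$2$ property is a genuine invariant of the ``bisect the longest interval'' dynamics and that it forces the pieces to be processed in non-increasing order of original length; everything downstream (the counting identity and the final estimate) is then routine. The only other point needing care is the monotonicity lemma, since it is precisely what lets us discard both the ordering $\tau$ and the adversarial arrival/departure pattern and reduce the entire claim to the single static configuration $Q_m$.
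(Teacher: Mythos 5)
Your proposal is correct and follows essentially the same route as the paper's proof: reduce to the deterministic sequence of created positions, show the ``bisect the longest interval'' dynamics proceed level-by-level thanks to the factor-$2$ property of the pre-fixed intervals, and then verify $(m+1)\,d_{min}\geq \frac{1}{2\sigma_r}$ by the same counting identity (your $(t,j,b)$ parameterization matches the paper's round-$l$/interval-$i$ formula exactly). Your monotonicity lemma is just a cleaner formalization of the paper's observation that the worst case reduces to the instance where points arrive one by one and never depart.
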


\begin{proof}[Proof ideas]
Since Algorithm \ref{alg:1d-online} only creates a new position when the number of points
simultaneously present on the line increases,
for any time $t$, the number of positions created is exactly the maximum number of points that has appeared simultaneously on the line.
Thus only $m$ positions is created for instance $S$.
We prove that when $m > r$, the minimum distance
produced by our algorithm, denoted by
$d_{min}(m)$,
is $\frac{1}{2^{l+1}\sigma_r (r+i)}$,
where $l,i$ are the unique integers such that $l \geq 0, 0 \leq i \leq r+1$ and $2^{l}(r+1) + 2^{l}(i-1)\leq m < 2^{l}(r+1)+2^{l}i.$
Note that the minimum distance only depends on $m$.
By comparing $OPT_A$ with $d_{min}(m)$, we show that the competitive ratio $2\sigma_r$ holds for all $m > r$.
\end{proof}

%

%

\begin{lemma}\label{lem:1-dclose}
For any integer $l>0$ and $r = 2^l-1$, there exists an
ordering $\tau$ for the corresponding set $Q$,
s.t.
Algorithm \ref{alg:1d-online}
is $2\sigma_r$-competitive to $OPT_A$
for any
$S$
 with $m\leq r$.
\end{lemma}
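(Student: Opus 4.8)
The plan is to reduce the claim to a purely combinatorial statement about where the $q_i$'s get occupied, and then verify that statement for one explicit ordering $\tau$. First I would observe that when $m\le r$ the algorithm never reaches the midpoint‑insertion branch: a position is appended to $\hat Q$ only when the number of simultaneously present points hits a new maximum, so $|\hat Q|\le m\le r$ always, and the condition $Q\subseteq\hat Q$ (which triggers midpoints) is never met. Hence every occupied position lies in $Q$, the positions are appended strictly in $\tau$‑order, and at the first moment the peak $m$ is reached the occupied set is exactly the prefix $P_m:=\{q_{\tau_1},\dots,q_{\tau_m}\}$. For any configuration $A=\{a_1<\dots<a_j\}\subseteq Q$ one checks that $d_{min}(A)=\min\{a_1,\,a_2-a_1,\dots,a_j-a_{j-1},\,1-a_j\}$ (interior boundary distances $\min(a_i,1-a_i)$ are dominated by $a_1$ or $1-a_j$), i.e.\ $d_{min}(A)$ is the smallest gap of the augmented sequence $0,a_1,\dots,a_j,1$. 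This quantity is monotone under deletion, so $d_{min}(A')\ge d_{min}(A)$ whenever $A'\subseteq A$. Since every occupied set is a subset of $P_m$ and $P_m$ itself occurs, $\min_t d_{min}(t)=d_{min}(P_m)$. As $OPT_A(S;P)=Disp(m;P)=\frac1{m+1}$ (Claim~\ref{clm:atwc=disp} and even spacing on $[0,1]$), it suffices to exhibit $\tau$ with $d_{min}(P_m)\ge\frac1{2\sigma_r(m+1)}$ for every $1\le m\le r$.

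Next I would fix the ordering. Because $r+1=2^l$, the positions $Q$ cut $[0,1]$ into atomic intervals $I_1,\dots,I_{2^l}$ with $|I_j|=\frac1{\sigma_r(r+j)}$, decreasing in $j$ and pairwise within a factor $2$. I take $\tau$ to be the balanced binary subdivision: reveal $q_i$ in order of decreasing $2$‑adic valuation of $i$, ties broken left to right. Concretely this reveals the global midpoint $q_{2^{l-1}}$, then $q_{2^{l-2}},q_{3\cdot2^{l-2}}$, then the four positions $q_{2^{l-3}},q_{3\cdot2^{l-3}},q_{5\cdot2^{l-3}},q_{7\cdot2^{l-3}}$, and so on, with each ``level'' $h$ contributing $2^h$ positions scanned left to right. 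I would then record the resulting block structure: writing $m+1=2^h+s$ with $0\le s<2^h$, after $m$ reveals the leftmost $s$ depth‑$h$ blocks have each been halved into two half‑blocks of $2^{l-h-1}$ intervals (together occupying $I_1,\dots,I_{s2^{l-h}}$), while the remaining $2^h-s$ depth‑$h$ blocks of $2^{l-h}$ intervals stay intact (occupying $I_{s2^{l-h}+1},\dots,I_{2^l}$), for a total of $m+1$ groups.

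For the bound I would estimate each group crudely. A group of $w$ consecutive atomic intervals ending at $I_b$ has length $\frac1{\sigma_r}\sum_{j=b-w+1}^{b}\frac1{r+j}\ge\frac{w}{\sigma_r(r+b)}$, so it is enough to verify $2(m+1)\,w\ge r+b$ for every group. For an intact block, $w=2^{l-h}$ and $b\le 2^l$, giving $r+b\le2^{l+1}-1$ while $2(m+1)w\ge2\cdot2^{h}\cdot2^{l-h}=2^{l+1}$. For a half‑block, $w=2^{l-h-1}$ and, crucially by the left‑to‑right rule, $b\le s2^{l-h}$, giving $r+b\le2^l-1+s2^{l-h}$ while $2(m+1)w=2^l+s2^{l-h}$. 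Both inequalities hold precisely because $r=2^l-1$, which is exactly where the hypothesis $r=2^l-1$ enters; together they yield $d_{min}(P_m)\ge\frac1{2\sigma_r(m+1)}$ for all $m\le r$, and hence the $2\sigma_r$‑competitiveness.

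The hard part will be the choice of $\tau$ rather than any single estimate. The per‑interval bound $\frac1{r+j}\ge\frac1{r+b}$ is wasteful, and it only survives because the balanced binary order confines every half‑block—which carries only half the interval count—to the left end, where the intervals are largest; splitting blocks in almost any other order would push a small half‑block to the far right (large $b$) and violate $2(m+1)w\ge r+b$. So the genuine work is twofold: (a) justifying the reduction, i.e.\ that the algorithm realizes the prefix $P_m$ and that $P_m$ is the worst configuration over time via the deletion‑monotonicity of $d_{min}$; and (b) pinning down the block sizes and right‑endpoints produced by a partially completed level precisely enough to apply the counting inequality uniformly over all $m$ and all groups. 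I expect (b)—the bookkeeping across an incomplete level—to be the most error‑prone step, while the underlying inequalities are tight only up to the additive slack supplied by $r=2^l-1$.
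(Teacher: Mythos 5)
Your proposal is correct, and it is worth comparing to the paper's proof because, while the skeleton coincides, the quantitative heart of the argument is genuinely different. Like the paper, you reduce to the canonical instance (occupied sets are prefixes of $\tau$, and $d_{min}$ is monotone under deletion, so only the configuration $P_m$ matters), and your ordering --- decreasing $2$-adic valuation, ties left to right --- is exactly the paper's breadth-first binary ordering $\tau_d=q_{2^{l-i-1}(2s+1)}$. The difference is in how the ratio is verified for this ordering. The paper derives the \emph{exact} minimum distance at each time $d$ (Claim~\ref{clm:1-d:formula}, which needs the factor-$2$ property of the atomic intervals to identify \emph{which} gap is smallest) and then locates the worst time by two monotonicity arguments (Claims~\ref{clm:f(s)} and~\ref{clm:g(i)}), explicitly working around the fact that the ratio is not monotone in $d$ across level boundaries; this yields the sharp constant $(2-\frac{1}{2^l})\sigma_r$. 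You avoid all of that: you bound \emph{every} gap uniformly from below via $\frac{1}{\sigma_r}\sum_{j=b-w+1}^{b}\frac{1}{r+j}\geq\frac{w}{\sigma_r(r+b)}$, so the lemma collapses to the counting inequality $2(m+1)w\geq r+b$, checked once for intact blocks ($2(m+1)w\geq 2^{l+1}>r+b$) and once for half-blocks ($2(m+1)w=2^l+s2^{l-h}>2^l-1+s2^{l-h}\geq r+b$, using the left-to-right rule to force $b\leq s2^{l-h}$). Both of your inequalities are valid, and they are strict, so you even get ratio strictly below $2\sigma_r$. What your route buys is brevity and robustness: no exact gap formula, no need to know which gap is minimal, no monotonicity bookkeeping. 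What it gives up is the exact worst-case constant $(2-\frac{1}{2^l})\sigma_r$ that the paper's finer analysis produces (and which feeds the intuition for Theorem~\ref{thm:1-dexact}). One small imprecision: your statement that ``$Q\subseteq\hat{Q}$ is never met'' is not quite right when $m=r$ --- the inclusion can hold once all $r$ positions are created; what never happens under $m\leq r$ is an arrival finding all positions of $Q$ occupied, since that would force $r+1$ simultaneous points. This does not affect your conclusion that midpoint insertion never occurs.
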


\begin{proof}[Proof ideas]
Interestingly, due to the structure of Algorithm \ref{alg:1d-online},
we only need to consider the instance $S = ((1, r+1), (2, r+1), \dots, (r, r+1))$.
We construct an ordering $\tau = \{\tau_d\}_{d\in [r]}$ for $Q$ such that
the competitive ratio at any time $d\in [r]$  is smaller than $2\sigma_r$.
To do so, we fill in a complete binary tree with $r$ nodes as in Figure \ref{fig:tree},
and $\tau$ is obtained by traversing the tree in a breadth-first manner starting from the root.
Given any $d = 2^i + s$ with $i\in \{0, 1, \dots, l-1\}$ and $s\in\{0, 1, \dots, 2^i-1\}$,
we have
$\tau_d = q_{2^{l-i-1}(2s+1)}$.
Denoting the competitive ratio at time $d$ by $apx(d)$,
we prove that

$$apx(d) =\frac{\sigma_{r}}{(2^{i}+s+1)\cdot
\sum_{j=2^{l}+2^{l-i-1}(2s+1)}^{2^{l}-1+2^{l-i-1}(2s+2)}
\frac{1}{j}}.$$
Writing $apx(d)$ as $apx(i, s)$,
we prove that, fixing $i$, $apx(i, s)$ is strictly increasing in $s$;
and letting $s = 2^i - 1$, $apx(i, s)$ is strictly increasing in $i$.
Therefore the worst competitive ratio occurs at $i = l-1$ and $s = 2^{l-1} - 1$.
Since $apx(l-1, 2^{l-1} - 1) = (2- \frac{1}{2^l})\sigma_r < 2\sigma_r$, Lemma \ref{lem:1-dclose} holds.
\end{proof}

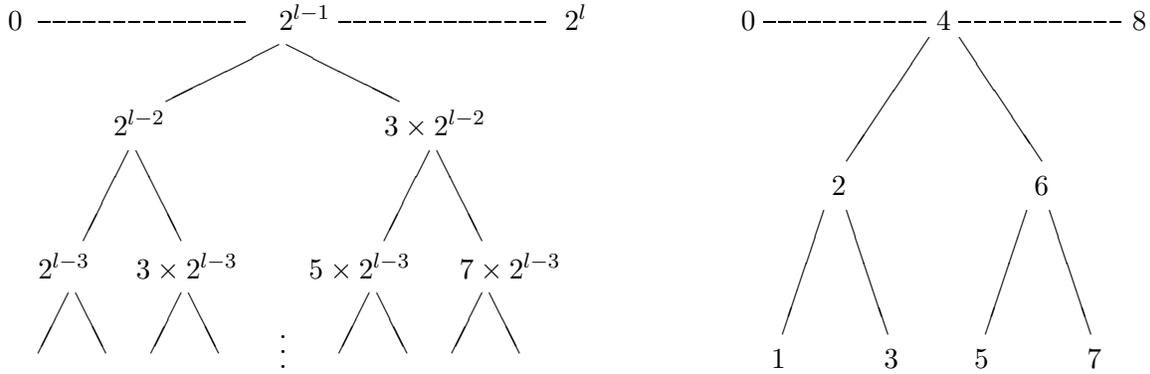
\begin{figure}[htbp]
\begin{flushleft}
\hspace{-5pt}
\begin{minipage}[t]{0.5\textwidth}
\centering
\setlength{\unitlength}{1cm}
\thinlines
\begin{picture}(8,4.5)

\put(0.4,4.5){$0$}
\put(7.8,4.5){$2^l$}
\put(4,4.5){$2^{l-1}$}

\multiput(0.8,4.6)(0.2,0){14}{\line(1,0){0.15}}
\multiput(4.8,4.6)(0.2,0){14}{\line(1,0){0.15}}

\put(4,4.3){\line(-2,-1){1.5}}
\put(4.1,4.3){\line(2,-1){1.5}}
\put(1.8,3.1){$2^{l-2}$}
\put(5.4,3.1){$3\times2^{l-2}$}

\put(2,2.9){\line(-1,-2){0.6}}
\put(2.1,2.9){\line(1,-2){0.6}}
\put(0.8,1.2){$2^{l-3}$}
\put(2.1,1.2){$3\times2^{l-3}$}

\put(6,2.9){\line(-1,-2){0.6}}
\put(6.1,2.9){\line(1,-2){0.6}}
\put(4.4,1.2){$5\times2^{l-3}$}
\put(6.4,1.2){$7\times2^{l-3}$}

\put(1.2,1){\line(-1,-2){0.4}}
\put(1.3,1){\line(1,-2){0.4}}
\put(2.7,1){\line(-1,-2){0.4}}
\put(2.8,1){\line(1,-2){0.4}}
\put(5.2,1){\line(-1,-2){0.4}}
\put(5.3,1){\line(1,-2){0.4}}
\put(6.7,1){\line(-1,-2){0.4}}
\put(6.8,1){\line(1,-2){0.4}}

\multiput(4,0)(0,.2){3}{.}

\end{picture}
\end{minipage}
\hspace{33pt}
\begin{minipage}[t]{0.35\textwidth}
\centering
\setlength{\unitlength}{1cm}
\thinlines
\begin{picture}(6,4.5)

\put(0.6,4.5){$0$}
\put(5.8,4.5){$8$}
\put(3.2,4.5){$4$}

\multiput(0.9,4.6)(0.2,0){11}{\line(1,0){0.15}}
\multiput(3.5,4.6)(0.2,0){11}{\line(1,0){0.15}}

\put(3.1,4.4){\line(-2,-3){1.1}}
\put(3.5,4.4){\line(2,-3){1.1}}
\put(1.8,2.3){$2$}
\put(4.5,2.3){$6$}

\put(1.7,2.1){\line(-1,-3){.55}}
\put(2,2.1){\line(1,-3){.55}}
\put(4.4,2.1){\line(-1,-3){.55}}
\put(4.7,2.1){\line(1,-3){.55}}

\put(1,0){$1$}
\put(2.5,0){$3$}

\put(3.7,0){$5$}
\put(5.2,0){$7$}

\end{picture}
\end{minipage}
\end{flushleft}

\caption{The left-hand side shows the top three levels of
the binary tree for a general $l$, with $\tau = (q_{2^{l-1}}, q_{2^{l-2}}, q_{3\times 2^{l-2}}, q_{2^{l-3}},
q_{3\times 2^{l-3}}, q_{5\times 2^{l-3}}, q_{7\times 2^{l-3}}, \dots)$.
The right-hand side shows the complete binary tree for
$l=3$, with $\tau=(q_{4},q_{2},q_{6},q_{1},q_{3},q_{5},q_{7})$.}
\label{fig:tree}
\end{figure}


The theorem below follows easily from the above two lemmas; see
Appendix \ref{Appendix:Theorem2}.

\begin{theorem}\label{thm:1-dtight}
There exists a deterministic polynomial-time online algorithm
for the ATWC problem, whose
competitive ratio can be arbitrarily close to
$2\ln 2$. Moreover, the running time is polynomial in $\frac{1}{\epsilon}$ for competitive ratio
$2\ln 2 + \epsilon$.
\end{theorem}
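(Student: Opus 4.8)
The plan is to combine Lemmas~\ref{lem:1d-online} and~\ref{lem:1-dclose} into a single algorithm that is $(2\ln 2+\epsilon)$-competitive for \emph{all} instances, not just those restricted by the relation between $m$ and $r$. The key observation is that Lemma~\ref{lem:1d-online} handles the regime $m>r$ and Lemma~\ref{lem:1-dclose} handles $m\le r$, so together they cover every possible instance once we fix a single value of $r$ (of the special form $r=2^l-1$ required by Lemma~\ref{lem:1-dclose}) and the corresponding breadth-first ordering $\tau$. First I would fix $r=2^l-1$ for a suitable integer $l$, instantiate the position set $Q=\{q_1,\dots,q_r\}$ with the prescribed interval-length ratios, and choose $\tau$ as the tree-traversal ordering from Lemma~\ref{lem:1-dclose}. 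Running Algorithm~\ref{alg:1d-online} with these parameters, Lemma~\ref{lem:1-dclose} gives $2\sigma_r$-competitiveness whenever $m\le r$ and Lemma~\ref{lem:1d-online} gives $2\sigma_r$-competitiveness whenever $m>r$; since every instance $S$ falls into exactly one of these two cases, the algorithm is $2\sigma_r$-competitive on all of them.

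The next step is to translate the ratio $2\sigma_r$ into the target $2\ln2+\epsilon$. Here I would use the two facts already recorded in the text: that $\sigma_r$ is strictly decreasing in $r$ and that $\lim_{r\to\infty}\sigma_r=\ln 2$. Hence for any $\epsilon>0$ there is a threshold $r_0$ with $2\sigma_r<2\ln2+\epsilon$ for all $r\ge r_0$; choosing $l$ large enough that $r=2^l-1\ge r_0$ makes the competitive ratio at most $2\ln2+\epsilon$. Concretely, since $\sigma_r-\ln 2=\Theta(1/r)$, one needs $r=\Theta(1/\epsilon)$, so it suffices to take $l=\lceil\log_2(C/\epsilon)\rceil$ for an absolute constant $C$.

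Finally I would verify the running-time claim. The set $Q$ and the ordering $\tau$ are computed once from $r$, which costs $\mathrm{poly}(r)$ time, and each arrival/departure event is processed in time polynomial in the number of positions created, which is at most $m$ and in any case polynomial in $r$ and in the input size. Because $r=\Theta(1/\epsilon)$, every per-event computation—finding the next $\tau$-position, locating the largest gap and bisecting it, or reusing a vacant slot—runs in time polynomial in $\frac1\epsilon$ (and in $n$), giving the stated bound. The main obstacle is not any single computation but the \emph{exhaustiveness} of the case split: I must be sure that the hypotheses ``$m>r$'' and ``$m\le r$'' of the two lemmas genuinely partition all instances and that the \emph{same} fixed $(r,\tau)$ validates both lemmas simultaneously. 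This is exactly why $r$ is forced to have the form $2^l-1$—that is the shape under which Lemma~\ref{lem:1-dclose} produces a valid ordering—so the care needed is to confirm that this restricted family of $r$ values still lets $\sigma_r$ approach $\ln 2$, which it does since $\{2^l-1\}$ is unbounded. Once that is checked, the theorem follows immediately from the two lemmas and the limiting behavior of $\sigma_r$.
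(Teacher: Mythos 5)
Your proposal is correct and follows essentially the same route as the paper: fix $r=2^l-1\approx\Theta(1/\epsilon)$ with the binary ordering $\tau$, invoke Lemma~\ref{lem:1-dclose} for $m\le r$ and Lemma~\ref{lem:1d-online} for $m>r$, and use the monotonicity and limit $\lim_{r\to\infty}\sigma_r=\ln 2$ to get ratio $2\sigma_r<2\ln 2+\epsilon$, with per-event running time polynomial in $|\hat{Q}|$ and hence in $1/\epsilon$ and the input size. The paper's proof is just a terser version of this, choosing $l=\lceil\log_2(\frac{2}{\epsilon}+1)-1\rceil$ explicitly.
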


\paragraph{Remark.}
When the number of arrived points is large but
the maximum number $m$ of simultaneously present points is small,
the running time of the algorithm for each arriving point is polynomial in $m$
and can be much faster than being polynomial in the size of the input.


Following Theorem \ref{thm:1d-lower},
Algorithm \ref{alg:1d-online}
is essentially optimal.
Inspired by our constructions of $Q$ and $\tau$,
we actually characterize the optimal solution for the online ATWC problem, whose competitive ratio
is exactly $2\ln 2$: see Theorem \ref{thm:1-dexact} below, proved in Appendix \ref{Appendix:Theorem3}.
However, this solution involves irrational numbers and
cannot be exactly computed in polynomial time.

\begin{theorem}\label{thm:1-dexact}
For any integer $d=2^i+s$ with $i\geq 0$ and $0\leq s\leq 2^i-1$, let
$\tau_d=
\log_2 (1+\frac{2s+1}{2^{i+1}})$.
If Algorithm \ref{alg:1d-online}
creates
the $d$-th new position in $\hat{Q}$ to be $\tau_d$,
 the
competitive ratio is exactly $2\ln 2$.
\end{theorem}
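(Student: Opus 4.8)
The plan is to analyze the variant of Algorithm~\ref{alg:1d-online} whose $d$-th created position is set directly to $\tau_d=\log_2(1+\tfrac{2s+1}{2^{i+1}})$, and to reduce the competitive ratio over all inputs to a single sequence of numbers $apx(d)$. First I would record two structural facts that hold for every instance $S$. By Claim~\ref{clm:atwc=disp} the offline optimum is $OPT_A(S;[0,1])=\frac{1}{m+1}$, where $m$ is the maximum number of simultaneously present points; and since Algorithm~\ref{alg:1d-online} creates a new position exactly when the count reaches a new maximum, the number of created positions equals $m$. Moreover, when a point departs its two neighbouring intervals merge into one longer interval, so the minimum gap among \emph{any} subset of the created positions (together with the endpoints $0,1$) is at least the minimum gap when all $m$ positions are occupied. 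Hence for every instance $\min_t d_{min}(t)\ge d_{min}(m)$ while $OPT_A=\frac{1}{m+1}$, so the ratio is at most $apx(m):=\frac{1/(m+1)}{d_{min}(m)}$, and the monotone insert-only instance attains it. It therefore suffices to evaluate $d_{min}(d)$ for the configuration $\{\tau_1,\dots,\tau_d\}$ and to compute $\sup_d apx(d)$.

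The key idea is a change of coordinate. Writing $y=2^{x}-1$, i.e. $x=\log_2(1+y)$, the position $\tau_d$ becomes $y=\frac{2s+1}{2^{i+1}}$ for $d=2^i+s$. Read in $y$-coordinates, ``level'' $i$ (the indices $d=2^i,\dots,2^{i+1}-1$) inserts, from left to right, the midpoints $\frac{1}{2^{i+1}},\frac{3}{2^{i+1}},\dots$ of the current intervals; that is, the construction is exactly repeated bisection. Thus at time $d=2^i+s$ the $y$-axis is partitioned into $2(s+1)$ \emph{short} intervals of length $2^{-(i+1)}$ filling $[0,(s+1)2^{-i}]$, and $2^i-(s+1)$ \emph{long} intervals of length $2^{-i}$ filling $[(s+1)2^{-i},1]$, for a total of $d+1$ intervals, as it must be.

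Next I would pull these intervals back to the segment. A $y$-interval $[a,b]$ has $x$-length $\log_2\frac{1+b}{1+a}=\log_2\bigl(1+\frac{b-a}{1+a}\bigr)$, which for fixed length is decreasing in $a$; since $\log_2(1+y)$ is concave, the intervals nearest $y=1$ are compressed the most. Comparing the rightmost short interval, whose left endpoint is $\frac{2s+1}{2^{i+1}}$, with the rightmost long interval shows the former is strictly shorter (the inequality reduces to $2^{i+1}+2s+1>2^{i+1}-1$), and monotonicity in $a$ makes it shorter than every other interval of its own type. Hence the minimum interval is the rightmost short one, and using $2^{i+1}+2s+1=2d+1$ a short computation gives $d_{min}(d)=\log_2\frac{2d+2}{2d+1}$. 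Consequently $apx(d)=\frac{1/(d+1)}{\log_2\frac{2d+2}{2d+1}}=\frac{\ln 2}{(d+1)\ln\frac{2d+2}{2d+1}}$. The elementary bound $\ln(1+u)>\frac{u}{1+u}$ with $u=\frac{1}{2d+1}$ yields $(d+1)\ln\frac{2d+2}{2d+1}>\frac12$, so $apx(d)<2\ln 2$ for every $d$, while $(d+1)\ln\frac{2d+2}{2d+1}\to\frac12$ forces $apx(d)\to 2\ln 2$; thus $\sup_d apx(d)=2\ln 2$, matching the lower bound of Theorem~\ref{thm:1d-lower}.

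The step I expect to be the main obstacle is the identification of the minimum interval in the third paragraph: one must verify \emph{simultaneously} that the rightmost short interval beats every other short interval and every long interval, treating the boundary case $s=2^i-1$ (no long intervals) separately, and then confirm that $d_{min}(d)$ is decreasing in $d$ so that the worst time in the monotone instance is the last one. By contrast, the reduction in the first paragraph---that departures only enlarge gaps and that the number of created positions equals $m$---is conceptually easy, but it is exactly what lets the entire argument rest on the single sequence $apx(d)$.
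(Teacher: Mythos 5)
Your proof is correct, and its core quantitative content coincides with the paper's: both reduce to insert-only instances, both arrive at $d_{min}(d)=\log_2\frac{2d+2}{2d+1}$ (the paper writes this as $\frac{1}{\ln 2}\ln\bigl(1+\frac{1}{2^{i+1}+2s+1}\bigr)$, which is the same since $2^{i+1}+2s+1=2d+1$), and both conclude via the bound $\ln(1+\tfrac1x)>\tfrac{1}{x+1}$ together with a limiting sequence achieving $2\ln 2$. Where you genuinely diverge is in \emph{how} the minimum gap is identified. The paper reads the configuration through the breadth-first order on the binary tree, locates the two neighbors $d_{left},d_{right}$ of the newly inserted point $\tau_d$, and compares only those two gaps---implicitly leaning on the structural analysis of Claim \ref{clm:1-d:formula} to justify that the newest point's gap is the global minimum. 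Your change of variable $y=2^x-1$, under which the sequence becomes exact repeated bisection of $[0,1]$, lets you describe the \emph{entire} configuration at time $d$ (the $2(s+1)$ short and $2^i-(s+1)$ long $y$-intervals) and then identify the global minimum gap by the monotonicity of $\log_2\bigl(1+\frac{L}{1+a}\bigr)$ in $a$; this makes the step the paper leaves terse fully explicit and self-contained, and it also explains conceptually why these particular irrational positions work. Two small remarks: your formula shows $d_{min}(d)$ depends on $d$ alone, which is why you can take the supremum along all $d$ rather than along $d=2^i$ as the paper does; and the monotonicity of $d_{min}(d)$ you flag as a needed verification is immediate from that formula---in fact it is not even strictly necessary, since for $d\le m$ one has $\frac{1/(m+1)}{d_{min}(d)}\le apx(d)$, so the ratio of any insert-only instance is bounded by $\sup_d apx(d)$ regardless.
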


\section{The 2-Dimensional Online All-Time Worst-Case Problem}
\label{sec:2D:ATWC}

We now
consider
 the 2-dimensional online ATWC problem in a square---without loss of generality, $P=[0, 1]^2$.
One difficulty is that,
different from the 1-dimensional problem where it is trivial to have
$Disp(n; P)= \frac{1}{n+1}$ for any $n\geq 1$,
here neither $Disp(n; P)$ nor $Pack(r; P)$ has
a known closed-form optimal
solution (whether polynomial-time computable or not). Accordingly,
 our lower-bound and our competitive
algorithm must rely on some proper upper- and lower-bounds for $Disp(n; P)$,
which is part of the reason why the resulting bounds are not tight.
In particular, we have the following lemma,
proved in Appendix \ref{appendix:clm:boundsdisp}.

\begin{lemma}\label{clm:boundsdisp}
For any $n\geq 1$, $\frac{2}{5+\sqrt{2\sqrt{3}n}} \leq Disp(n; P) \leq \frac{2}{2+\sqrt{2\sqrt{3}n}}$.
\end{lemma}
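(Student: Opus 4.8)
The plan is to prove the two bounds in Lemma~\ref{clm:boundsdisp} by passing through the dispersal-packing quantity $DP(n;P)$ and the area/density arguments for packing disks in a square, then converting back to $Disp(n;P)$ via the closed-form relation in Claim~\ref{clm:Disp-DP}. Since $P=[0,1]^2$ is a square, its insphere has radius $x=\frac12$, so Claim~\ref{clm:Disp-DP} gives
\begin{equation*}
Disp(n;P)=\frac{2x\,DP(n;P)}{x+DP(n;P)}=\frac{DP(n;P)}{\frac12+DP(n;P)}.
\end{equation*}
This function is strictly increasing in $DP(n;P)$, so it suffices to prove matching upper and lower bounds on $DP(n;P)$ and then substitute. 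Concretely, if I can show
\begin{equation*}
\frac{1}{\sqrt{2\sqrt3\,n}}\le DP(n;P)\le\frac{1}{\sqrt{2\sqrt3\,n}-2}+(\text{lower-order slack}),
\end{equation*}
the stated rational bounds on $Disp$ will follow by plugging into the displayed formula and simplifying.

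First I would establish the upper bound on $DP(n;P)$ via a packing-density argument. If $n$ disks of radius $r=DP(n;P)$ are packed in the unit square, their centers lie in the shrunken square $[r,1-r]^2$ and the disks are pairwise non-overlapping. By the optimal planar packing density of $\frac{\pi}{2\sqrt3}$ (Fejes~Tóth, cited in the paper as \cite{fejes1942dichteste}), one cannot fit more than roughly $\frac{\pi}{2\sqrt3}\cdot\frac{(1-2r)^2}{\pi r^2}$ disks; bounding this below by $n$ and solving the resulting quadratic inequality for $r$ yields an upper bound of the form $r\le \frac{1}{2+\sqrt{2\sqrt3 n}}$-compatible expression. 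The key is to keep track of the boundary correction $(1-2r)$ rather than using the naive area $1$, since this is exactly what produces the additive constant in the denominator. For the lower bound on $DP(n;P)$, I would exhibit an explicit grid (square or hexagonal) packing: place disks of radius $r$ on a lattice inside $[0,1]^2$ and count that at least $n$ of them fit when $r$ is as large as $\frac{1}{\sqrt{2\sqrt3 n}}$, again accounting for the boundary loss which now enlarges the denominator to give the weaker constant $5$ rather than $2$.

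After obtaining the two-sided bound on $DP(n;P)$, the final step is the routine algebraic substitution into $Disp(n;P)=\frac{DP}{\frac12+DP}$, using monotonicity to transfer the $DP$-bounds to $Disp$-bounds and simplifying the rational expressions to the exact forms $\frac{2}{5+\sqrt{2\sqrt3 n}}$ and $\frac{2}{2+\sqrt{2\sqrt3 n}}$. I expect the main obstacle to be the upper bound: invoking the planar packing density constant requires care because Fejes~Tóth's theorem is an asymptotic/infinite-plane statement, and applying it to a finite bounded region needs a clean boundary-correction argument (equivalently, embedding the finite packing into a density computation on the slightly inflated region $[-r,1+r]^2$ and comparing areas). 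Getting the boundary term to come out as the exact constant $2$ in the denominator, rather than some messier expression, is the delicate part; the lower bound via an explicit lattice construction is comparatively straightforward, at the cost of the looser constant $5$.
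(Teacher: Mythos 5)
Your overall route is the same as the paper's: pass to $DP(n;P)$ via Claim~\ref{clm:Disp-DP} with insphere radius $x=\frac12$, bound $DP$ from above by the infinite-plane packing density and from below by an explicit packing, then convert back using monotonicity of $Disp=\frac{2\,DP}{1+2\,DP}$. However, the quantitative plan as written has genuine errors. First, your displayed sandwich on $DP$ is wrong on both sides: the claimed lower bound $DP(n;P)\geq \frac{1}{\sqrt{2\sqrt3\,n}}$ is false (already for $n=1$ it asserts a disk of radius $\approx 0.537>\frac12$ fits in the unit square), and the claimed upper bound $DP(n;P)\leq \frac{1}{\sqrt{2\sqrt3\,n}-2}$ is too weak, since after conversion it yields only $Disp(n;P)\leq \frac{2}{\sqrt{2\sqrt3\,n}}$, not $\frac{2}{2+\sqrt{2\sqrt3\,n}}$. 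The bounds you actually need, and the ones the paper proves, are $\frac{1}{3+\sqrt{2\sqrt3\,n}}\leq DP(n;P)\leq \frac{1}{\sqrt{2\sqrt3\,n}}$; both target constants in the lemma then come out exactly, because the conversion itself contributes $+2$ to each denominator ($0+2=2$ for the upper bound, $3+2=5$ for the lower).

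This exposes the second, more conceptual gap: your belief that a boundary correction $(1-2r)^2$ "is exactly what produces the additive constant" in the upper bound. It is not, and the correction you propose is not valid as stated: the $n$ disks are contained in the unit square but \emph{not} in the shrunken square $[r,1-r]^2$ (only their centers are), so the Fejes~T\'oth density constant cannot be applied to the shrunken square's area. The clean argument, which is the paper's, needs no boundary term at all: since the unit square tiles the plane, replicating the packing periodically shows $n\pi\, DP^2 \leq \frac{\pi}{2\sqrt3}\cdot 1$, i.e.\ $DP(n;P)\leq \frac{1}{\sqrt{2\sqrt3\,n}}$, and substituting into $Disp=\frac{2\,DP}{1+2\,DP}$ gives $\frac{2}{2+\sqrt{2\sqrt3\,n}}$ exactly. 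The "delicate part" you worry about is thus a non-issue once you drop the shrunken-square refinement. Finally, a smaller point on the lower bound: "square or hexagonal" is not a free choice. A square lattice gives $r\approx\frac{1}{2\sqrt n}$, which is asymptotically smaller than the required $\frac{1}{3+\sqrt{2\sqrt3\,n}}\approx\frac{1}{1.87\sqrt n}$, so it fails for large $n$; you must use the hexagonal packing (with row spacing $\sqrt3\,r$), carefully counting $\bigl\lceil\sqrt{\sqrt3 n/2}\bigr\rceil$ disks per row and $\bigl\lceil\sqrt{2n/\sqrt3}\bigr\rceil$ rows, to reach $DP(n;P)\geq\frac{1}{3+\sqrt{2\sqrt3\,n}}$.
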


\subsection{The Lower Bound}

Interestingly, not only the dispersion problem
is closely related to {\em uniform} packing (i.e., the disks all have the same radius)
as we have seen in Section \ref{sec:preliminary},
but we also obtain a lower bound for the online ATWC problem
by carefully fitting a {\em non-uniform} packing
into the square.
The idea is to
imagine each position created
in an online algorithm
as a disk centered at that position.
The radius of each disk
is a function of the algorithm's competitive ratio
and
the optimal solutions to specific dispersion problems without time.
Note that the area covered by the disks is upper-bounded by the area of the square containing them.
Combining these relations together
gives us the following theorem, proved in Appendix \ref{appendix:thm:2d:lower}.

\begin{theorem} \label{thm:2d-lower}
No online algorithm achieves a competitive ratio better than 1.183 for the 2-dimensional ATWC problem in a square.
\end{theorem}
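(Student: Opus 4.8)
The plan is to convert any online algorithm $\cA$ that is $\alpha$-competitive with $\alpha<1.183$ into a non-uniform disk packing of the unit square $P=[0,1]^2$ whose total area strictly exceeds $1$, which is the desired contradiction. First I would feed $\cA$ an insert-only sequence $X_1,X_2,\dots$ (no departures), so that at every prefix all points are simultaneously present. Under insertions only, $d_{min}(t)$ is non-increasing in $t$, so the all-time value of the length-$i$ prefix equals $d_{min}(i)$, while by Claim~\ref{clm:atwc=disp} its optimum is $Disp(i;P)$. Since $\cA$ does not know when the adversary stops, it must remain $\alpha$-competitive after every arrival, giving $d_{min}(i)\ge b_i:=\frac{Disp(i;P)}{\alpha}$ for all $i$. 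Reading off the two kinds of constraints hidden in $d_{min}(i)\ge b_i$, I obtain $dis(X_i,X_j)\ge b_{\max(i,j)}$ for every pair $i\neq j$ (the binding step is when the later point arrives) and $dis(X_i,\partial P)\ge b_i$ for every $i$. Note $b_i$ is decreasing in $i$.

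Next I set up the packing. Around each $X_i$ I place a disk of radius $\rho_i\ge 0$, to be chosen. Two sufficient conditions make the $\rho_i>0$ disks form a valid packing of $P$: \emph{containment}, $\rho_i\le b_i$ (then the disk lies in $P$, since $X_i$ is at distance $\ge b_i$ from $\partial P$); and \emph{disjointness}, $\rho_i+\rho_j\le b_{\max(i,j)}$ for all $i<j$ with $\rho_i,\rho_j>0$. Whenever both hold, the positive-radius disks are pairwise disjoint and contained in $P$, hence $\sum_i \pi\rho_i^2\le 1$. The whole difficulty is to choose the $\rho_i$ so that this area is forced above $1$; a \emph{uniform} choice $\rho_i\equiv b_n/2$ only reproduces the circle-packing density and yields the trivial bound $\alpha\gtrsim 0.95$, so the radii must genuinely vary.

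The key idea is a \emph{flat head plus decaying tail} profile. Fix a small parameter $\rho_0>0$ and set $\rho_i=\min\{\rho_0,\ \max\{0,\ b_i-\rho_0\}\}$; thus $\rho_i=\rho_0$ while $b_i\ge 2\rho_0$, then $\rho_i=b_i-\rho_0$ while $\rho_0\le b_i<2\rho_0$, and $\rho_i=0$ once $b_i<\rho_0$. Containment is immediate, and disjointness is checked by cases on where the larger index lands (both in the flat part: $2\rho_0\le b_j$; head-versus-tail: $\rho_0+(b_j-\rho_0)=b_j$; both in the tail: $b_i+b_j-2\rho_0\le b_j$ since $b_i<2\rho_0$), using that points with $b_i<\rho_0$ carry no disk and are simply ignored. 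Now I invoke the \emph{lower} bound of Lemma~\ref{clm:boundsdisp}, $Disp(i;P)\ge \frac{2}{5+\sqrt{2\sqrt3\,i}}$, so that $b_i\ge \frac{2}{\alpha(5+\sqrt{2\sqrt3\,i})}\sim \frac{\beta}{\sqrt i}$ with $\beta=\frac{\sqrt2}{\alpha\,3^{1/4}}$ and $\beta^2=\frac{2}{\sqrt3\,\alpha^2}$. As $\rho_0\to 0$ the index ranges blow up and the Riemann sum $\sum_i \rho_i^2$ converges to $\int_0^\infty \rho(x)^2\,dx$ for $\rho(x)=\min\{\rho_0,(\beta x^{-1/2}-\rho_0)^+\}$, which evaluates (scale-invariantly, independent of $\rho_0$) to $(2\ln 2-1)\beta^2$. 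Hence
\[
\sum_i \pi\rho_i^2 \ \longrightarrow\ \pi(2\ln 2-1)\beta^2 \ =\ \frac{2\pi(2\ln 2-1)}{\sqrt3\,\alpha^2}\ \approx\ \frac{1.401}{\alpha^2}.
\]
If $\alpha<\sqrt{\tfrac{2\pi(2\ln2-1)}{\sqrt3}}\approx 1.1838$ this limit exceeds the area bound $1$, contradicting $\sum_i\pi\rho_i^2\le 1$; taking a large enough finite $n$ gives the contradiction for each such $\alpha$, proving the $1.183$ lower bound.

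The main obstacle is twofold. Conceptually, it is recognizing that only a non-uniform profile with a constant-radius head and a $b_i$-tracking tail extracts the online penalty, since the weak pairwise guarantee $dis(X_i,X_j)\ge b_{\max(i,j)}$ forbids simultaneously large early disks. Technically, the work lies in making the asymptotics rigorous for finite $n$: I must control the discrete-to-integral approximation error in $\sum_i\rho_i^2$ and show the lower-order ``$+5$'' term in the $Disp$ lower bound is negligible, both of which are handled by sending $\rho_0\to 0$ (equivalently $n\to\infty$) so that the dominant scale-free integral $(2\ln2-1)\beta^2$ governs the bound.
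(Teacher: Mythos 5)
Your proposal is correct and is essentially the paper's own proof: the paper also converts the online guarantees into a non-uniform disk packing built from a uniform-radius head plus a decaying tail---its first $r$ disks of radius $\frac{Disp(r;P)}{2\sigma}$ and subsequent disks of radius $\frac{Disp(r+i;P)}{\sigma}-\frac{Disp(r;P)}{2\sigma}$, cut off exactly where this radius would turn negative, are precisely your head/tail profile with $\rho_0=\frac{Disp(r;P)}{2\sigma}$, parametrized by head count $r$ instead of head radius $\rho_0$. Both arguments bound the packed area by the square's area and evaluate the same asymptotics to get the identical constant $\sqrt{2\pi(2\ln 2-1)/\sqrt{3}}\approx 1.183$, the only (cosmetic) difference being that your single monotone radius formula needs just the lower bound of Lemma~\ref{clm:boundsdisp}, whereas the paper also invokes the upper bound to define its tail cutoff $\delta$.
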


\subsection{A Polynomial-Time Online Algorithm}

\sloppy
Now we provide a deterministic polynomial-time online algorithm which is
1.591-competitive to $OPT_A$.
Similar to Algorithm \ref{alg:1d-online},
we first
construct
a set $Q$ of pre-fixed positions.
However, it is unclear
how to define $Q$ of arbitrary size in the square,
and we construct a set of 36 positions, denoted by $Q = \{q_1,\dots, q_{36}\}$.
It depends on a parameter $1< c <\sqrt{2}$ and $x=\frac{1}{3+4c}$,
as illustrated in
Figure~\ref{fig:36p:2nd}.
The indices of the $q_i$'s specify the order according to which they should be occupied,
thus we do not need an extra ordering $\tau$.
Note that
these positions create a grid in $P$
and split it
into multiple rectangles.
The choice of $c$ (and $x$, $Q$) will become clear in the analysis.

Whenever a new position needs to be created,
we pick the first position in $Q$
that has never been occupied
yet.
When all positions in $Q$ are occupied,
we may
(1) create a new position in the center of a current rectangle with the largest area,
split this rectangle into four smaller ones accordingly, and add the vertices of the new rectangles into the grid;
or (2) create a new position at a grid point that has never been occupied yet.
The main algorithm is
similar to Algorithm \ref{alg:1d-online} and
defined in Algorithm \ref{alg:2donline}.
It uses in Step
\ref{step:2-8}
a sub-routine, the {\em Position Creation Phase,}
as defined in Algorithm \ref{alg:2dcreation} in Appendix \ref{appendix:alg:2dcreation}.
In Appendix \ref{sec:intuition} we
provide some intuition on the choices of $Q$, $x$, and $c$.
By setting $c = 1.271$, we have the following theorem,
proved in Appendix \ref{App:thm:2d-worst}.

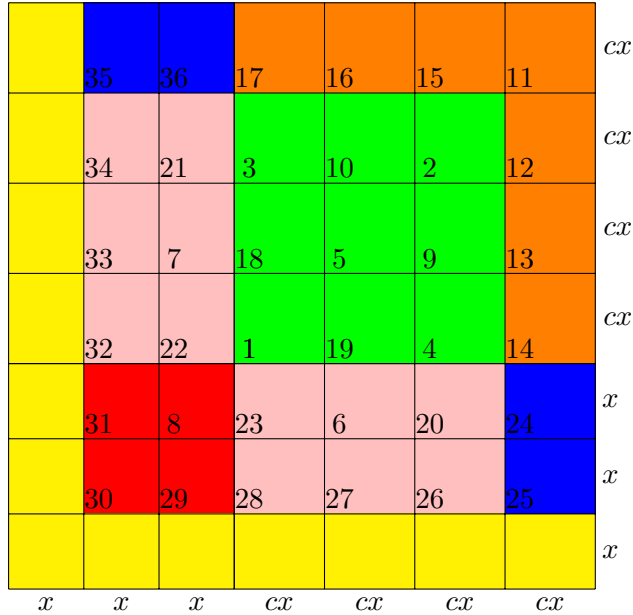
\begin{figure}[htbp]
\begin{center}
\begin{tikzpicture}

\fill [color=green](0,0) rectangle (3.6,3.6);
\fill [color=red](0,0) rectangle (-2,-2);
\fill [color=pink](0,0) rectangle (-2,3.6);
\fill [color=pink](0,0) rectangle (3.6,-2);
\fill [color=yellow](-3,-3) rectangle (-2,4.8);
\fill [color=yellow](-3,-3) rectangle (4.8,-2);
\fill [color=orange](3.6,0) rectangle (4.8,4.8);
\fill [color=orange](0,3.6) rectangle (4.8,4.8);
\fill [color=blue](-2,3.6) rectangle (0,4.8);
\fill [color=blue](3.6,0) rectangle (4.8,-2);

\draw[xstep=1cm,ystep=1] (0,0) grid (-3,-3);
\draw[xstep=1cm,ystep=1.2] (0,0) grid (-3,4.8);
\draw[xstep=1.2cm,ystep=1] (0,0) grid (4.8,-3);
\draw[xstep=1.2cm,ystep=1.2] (0,0) grid (4.8,4.8);

\path (0.2,0.2) node {$1$};
\path (2.6,2.6) node {$2$};
\path (0.2,2.6) node {$3$};
\path (2.6,0.2) node {$4$};
\path (1.4,1.4) node {$5$};
\path (0.2,0.2) node {$1$};
\path (1.4,-0.8) node {$6$};
\path (-0.8,1.4) node {$7$};
\path (-0.8,-0.8) node {$8$};
\path (2.6,1.4) node {$9$};
\path (1.4,2.6) node {$10$};
\path (3.8,3.8) node {$11$};
\path (3.8,2.6) node {$12$};
\path (3.8,1.4) node {$13$};
\path (3.8,0.2) node {$14$};
\path (2.6,3.8) node {$15$};
\path (1.4,3.8) node {$16$};
\path (0.2,3.8) node {$17$};
\path (0.2,1.4) node {$18$};
\path (1.4,0.2) node {$19$};
\path (2.6,-0.8) node {$20$};
\path (-0.8,2.6) node {$21$};
\path (-0.8,0.2) node {$22$};
\path (0.2,-0.8) node {$23$};
\path (3.8,-0.8) node {$24$};
\path (3.8,-1.8) node {$25$};
\path (2.6,-1.8) node {$26$};
\path (1.4,-1.8) node {$27$};
\path (0.2,-1.8) node {$28$};
\path (-0.8,-1.8) node {$29$};
\path (-1.8,-1.8) node {$30$};
\path (-1.8,-0.8) node {$31$};
\path (-1.8,0.2) node {$32$};
\path (-1.8,1.4) node {$33$};
\path (-1.8,2.6) node {$34$};
\path (-1.8,3.8) node {$35$};
\path (-0.8,3.8) node {$36$};

\path (-2.5,-3.2) node {$x$};
\path (-1.5,-3.2) node {$x$};
\path (-0.5,-3.2) node {$x$};
\path (0.6,-3.2) node {$cx$};
\path (1.8,-3.2) node {$cx$};
\path (3,-3.2) node {$cx$};
\path (4.2,-3.2) node {$cx$};

\path (5.1,4.2) node {$cx$};
\path (5.1,3) node {$cx$};
\path (5.1,1.8) node {$cx$};
\path (5.1,0.6) node {$cx$};
\path (5,-0.5) node {$x$};
\path (5,-1.5) node {$x$};
\path (5,-2.5) node {$x$};

\end{tikzpicture}

\caption{The set of pre-fixed positions, $Q = \{q_1,\dots, q_{36}\}$,
 and the grid created by $Q$.
A grid point labelled by $i$ indicates the position $q_i$.
The colored areas are used in the algorithm's description.
More specifically, denoting a rectangle by the position in $Q$ at its lower-left corner,
the green area is $(3, 10, 2; 18, 5, 9; 1, 19, 4)$;
the two pink areas are $(23, 6, 20; 28, 27, 26)$ and $(34, 21; 33, 7; 32, 22)$;
the red area is $(31, 8; 30, 29)$;
the two blue areas are $(35, 36)$ and $(24, 25)$;
the orange area is $(17, 16, 15, 11, 12, 13, 14)$;
and finally the yellow area contains all the remaining rectangles:
that is, rectangles adjacent to the left boundary
and the bottom boundary.}
\label{fig:36p:2nd}
\end{center}
\end{figure}

\begin{theorem} \label{thm:2d-worst}
Algorithm \ref{alg:2donline} runs in polynomial time and
is $1.591$-competitive for the 2-dimensional online ATWC problem in a square.
\end{theorem}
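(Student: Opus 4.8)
\textbf{Proof proposal for Theorem \ref{thm:2d-worst}.}

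The plan is to prove two separate facts about Algorithm \ref{alg:2donline}: that it runs in polynomial time, and that its competitive ratio is at most $1.591$. For the running time, the key observation is exactly the remark following Theorem \ref{thm:1-dtight}: the algorithm only creates a new grid position when the maximum number $m$ of simultaneously present points increases, so at most $m$ positions are ever created. Locating the largest current rectangle, inserting its center, and updating the grid can each be done in time polynomial in the number of current rectangles, which is polynomial in $m$; choosing the next unoccupied position in $Q$ or on the grid is likewise cheap. Since $m \le n$, the per-arrival work is polynomial in the input size, establishing polynomiality. The bulk of the work is the competitive-ratio bound.

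For the competitive ratio, by Claim \ref{clm:atwc=disp} it suffices to show that for every $m$, the minimum distance $d_{min}(m)$ produced by the algorithm satisfies $Disp(m; P) \le 1.591\, d_{min}(m)$. As in the one-dimensional case (Lemma \ref{lem:1d-online}), I would argue that $d_{min}(m)$ depends only on $m$, since the algorithm deterministically creates positions in a fixed order and the minimum distance is determined by the geometry of the first $m$ created positions. The analysis then splits into two regimes. In the first regime, $m \le 36$, the positions are drawn from the pre-fixed set $Q$ in the order given by their indices; here one computes $d_{min}(m)$ directly from the explicit coordinates determined by $c = 1.271$ and $x = \frac{1}{3+4c}$ (using the rectangle decomposition shown in Figure \ref{fig:36p:2nd}), and bounds $Disp(m; P)$ from above using the upper bound $\frac{2}{2+\sqrt{2\sqrt{3}\,m}}$ of Lemma \ref{clm:boundsdisp}. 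Verifying $Disp(m;P) \le 1.591\, d_{min}(m)$ for each of the thirty-six values of $m$ is a finite (if tedious) case check, and the specific colored regions in Figure \ref{fig:36p:2nd} are presumably organized so that the minimum distance after inserting each $q_i$ is controlled region by region.

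In the second regime, $m > 36$, the Position Creation Phase repeatedly subdivides the largest rectangle into four by inserting its center, so the grid refines in a self-similar way. Here I would set up a recursion: after the grid has been uniformly refined, the configuration should resemble a scaled copy of an earlier configuration, so $d_{min}(m)$ obeys a scaling relation as $m$ grows through the subdivision levels, analogous to the factor-of-$2$ halving in the one-dimensional Lemma \ref{lem:1d-online}. Pairing each such $d_{min}(m)$ against the upper bound on $Disp(m;P)$ from Lemma \ref{clm:boundsdisp}, whose $\sqrt{m}$ growth matches the linear-in-area growth of the number of packed disks, should give a ratio bounded by a constant independent of $m$; the constant $1.591$ and the choice $c = 1.271$ are what make the worst case across all $m$—both the small cases $m \le 36$ and the asymptotic regime—come out below the target. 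The main obstacle I anticipate is controlling the transition between regimes and the worst case \emph{within} a subdivision level: just as in one dimension the ratio is worst right before a large rectangle is split (when the largest and smallest cells differ most), here I must show that the gap between the largest and smallest rectangles stays bounded throughout, which is precisely why $Q$ is engineered with the two distinct spacings $x$ and $cx$ and why $c$ is confined to $(1,\sqrt{2})$. Establishing that the subdivision process preserves this bounded-aspect-ratio invariant, and that the worst ratio over the infinitely many values of $m$ is still attained at a computable configuration yielding $1.591$, is the crux of the argument.
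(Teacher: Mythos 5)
Your overall skeleton matches the paper's: reduce to insert-only sequences via Claim~\ref{clm:atwc=disp}, observe that $d_{min}(m)$ is determined by $m$ alone, do a finite case check for $m\le 36$, and handle $m>36$ by exploiting the self-similar subdivision of the grid. However, there is a concrete step in your first regime that fails numerically. You propose to bound $Disp(m;P)$ for $m\le 36$ by the upper bound $\frac{2}{2+\sqrt{2\sqrt{3}m}}$ of Lemma~\ref{clm:boundsdisp}. Take $m=9$: the algorithm gives $d_{min}(9)=cx$ with $c=1.271$ and $x=\frac{1}{3+4c}\approx 0.1237$, so $cx\approx 0.1572$. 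The true optimum is $Disp(9;P)=\frac14$ (the $3\times3$ grid, cited in Table~\ref{table:ratioQ}), giving ratio $\frac{1}{4cx}\approx 1.590$ --- this is in fact the global worst case of the algorithm, and it sits just barely under $1.591$. But Lemma~\ref{clm:boundsdisp}'s upper bound at $m=9$ is $\frac{2}{2+\sqrt{18\sqrt 3}}\approx 0.2637$, which would give a ratio of about $0.2637/0.1572\approx 1.68 > 1.591$. Because the algorithm's worst case occurs exactly at such a small $m$, any slack in the bound on $Disp(m;P)$ there is fatal. This is why the paper's proof does \emph{not} use Lemma~\ref{clm:boundsdisp} for $m\le 36$; instead it plugs in exact values of $Disp(n;P)$ for small $n$, obtained from known optimal packings of few circles in a square (converted via Claim~\ref{clm:Disp-DP}; see Table~\ref{table:ratioQ}), and the paper explicitly remarks that the Lemma~\ref{clm:boundsdisp} bound is not tight for small $n$.

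Your second regime also leaves the genuinely hard point unresolved, and you correctly identify it yourself: the ratio is not monotone within or across subdivision rounds, so one cannot conclude from a scaling relation alone that the supremum over infinitely many $m$ is attained at a checkable configuration. The paper's device is to first verify the position-count formulas for the five cases of each round (Claim~\ref{clm:position-n}), write explicit per-round, per-case formulas for $d_{min}(n)$ (Table~\ref{table:round0}), and then majorize $apx(n;P)$ by $f(n;P)=\frac{2}{\sqrt{2\sqrt{3}n}\,d_{min}(n)}$, which \emph{is} monotone decreasing in the round index within each fixed case. This reduces the whole infinite tail to finitely many comparisons (the seven cases of $m\le 36$, the exact ratios for rounds $0$ and $1$, and $f(2,j)$ for the five cases of round $2$ --- $22$ quantities in all), whose maximum is $\max\{apx(*,6),apx(0,5)\}=\frac{1}{4cx}<1.591$ once $c$ is chosen to balance those two terms. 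Without both the exact small-$n$ values and this monotonicity argument, your outline cannot close the gap to the stated constant.
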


%


%

%

%


%

Note that the upper-bound for $Disp(n;P)$ in Lemma \ref{clm:boundsdisp} is not tight when $n$ is small.
With better upper-bounds for $Disp(n;P)$,
better competitive ratios for our algorithm can be
directly obtained via a similar analysis.
Moreover, we believe
the competitive ratio
can be improved by using
a larger set $Q$
and the best ordering for positions in $Q$.
Such a $Q$ and a rigorous analysis based on it
are
left for future studies.
Finally,
similar techniques can be used when $P$ is a rectangle,
but the gap between the lower- and upper-bounds will be even larger,
and the analysis will be more complicated without adding much new insight to the problem.
Thus we leave a thorough study on rectangles for the future.

\begin{algorithm}[hbtp]
  \caption{{\hspace{-3pt}{\bf .}} A polynomial-time online algorithm for the ATWC problem in a square.}
  \label{alg:2donline}
  \begin{algorithmic}[1]
  \ENSURE $c$ such that $1<c< \sqrt{2}$, the corresponding $x = \frac{1}{3+4c}$, and $Q$.
  \REQUIRE A sequence of points arriving and departing along time.
  \STATE Denote by $\hat{Q}$ the set of positions ever occupied by a point. At any point of time, a position in $\hat{Q}$ is labeled {\em occupied} if currently there is a point there and {\em vacant} otherwise. Initially $\hat{Q}=\emptyset$.
  \STATE When a point $w$ leaves,
  change the label of its position in $\hat{Q}$ from {\em occupied} to {\em vacant}.
  \STATE When a point $w$ arrives:
  \IF {$\hat{Q} = \emptyset$ or all positions in $\hat{Q}$ are labelled {\em occupied}}
  	\IF {$|\hat{Q}| < 36$}
  		\STATE Put $w$ at position $q_{|\hat{Q}|+1}$, add this position to $\hat{Q}$ and label it {\em occupied}.
	\ELSE
  		\STATE\label{step:2-8} Compute a position $q$ according to the Position Creation Phase defined
  in Algorithm~\ref{alg:2dcreation}.
        \STATE Put $w$ at position $q$, add $q$ to $\hat{Q}$ and label it {\em occupied}.
  	\ENDIF
  \ELSE
    \STATE Arbitrarily choose a vacant position $q$ from $\hat{Q}$ and label it {\em occupied}.
  	\STATE Put $w$ at position $q$.
  \ENDIF
  \end{algorithmic}
\end{algorithm}

\section{The General $k$-Dimensional Online ATWC Problem}
\label{sec:kdatwc}

Although the literature gives us little understanding about
the optimal dispersion/packing problem
in an arbitrary $k$-dimensional polytope $P$ with $k\geq 2$,
 we are still able to provide a simple lower-bound and a simple polynomial-time algorithm
for the online ATWC problem.
Below we only state the theorems; 
see Appendix~\ref{appendix:offline:ATWC} for the proofs.
%

\begin{theorem}
\label{thm:kd-lower}
For any $k\geq 2$,
no online algorithm achieves
a competitive ratio better than $\frac{7}{6}$ for the ATWC problem for arbitrary polytopes.
\end{theorem}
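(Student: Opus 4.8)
The plan is to prove the lower bound by exhibiting, for every dimension $k\ge 2$, a \emph{single} polytope $P$ together with an \emph{adaptive} adversary that inserts at most two points and, after seeing where the algorithm places the first, decides whether to stop or insert a second. Since at most two points are ever simultaneously present, Claim~\ref{clm:atwc=disp} tells us $OPT_A$ equals $Disp(1;P)$ on the one‑point branch and $Disp(2;P)$ on the two‑point branch, so I only need cheap \emph{lower} bounds on these quantities (witnessed by explicit configurations) to lower‑bound the ratio $OPT_A/ALG$. The cleanest instance is the unit $k$-ball $B$ (center $O$, radius $1$), on which the constant $\frac76$ appears exactly; I would then transfer everything to a genuine polytope by approximation.

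First I would record the trivial witnesses $Disp(1;B)\ge 1$ (a point at $O$) and $Disp(2;B)\ge\frac23$ (a diametral pair $\pm\frac13u$). The adversary inserts one point, placed by the algorithm at $X_1$; write $\delta=|X_1-O|=1-dis(X_1,\partial B)$. If $\delta\ge\frac17$ the adversary halts: then $OPT_A=Disp(1;B)=1$ while the algorithm achieves only $dis(X_1,\partial B)=1-\delta\le\frac67$, so the ratio is at least $\frac{1}{1-\delta}\ge\frac76$. If instead $\delta<\frac17$, the adversary inserts a second point; for \emph{any} placement $X_2$, dropping the (non‑binding) term $1-\delta$ from $d_{min}$ and using $|X_1X_2|\le|X_1|+|X_2|=\delta+|X_2|$ gives
$$d_{min}(X_1,X_2)\le\min\bigl(1-|X_2|,\ \delta+|X_2|\bigr)\le\max_{\rho\in[0,1]}\min(1-\rho,\ \delta+\rho)=\tfrac{1+\delta}{2}<\tfrac47 .$$
Since $OPT_A=Disp(2;B)\ge\frac23$, the ratio is at least $\frac{2/3}{(1+\delta)/2}=\frac{4}{3(1+\delta)}>\frac76$. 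Both branches therefore force a ratio of at least $\frac76$, and the two bounds meet exactly at the threshold $\delta=\frac17$, which is precisely what pins the constant down.

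Finally I would transfer the argument to a polytope and to all $k\ge2$. The entire computation involves only $O$, $X_1$, $X_2$ and their mutual distances, so it is planar and applies verbatim in the $k$-ball for every $k\ge2$. To obtain an actual polytope I would take $P$ to be a polytopal approximation of $B$ within Hausdorff distance $\epsilon$ and restate the threshold intrinsically as $dis(X_1,\partial P)\le\frac67 Disp(1;P)$; the halting branch then still yields ratio $\ge\frac76$ with no error, while the two‑point branch yields $\ge\frac76-O(\epsilon)$. As the algorithm must perform well on $P$ for arbitrarily small $\epsilon$, no competitive ratio strictly below $\frac76$ is attainable. I expect the only real obstacle to lie in this last step: one must check that $dis(X_1,\partial P)>\frac67 Disp(1;P)$ still confines $X_1$ to a small neighborhood of the (approximate) deepest point, that $Disp(2;P)\ge\frac23-O(\epsilon)$, and that the algorithm's best two‑point value stays below $\frac67 Disp(2;P)+O(\epsilon)$ — i.e., that all the clean ball identities degrade only continuously under the approximation.
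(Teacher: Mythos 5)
Your proposal is correct and takes essentially the same route as the paper: the paper's proof also uses a ball (of radius $\frac12$) with an adaptive adversary that either stops after one point or inserts a second, balances the one-point ratio against the two-point ratio to extract $\frac76$ (its optimal first placement at distance $\frac{1}{14}$ from the center is exactly your threshold $\delta=\frac17$ after rescaling), and finishes with the same one-line appeal to approximating the ball by a polytope. The only differences are in execution: the paper reduces the two-point case to a diameter via a triangle-inequality lemma and solves the min-max optimization exactly (computing $Disp(2;P)=\frac13$), whereas you use the triangle inequality directly with explicit lower-bound witnesses and a threshold case split, and you are considerably more explicit than the paper about the continuity checks needed in the polytope-approximation step.
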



%


For any polytope $P$,
letting the {\em covering rate}
be the ratio between the edge-lengths of the maximum inscribed cube and the minimum bounding cube,
we have the following theorem.
Note that, although a natural greedy algorithm provides a 2-competitive ratio,
the exact greedy solution may not be computable in polynomial time.
Here we show the greedy algorithm
can be efficiently approximated arbitrarily closely.
The geometric problems
 of finding the minimum bounding cube, deciding whether a position is in $P$,
 and finding the distance between a point in $P$ and the boundary of $P$
 are given as oracles.

\begin{theorem}\label{thm:kd:approv}
For any constants $\gamma, \epsilon>0$, for any integer $k\geq 2$ and any $k$-dimensional polytope $P$ with covering rate at least $\gamma$,
there exists a deterministic polynomial-time online algorithm for the ATWC problem,
with competitive ratio $\frac{2}{1-\epsilon}$ and running time polynomial in $\frac{1}{(\gamma\epsilon)^k}$.
\end{theorem}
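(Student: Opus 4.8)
The plan is to implement the natural greedy algorithm---place each newly created position so as to \emph{approximately} maximize the minimum distance to the already-created positions and to $\partial P$---and to discretize the continuous maximization finely enough to lose only a factor $1-\epsilon$, yet coarsely enough that the search is polynomial. Following Algorithms \ref{alg:1d-online} and \ref{alg:2donline}, I would maintain the set $\hat Q$ of positions ever occupied, reuse a vacant position whenever one is available, and create a new position only when a point arrives while all positions are occupied. This guarantees that the number of created positions equals $m$, the maximum number of simultaneously present points, so by Claim \ref{clm:atwc=disp} the benchmark is $OPT_A = Disp(m;P)$.

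First I would establish the competitive ratio assuming that each created position is an approximate greedy choice. Write $p_1,\dots,p_m$ for the positions in order of creation, let $f_j(x) = \min\{dis(x,\partial P),\min_{i<j} dis(x,p_i)\}$ and $\tilde r_j = \max_{x\in P} f_j(x)$, and suppose $p_j$ is placed with $f_j(p_j) = r_j \geq (1-\epsilon)\tilde r_j$. A pigeonhole argument shows $\tilde r_j \geq \tfrac12 Disp(j;P)$: otherwise each of the $j$ optimal points for $Disp(j;P)$, being at distance $\geq Disp(j;P) > 2\tilde r_j$ from $\partial P$, would have to lie within $\tilde r_j$ of one of the $j-1$ positions $p_1,\dots,p_{j-1}$, forcing two optimal points within $2\tilde r_j < Disp(j;P)$ of each other, a contradiction. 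Hence $r_j \geq \tfrac{1-\epsilon}{2}Disp(j;P)\geq\tfrac{1-\epsilon}{2}Disp(m;P)$. Since for $a<b$ we have $dis(p_a,p_b)\geq r_b$ and $dis(p_a,\partial P)\geq r_a$, the minimum distance among all $m$ positions is at least $\min_j r_j \geq \tfrac{1-\epsilon}{2}Disp(m;P)$; as the points present at any time occupy a subset of these positions, $d_{min}(t)$ never drops below this value, yielding competitive ratio $\tfrac{2}{1-\epsilon}$.

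The heart of the argument is the discretization. Let $L$ be the edge length of the minimum bounding cube (from the oracle) and recall the inscribed cube $C$ has edge $a\geq\gamma L$. A volume argument gives a \emph{placement-independent} lower bound $\tilde r_j \geq \ell_j := \frac{\gamma L}{2\lceil j^{1/k}\rceil}$: partitioning $C$ into $\lceil j^{1/k}\rceil^k \geq j$ congruent sub-cubes leaves at least one free of $p_1,\dots,p_{j-1}$, and its center is at distance $\geq \frac{a}{2\lceil j^{1/k}\rceil}$ from every $p_i$ and, since $C\subseteq P$, from $\partial P$ as well. I would then search a grid over the bounding cube with spacing $\delta_j = \frac{2\epsilon\ell_j}{\sqrt k}$, evaluating $f_j$ only at grid points lying in $P$, and place $p_j$ at the maximizer. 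The grid point $y$ nearest the true maximizer $x^*$ satisfies $\|y-x^*\|\leq\frac{\sqrt k}{2}\delta_j = \epsilon\ell_j\leq\epsilon\tilde r_j < dis(x^*,\partial P)$, so $y\in P$; and since each distance function is $1$-Lipschitz (hence so is $f_j$), $f_j(y)\geq f_j(x^*)-\|y-x^*\|\geq(1-\epsilon)\tilde r_j$, exactly as the competitive-ratio step requires.

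Finally I would bound the running time. The grid contains $\big(\tfrac{\sqrt k\,\lceil j^{1/k}\rceil}{\gamma\epsilon}\big)^k$ points, each evaluated in $O(j)$ oracle and distance computations, so creating the $j$-th position costs $\mathrm{poly}(j)\cdot(\gamma\epsilon)^{-k}$ for fixed $k$; summing over arrivals gives total time polynomial in the input and in $\frac{1}{(\gamma\epsilon)^k}$. The hard part is precisely the tension in choosing $\delta_j$: a single global grid cannot resolve the shrinking optimum $\tilde r_j$ as $j$ grows, so the lower bound $\ell_j$---which is where the covering rate $\gamma$ is indispensable---is what lets the grid scale be chosen adaptively while keeping its size polynomial.
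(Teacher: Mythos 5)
Your proposal is correct and follows essentially the same approach as the paper: approximate the greedy placement by exhaustively searching a grid whose resolution is chosen adaptively from a covering-rate-based lower bound on the current optimum, and bound the loss to a factor $(1-\epsilon)$ via the triangle inequality (your Lipschitz argument), exactly as in the paper's Claim \ref{lem:slice}. The only cosmetic differences are that you prove the key guarantee $\tilde r_j \geq \frac{1}{2}Disp(j;P)$ by pigeonhole on the optimal dispersion points where the paper's Lemma \ref{lem:exist} uses a ball packing-versus-covering volume argument, and that you lower-bound the greedy value $\tilde r_j$ directly by an empty sub-cube of the inscribed cube where the paper instead lower-bounds $Disp(n;P)$ itself (Equation \ref{equ:7-3}); these devices are interchangeable here.
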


\section{The General $k$-Dimensional Offline CD Problem}
\label{sec:kdcd}

By Claim~\ref{claim:ic:online}, no online algorithm provides a good competitive ratio for the CD problem,
thus we focus on the offline problem.
%
Given an input sequence $S = ((s_1,d_1), \dots, (s_n, d_n))$,
we first slice the whole time interval $[0,T]$ into smaller ones
by the arriving time $s_{i}$ and the departure time $d_{i}$ of each point $i\in [n]$.
Thus the set of present points only changes at the end-points of the intervals
and stays the same within an interval.
Our algorithm will be such that,
{\em in each time interval,} the
minimum distance is a good approximation to the optimal dispersion problem
without time, for the points present in this interval.

Interestingly, this is achieved by reducing
the offline CD problem to the online ATWC problem,
 for any dimension~$k$
and polytope $P$.
To carry out this idea, we first provide a polynomial-time algorithm
$\cal A_I$ (Algorithm~\ref{FindSubset})
that, given a sequence $S$,
selects a subset $\cI$ of points from $S$.
The set $\cI$ satisfies the following properties,
which are proved in Claim \ref{claim:A_I} in Appendix \ref{appendix:offline:cd}.

\begin{description}
\item [$\Phi$.1] ${\cal I}$ can be partitioned into two groups $\cI_1$ and $\cI_2$ such that
the points in the same group have disjoint time intervals.
\item [$\Phi$.2] For any time $0\leq t \leq T$, if
there are points in $S$ present at time $t$,
 then at least one of them is selected to $\cal I$.
\end{description}


\begin{algorithm}[htbp]
  \caption{{\hspace{-3pt}{\bf .}} Algorithm $\mathcal{A_I}$ for computing $\cI = \{\cI_1, \cI_2\}$ satisfying properties $\Phi$.1 and $\Phi$.2.}
  \label{FindSubset}
  \begin{algorithmic}[1]
  \REQUIRE A sequence $S= ((s_1, d_1),\dots, (s_n, d_n))$.
  \STATE Let $\mathcal{I}_1= \mathcal{I}_2 = \emptyset$, $s=-1$, $d=0$ and  $T= \max_{i\in [n]} d_i$.

  ($s$ and $d$ are the end-points of a ``sliding window'' for the arriving times under consideration.)
  \STATE Let $index = 1$.
  \WHILE {d $\neq$ $T$}
  \STATE Let $\hat{S}=\{i| i \in S, s_i > s, s_i\leq d, d_i> d\}$.
  \IF {$\hat{S}\neq\emptyset$}
    \STATE\label{step:8-6} Arbitrarily choose $j \in \arg\max\limits_{i \in \hat{S}} d_i$ and add $j$ to
    $\mathcal{I}_{index}$.
    \STATE\label{step:8-8} $s=d$, then $d=d_j$.
  \ELSE
  \STATE\label{step:8-11} $s=d$, then $d=\min\limits_{i\in S, s_i>d} s_i$.
  \ENDIF
  \STATE $index = 3 - index$.
  \ENDWHILE
  \STATE Output $\cI = \{\cI_1, \cI_2\}$.
  \end{algorithmic}
\end{algorithm}

The offline CD algorithm $\cA_{CD}$ uses algorithm $\cal A_I$
to select $\cI$ from its input $S$, eliminates the selected points from $S$,
and repeats on the remaining $S$ until all points have been eliminated.
Recall that $m$ is the maximum number of points simultaneously present at any time.
By property $\Phi$.2,
this procedure ends in at most $m$ iterations.
Based on the partitions constructed by $\cal A_I$,
$\cA_{CD}$ constructs an instance of the online ATWC problem
and uses any online algorithm $\cA_{ATWC}$ for the latter as a black-box, so as to decide how to locate the
points.
Algorithm $\cA_{CD}$ is defined in Algorithm \ref{OFFline} and we have the following theorem, proved in Appendix \ref{appendix:offline:cd}.
Below we only sketch the main ideas.

\begin{algorithm}
  \caption{{\hspace{-3pt}{\bf .}} $\cA_{CD}$}
  \label{OFFline}
  \begin{algorithmic}[1]
  \REQUIRE A sequence $S= ((s_1, d_1),\dots, (s_n, d_n))$.
  \STATE Let $r = 0$.
  \WHILE {$S\neq \emptyset$}
    \STATE Run $\cal A_I$ on $S$ to obtain two disjoint sets $\mathcal{I}_{2r+1}, \mathcal{I}_{2r+2} \subseteq S$.
  \STATE $S =  S\setminus(\mathcal{I}_{2r+1}\cup\mathcal{I}_{2r+2})$.
    \STATE $r = r + 1$.
  \ENDWHILE
  \STATE\label{step:CD-7} Run $\mathcal{A}_{ATWC}$ on the following online sequence
  of $2r$ points:
     for all $i\in\{0,1,\dots,r-1\}$, points $2i+1$ and $2i+2$ arrive at time $i$. All points
     depart at time $r$.
  \STATE Letting $x_{2i+1}, x_{2i+2}$ be the two positions returned by $\cA_{ATWC}$ at time $i$,
  assign all points in $\mathcal{I}_{2i+1}$ to $x_{2i+1}$ and all points in $\mathcal{I}_{2i+2}$ to $x_{2i+2}$.
  \end{algorithmic}
\end{algorithm}

\begin{theorem}\label{thm:integral}
For any $k\geq 1$ and $k$-dimensional polytope $P$, given any polynomial-time online
algorithm $\mathcal{A}_{ATWC}$ for the ATWC problem with competitive ratio $\sigma$, there is a polynomial-time
 offline algorithm $\cA_{CD}$ for the CD problem with competitive ratio $\sigma\max\limits_{i\geq 1}\frac{Disp(i;P)}{Disp(2i;P)} \leq 2\sigma$, using $\mathcal{A}_{ATWC}$ as a black-box.
\end{theorem}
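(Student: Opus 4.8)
The plan is to compare, slice by slice in time, three quantities: the offline optimum $OPT_C(S;P)$, the objective $\cA_{CD}(S;P)=\int_0^T d_{min}(t)\,dt$ achieved by Algorithm~\ref{OFFline}, and the instantaneous static dispersion optima. Slicing $[0,T]$ at all arrival and departure times makes the number $n(t)$ of points present piecewise constant, so both integrands will be piecewise constant and can be handled slice by slice. The upper bound is immediate: at any time $t$ no placement can beat the optimal static dispersion of the present points, so $d_{min}(t;X)\le Disp(n(t);P)$ for every $X$, and integrating gives $OPT_C(S;P)\le\int_0^T Disp(n(t);P)\,dt$. The substance of the proof is a matching per-slice lower bound on the integrand produced by $\cA_{CD}$.

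For that lower bound I would fix a time $t$ and track which positions of the black box $\cA_{ATWC}$ are occupied. Let $J(t)$ be the set of iterations of $\cal A_I$ that place into some group a point present at $t$. Property $\Phi$.2 guarantees that as long as the remaining input still has a present point at $t$ the current iteration selects one, and property $\Phi$.1 guarantees that each of the two groups of an iteration contributes at most one present point; hence the present points are exhausted after at most $n(t)$ iterations and $J(t)$ is an initial segment $\{0,\dots,|J(t)|-1\}$ with $|J(t)|\le n(t)$. Consequently every point present at $t$ belongs to a group $\cI_k$ with $k\le 2|J(t)|$, so it sits at one of the positions $x_1,\dots,x_{2|J(t)|}$ returned by $\cA_{ATWC}$. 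Now I invoke competitiveness on a prefix: after the first $2|J(t)|$ points of the constructed online sequence in Step~\ref{step:CD-7} have arrived they are all simultaneously present, so by Claim~\ref{clm:atwc=disp} the offline optimum of that prefix equals $Disp(2|J(t)|;P)$, and $\sigma$-competitiveness of $\cA_{ATWC}$ forces the minimum distance among $x_1,\dots,x_{2|J(t)|}$ (and to $\partial P$) to be at least $\frac{1}{\sigma}Disp(2|J(t)|;P)$. Since the points present at $t$ occupy a subset of these positions, $d_{min}(t)\ge\frac{1}{\sigma}Disp(2|J(t)|;P)\ge\frac{1}{\sigma}Disp(2n(t);P)$, the last step using that $Disp(\cdot;P)$ is non-increasing and $|J(t)|\le n(t)$.

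Combining the two bounds, and noting the zero-$n(t)$ slices contribute nothing, the mediant inequality $\frac{\sum a_\ell}{\sum b_\ell}\le\max_\ell\frac{a_\ell}{b_\ell}$ applied over the slices yields
\[
\frac{OPT_C(S;P)}{\cA_{CD}(S;P)}\le\frac{\int_0^T Disp(n(t);P)\,dt}{\frac{1}{\sigma}\int_0^T Disp(2n(t);P)\,dt}\le\sigma\max_{i\ge 1}\frac{Disp(i;P)}{Disp(2i;P)}.
\]
It then remains to prove the scaling fact $Disp(2i;P)\ge\frac12 Disp(i;P)$, which gives the stated bound $\le 2\sigma$. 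For this I would take an optimal $i$-point configuration with value $d=Disp(i;P)$ and replace each point $X_j$ by the pair $X_j\pm\frac{d}{4}\hat u_j$ for an arbitrary unit direction $\hat u_j$: the two points of a pair are at distance $d/2$; two points from different pairs are at distance $\ge d-\frac{d}{4}-\frac{d}{4}=\frac{d}{2}$ by the triangle inequality; and each point lies within $d/4$ of some $X_j$, hence at distance $\ge d-\frac{d}{4}\ge\frac{d}{2}$ from $\partial P$. This exhibits $2i$ points with min-distance $\ge d/2$. Finally, the running time is polynomial because $\cal A_I$ terminates in $O(n)$ event steps, the outer loop of $\cA_{CD}$ runs at most $m\le n$ times by $\Phi$.2, and $\cA_{ATWC}$ is invoked once on $2r\le 2n$ points. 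I expect the main obstacle to be the bookkeeping in the second paragraph—showing that the positions occupied at an arbitrary time $t$ form exactly a prefix $x_1,\dots,x_{2|J(t)|}$ of the ATWC output with $|J(t)|\le n(t)$—since that is precisely where properties $\Phi$.1 and $\Phi$.2 must be combined with the online competitiveness of the black box.
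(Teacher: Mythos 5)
Your proposal is correct and follows essentially the same route as the paper's proof: slice time at arrival/departure events, upper-bound $OPT_C$ by $\int Disp(n(t);P)\,dt$, use properties $\Phi.1$/$\Phi.2$ plus prefix-competitiveness of $\cA_{ATWC}$ to lower-bound the algorithm's minimum distance on each slice by $\frac{1}{\sigma}Disp(2n(t);P)$, and finish with the mediant inequality and the pair-splitting construction showing $Disp(2i;P)\geq \frac{1}{2}Disp(i;P)$. The only differences are cosmetic (per-time rather than per-interval bookkeeping, and arbitrary unit directions $\hat u_j$ instead of the paper's fixed first-coordinate shift in Claim~\ref{claim:2i-i}), neither of which changes the substance of the argument.
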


\begin{proof}[Proof ideas]
Given an input sequence $S$,
we slice the whole time interval $[0,T]$ into smaller ones
according to the arriving time and the departure time of each point.
Denote these small intervals by $T_1,\dots, T_l$, where $l$ is the number of small intervals created.
For each interval $T_i$,
let $S_{i}$ be the set of points that overlap with
$T_{i}$
and $n_{i}=|S_{i}|$.
By properties $\Phi$.1 and $\Phi$.2,
all points in $S_i$ are eliminated from $S$
in the first $n_i$ iterations of $\cA_{\cI}$,
thus are located at the first $2n_i$ positions
created by $\cA_{ATWC}$.
The minimum distance among points in $T_i$ (and to the boundary)
is at least $\frac{Disp(2n_{i};P)}{\sigma}$,
since algorithm $\cA_{ATWC}$ has competitive ratio $\sigma$.
Thus, within each $T_i$, the competitive ratio to the
optimal solution is upper-bounded by
$\frac{\sigma Disp(n_i;P)}{Disp(2n_i;P)}$.
Taking summation over all $T_i$'s, the competitive ratio is
upper-bounded by
$\sigma\max\limits_{i\geq 1}\frac{Disp(i;P)}{Disp(2i;P)}$.
Finally, we prove $\max\limits_{i\geq 1}\frac{Disp(i;P)}{Disp(2i;P)}\leq 2$,
finishing the proof of Theorem~\ref{thm:integral}.
\end{proof}

%


\paragraph{\bf Remark.}
Under the insert-only model,
it is not hard to see that
the online CD problem and the online ATWC problem are equivalent,
in the sense that an algorithm is $\sigma$-competitive for
 the online ATWC problem if and only if it is $\sigma$-competitive for the online CD problem.
Thus, all our algorithms for the online ATWC problem can be
directly applied to the online (and also offline) CD problem, with the competitive ratios unchanged.
Finally, all our inapproximability results for the online ATWC problem hold under the insert-only model.

\paragraph{Acknowledgements.}

The authors thank Joseph Mitchell
for motivating us to study the online dispersion problem.
We thank Esther Arkin, Michael Bender, Rezaul A. Chowdhury, Jie Gao, Joseph Mitchell, Jelani Nelson, and the participants of the Algorithm
Reading Group for helpful discussions, and several anonymous reviewers for helpful comments.

\bibliography{ref}

\begin{thebibliography}{10}

\bibitem{abravaya2010maximizing}
S.~Abravaya and M.~Segal.
\newblock Maximizing the number of obnoxious facilities to locate within a
  bounded region.
\newblock {\em {Computers $\&$ Operations Research}}, 37(1):163--171, 2010.

\bibitem{baur2001approximation}
C.~Baur and S.~P. Fekete.
\newblock Approximation of geometric dispersion problems.
\newblock {\em Algorithmica}, 30(3):451--470, 2001.

\bibitem{ben2000obnoxious}
B.~Ben-Moshe, M.~J. Katz, and M.~Segal.
\newblock Obnoxious facility location: Complete service with minimal harm.
\newblock {\em International Journal of Computational Geometry $\&$
  Applications}, 10(06):581--592, 2000.

\bibitem{benkert2006polynomial}
M.~Benkert, J.~Gudmundsson, C.~Knauer, E.~Moet, R.~van Oostrum, and A.~Wolff.
\newblock A polynomial-time approximation algorithm for a geometric dispersion
  problem.
\newblock In {\em International Computing and Combinatorics Conference}, pages
  166--175. Springer, 2006.

\bibitem{birnbaum2009improved}
B.~Birnbaum and K.~J. Goldman.
\newblock An improved analysis for a greedy remote-clique algorithm using
  factor-revealing {LP}s.
\newblock {\em Algorithmica}, 55(1):42--59, 2009.

\bibitem{cabello2007approximation}
S.~Cabello.
\newblock Approximation algorithms for spreading points.
\newblock {\em Journal of Algorithms}, 62(2):49--73, 2007.

\bibitem{cevallos2016max}
A.~Cevallos, F.~Eisenbrand, and R.~Zenklusen.
\newblock Max-sum diversity via convex programming.
\newblock In {\em Proceedings of the 32nd Symposium on Computational Geometry
  (SoCG)}, pages 26:1--26:14. Schloss Dagstuhl-Leibniz-Zentrum fuer Informatik,
  2016.

\bibitem{cevallos2017local}
A.~Cevallos, F.~Eisenbrand, and R.~Zenklusen.
\newblock Local search for max-sum diversification.
\newblock In {\em Proceedings of the Twenty-Eighth Annual ACM-SIAM Symposium on
  Discrete Algorithms}, pages 130--142. SIAM, 2017.

\bibitem{cohn2016sphere}
H.~Cohn, A.~Kumar, S.~D. Miller, D.~Radchenko, and M.~Viazovska.
\newblock The sphere packing problem in dimension 24.
\newblock {\em arXiv:1603.06518}, 2016.

\bibitem{dumitrescu2012dispersion}
A.~Dumitrescu and M.~Jiang.
\newblock Dispersion in disks.
\newblock {\em Theory of Computing Systems}, 51(2):125--142, 2012.

\bibitem{epstein2005optimal}
L.~Epstein and R.~Van~Stee.
\newblock Optimal online algorithms for multidimensional packing problems.
\newblock {\em SIAM Journal on Computing}, 35(2):431--448, 2005.

\bibitem{epstein2007bounds}
L.~Epstein and R.~Van~Stee.
\newblock Bounds for online bounded space hypercube packing.
\newblock {\em Discrete optimization}, 4(2):185--197, 2007.

\bibitem{fejes1942dichteste}
L.~Fejes~T{\'o}th.
\newblock {\"U}ber die dichteste kugellagerung.
\newblock {\em Mathematische Zeitschrift}, 48(1):676--684, 1942.

\bibitem{fekete2004maximum}
S.~P. Fekete and H.~Meijer.
\newblock Maximum dispersion and geometric maximum weight cliques.
\newblock {\em Algorithmica}, 38(3):501--511, 2004.

\bibitem{fotakis2006incremental}
D.~Fotakis.
\newblock Incremental algorithms for facility location and k-median.
\newblock {\em Theoretical Computer Science}, 361(2):275--313, 2006.

\bibitem{fotakis2008competitive}
D.~Fotakis.
\newblock On the competitive ratio for online facility location.
\newblock {\em Algorithmica}, 50(1):1--57, 2008.

\bibitem{fowler1981optimal}
R.~J. Fowler, M.~S. Paterson, and S.~L. Tanimoto.
\newblock Optimal packing and covering in the plane are {NP-complete}.
\newblock {\em Information processing letters}, 12(3):133--137, 1981.

\bibitem{friedman2015dynamic}
E.~Friedman, C.-A. Psomas, and S.~Vardi.
\newblock Dynamic fair division with minimal disruptions.
\newblock In {\em Proceedings of the sixteenth ACM conference on Economics and
  Computation}, pages 697--713. ACM, 2015.

\bibitem{ghodsi2011dominant}
A.~Ghodsi, M.~Zaharia, B.~Hindman, A.~Konwinski, S.~Shenker, and I.~Stoica.
\newblock Dominant resource fairness: Fair allocation of multiple resource
  types.
\newblock In {\em NSDI}, volume~11, pages 24--24, 2011.

\bibitem{gutman2012fair}
A.~Gutman and N.~Nisan.
\newblock Fair allocation without trade.
\newblock In {\em Proceedings of the 11th International Conference on
  Autonomous Agents and Multiagent Systems-Volume 2}, pages 719--728.
  International Foundation for Autonomous Agents and Multiagent Systems, 2012.

\bibitem{hales2005proof}
T.~C. Hales.
\newblock A proof of the {Kepler} conjecture.
\newblock {\em Annals of mathematics}, 162(3):1065--1185, 2005.

\bibitem{hifi2009literature}
M.~Hifi and R.~M'hallah.
\newblock A literature review on circle and sphere packing problems: models and
  methodologies.
\newblock {\em Advances in Operations Research}, 2009:150624:1--150624:22,
  2009.

\bibitem{hochbaum1985approximation}
D.~S. Hochbaum and W.~Maass.
\newblock Approximation schemes for covering and packing problems in image
  processing and {VLSI}.
\newblock {\em Journal of the ACM (JACM)}, 32(1):130--136, 1985.

\bibitem{hokama2016bounded}
P.~Hokama, F.~K. Miyazawa, and R.~C. Schouery.
\newblock A bounded space algorithm for online circle packing.
\newblock {\em Information Processing Letters}, 116(5):337--342, 2016.

\bibitem{huang2010greedy}
W.~Huang and T.~Ye.
\newblock Greedy vacancy search algorithm for packing equal circles in a
  square.
\newblock {\em Operations Research Letters}, 38(5):378--382, 2010.

\bibitem{kash2014no}
I.~Kash, A.~D. Procaccia, and N.~Shah.
\newblock No agent left behind: Dynamic fair division of multiple resources.
\newblock {\em Journal of Artificial Intelligence Research}, 51:579--603, 2014.

\bibitem{katz2002improved}
M.~J. Katz, K.~Kedem, and M.~Segal.
\newblock Improved algorithms for placing undesirable facilities.
\newblock {\em Computers $\&$ Operations Research}, 29(13):1859--1872, 2002.

\bibitem{locatelli2002packing}
M.~Locatelli and U.~Raber.
\newblock Packing equal circles in a square: a deterministic global
  optimization approach.
\newblock {\em Discrete Applied Mathematics}, 122(1):139--166, 2002.

\bibitem{mettu2003online}
R.~R. Mettu and C.~G. Plaxton.
\newblock The online median problem.
\newblock {\em SIAM Journal on Computing}, 32(3):816--832, 2003.

\bibitem{meyerson2001online}
A.~Meyerson.
\newblock Online facility location.
\newblock In {\em 42rd Annual IEEE Symposium on Foundations of Computer
  Science}, pages 426--431. IEEE, 2001.

\bibitem{miyazawa2014polynomial}
F.~K. Miyazawa, L.~L. Pedrosa, R.~C. Schouery, M.~Sviridenko, and
  Y.~Wakabayashi.
\newblock Polynomial-time approximation schemes for circle packing problems.
\newblock In {\em European Symposium on Algorithms}, pages 713--724. Springer,
  2014.

\bibitem{peikert1992packing}
R.~Peikert, D.~W{\"u}rtz, M.~Monagan, and C.~de~Groot.
\newblock Packing circles in a square: a review and new results.
\newblock In {\em System Modelling and Optimization: Proceedings of the 15th
  IFIP Conference, Zurich, Switzerland, September 2--6, 1991}, pages 45--54.
  Springer, 1992.

\bibitem{qin2000some}
Z.~Qin, Y.~Xu, and B.~Zhu.
\newblock On some optimization problems in obnoxious facility location.
\newblock In {\em International Computing and Combinatorics Conference}, pages
  320--329. Springer, 2000.

\bibitem{ravi1994heuristic}
S.~S. Ravi, D.~J. Rosenkrantz, and G.~K. Tayi.
\newblock Heuristic and special case algorithms for dispersion problems.
\newblock {\em Operations Research}, 42(2):299--310, 1994.

\bibitem{rogers1947existence}
C.~A. Rogers.
\newblock Existence theorems in the geometry of numbers.
\newblock {\em Annals of Mathematics}, pages 994--1002, 1947.

\bibitem{rosenkrantz2006obtaining}
D.~J. Rosenkrantz, G.~K. Tayi, and S.~Ravi.
\newblock Obtaining online approximation algorithms for facility dispersion
  from offline algorithms.
\newblock {\em Networks}, 47(4):206--217, 2006.

\bibitem{schaer1965densest}
J.~Schaer.
\newblock The densest packing of nine circles in a square.
\newblock {\em Canad. Math. Bull}, 8:273--277, 1965.

\bibitem{schaer1965geometric}
J.~Schaer and A.~Meir.
\newblock On a geometric extremum problem.
\newblock {\em Canad. Math. Bull}, 8:21--27, 1965.

\bibitem{schwartz1970separating}
B.~Schwartz.
\newblock Separating points in a square.
\newblock {\em J. Recr. Math}, 3:195--204, 1970.

\bibitem{seiden2002online}
S.~S. Seiden.
\newblock On the online bin packing problem.
\newblock {\em Journal of the ACM (JACM)}, 49(5):640--671, 2002.

\bibitem{szabo2007packing}
P.~G. Szab{\'o} and E.~Specht.
\newblock Packing up to 200 equal circles in a square.
\newblock In {\em Models and Algorithms for Global Optimization}, pages
  141--156. Springer, 2007.

\bibitem{viazovska2016sphere}
M.~Viazovska.
\newblock The sphere packing problem in dimension 8.
\newblock {\em arXiv:1603.04246}, 2016.

\end{thebibliography}
\bibliographystyle{abbrv}

\newpage

\appendix


%



%
%
%
%
%
%
%
%
%
%
%
%
%
%
%
%
%
%
%
%
%
%
%


\section{Proofs for Section \ref{sec:preliminary}}\label{apdx:preliminary}

\paragraph{Claim \ref{claim:ic:online}.} (restated) {\em
For the CD problem, even when $k=1$ and $P$ is the unit segment, no (randomized) online algorithm provides
a competitive ratio to $OPT_C$ better than $\Omega(n)$.
}


\begin{proof}
Consider $n$ points with $s_i = 0$ for all $i$
and let $X_1,\dots, X_n$ be the locations chosen by an online algorithm at time 0.
If there exist two points $i, j$ with $dis(X_i, X_j) = d_{min}(0)$, then
the adversary
sets the departure time of $i$ and $j$ to be a large number $T$,
and that of all the other points to be 1.
Otherwise, there exists a point $i$ with $dis(X_i, \partial P) = d_{min}(0)$, and
 the adversary sets $d_i =T$ and $d_{j} = 1$ for all $j\neq i$.
We only analyze the first case as the second is almost the same.
In the algorithm, $d_{min}(t) = d_{min}(0)\leq \frac{1}{n+1}$ for any $t$, and
$\int_0^T d_{min}(t) dt \leq \frac{T}{n+1}$.
However,
by putting $i$ and $j$ at $1/3$ and $2/3$ respectively, we have
$OPT_C(S; P)> \frac{T-1}{3}$. Thus the competitive ratio is $\Omega(n)$.
%
%
\end{proof}


\paragraph{Claim \ref{clm:atwc=disp}.} (restated) {\em
$\forall$ $S = ((s_1, d_1),\dots, (s_n, d_n))$, letting $m = \max_{t\leq T} |\{i : s_i\leq t\leq d_i\}|$, we have
$OPT_A(S; P) = Disp(m; P)$.
}

\begin{proof}
Let $X = (X_1,\dots, X_n)$ be the optimal solution
for the offline ATWC problem, and $t\in [0, T]$ be such that there exist exactly $m$ points at time $t$.
It is easy to see that
$$OPT_A(S; P) = \min_{t'\leq T} d_{min}(t'; X) \leq d_{min}(t; X) \leq Disp(m; P).$$
Next,
let $Y = (Y_1, \dots, Y_m)$ be the optimal solution for $Disp(m; P)$ and
consider the following algorithm for the offline ATWC problem with input $S$:
when a point arrives, arbitrarily pick a location $Y_i$ that is not currently occupied
and put it there; when a point leaves,
the~$Y_i$ occupied by it becomes vacant again. Note that this is an offline
algorithm because it knows~$m$ (and thus~$Y$). Also note that this algorithm produces a valid solution,
because there are at most $m$ points simultaneously present at any time, and $m$ locations are sufficient.
Abusing notation sightly and letting $d_{min}(t; Y)$ be the minimum distance produced by the algorithm at time $t$, we have
$d_{min}(t; Y)\geq Disp(m; P)$ for all $t\leq T$, thus
$$Disp(m; P) \leq OPT_A(S; P).$$
Therefore Claim \ref{clm:atwc=disp} holds.
\end{proof}


\paragraph{Claim \ref{clm:Disp-DP}.} (restated) {\em
For any $k\geq 1$, $x>0$ and any $k$-dimensional polytope $P$
with an insphere,
letting $x$ be the radius of the insphere,
we have
$Disp(n; P) = \frac{2xDP(n; P)}{x+DP(n; P)}$.
}

\begin{proof}
Let $c$ be the center of the insphere.
On the one hand,
given the optimal locations $X_1,\dots, X_n$ corresponding to $Disp(n; P)$,
let $P'$ be the polytope obtained from $P$ by moving each facet towards~$c$ for distance $\frac{Disp(n; P)}{2}$.
As the distance from $c$ to each facet of $P$ is exactly $x$,
$P'$ can also be obtained by
shrinking $P$ by a factor of $\lambda = \frac{x}{x-\frac{Disp(n; P)}{2}}$
with respect to the origin at $c$.
Consider the packing of $n$ balls with radius $\frac{Disp(n; P)}{2}$ in $P'$,
centered at the $X_i$'s.
Indeed,
as the distance of each $X_i$ to the facets of $P$ is at least $Disp(n; P)$,
its distance to the facets of $P'$ is at least $\frac{Disp(n;P)}{2}$
and the $n$ balls are contained wholly in $P'$.
Moreover,
as the distance of any two locations $X_i$ and $X_j$ is at least $Disp(n; P)$,
the $n$ balls are not overlapping with each other.
By scaling $P'$ up by a factor of $\lambda$ with respect to $c$,
we get a packing of $n$ balls in $P$ with radius $r = \frac{\lambda Disp(n; P)}{2} = \frac{x Disp(n; P)}{2x-Disp(n; P)}$.
Thus $Pack(r; P)\geq n$. Accordingly, $DP(n; P)\geq r$ by definition, which implies
$$Disp(n; P) \leq \frac{2x DP(n; P)}{x + DP(n; P)}.$$

On the other hand,
given the optimal solution for $DP(n; P)$, with the balls centered at $Y_1,\dots, Y_n$, let $P''$ be the polytope obtained from $P$ by moving each facet away from  $c$
for distance
$DP(n; P)$.
It is easy to see that $Y_1,\dots, Y_n$ is a dispersion in $P''$ with distance $2DP(n; P)$.
Again because the distance from $c$ to each facet of $P$ is exactly $x$, $P''$ can be obtained by scaling $P$ up
by a factor of $\lambda' = \frac{x+DP(n; P)}{x}$ with respect to $c$.
By scaling $P''$ down by a factor of $\lambda'$, we obtain a dispersion in~$P$ with distance
$d = \frac{2DP(n; P)}{\lambda'} = \frac{2x DP(n; P)}{x+DP(n; P)}$. Therefore
$$Disp(n; P) \geq \frac{2x DP(n; P)}{x + DP(n; P)},$$
and Claim \ref{clm:Disp-DP} holds.
\end{proof}

\paragraph{Claim \ref{clm:DispPacking}.} (restated) {\em
For any $k\geq 1$ and any $k$-dimensional convex polytope $P$ with an insphere,
given the radius of the insphere,
\begin{itemize}
\item[(1)] any polynomial-time algorithm for $Disp(n; P)$ implies such an algorithm for $Pack(r; P)$;
\item[(2)] any polynomial-time algorithm for $Pack(r; P)$ implies an FPTAS for $Disp(n; P)$.
\end{itemize}
}

%
\begin{proof}
Roughly speaking, given an algorithm for one of the two problems,
 we can use binary search to find a solution for the other.
Without loss of generality, assume $P$ has volume 1.

Let $x$ be the radius of the insphere.
For the first part of the claim,
let $Ball(r)$ be the volume of the $k$-dimensional ball with radius $r$ and $N = \lfloor \frac{1}{Ball(r)} \rfloor$.
Clearly, $Pack(r; P)\leq N$. The binary search over the set $\{0, 1, \dots, N\}$ works as follows.
In each round, find the median of the current set, denoted by $n$;
compute $d_n = Disp(n; P)$ using the polynomial-time algorithm and $r_n = \frac{d_n x}{2x-d_n}$. Note $r_n=DP(n; P)$ by Claim \ref{clm:Disp-DP}.
If $r_n < r$ then continue searching from $n-1$ and below---that is, one cannot pack $n$ balls of radius $r$ in $P$; and
if $r_n\geq r$ then continue searching from $n$ and above---that is, one can pack at least $n$ balls of radius $r$ in $P$.
When $n$ is the only number left, output it.
The correctness of the algorithm follows from the fact that $Disp(n; P)$ and thus $DP(n; P)$ are non-increasing in $n$.
Accordingly, there exists a unique $n^*$
such that $r_n\geq r$ for all $n\leq n^*$ and $r_{n}<r$ for all $n>n^*$. It is easy to see that $Pack(r; P) = n^*$.

For the second part, using $x$ as an upper bound for $Disp(n; P)$, the idea is almost the same.
The only difference is that $Disp(n; P)$ may not have finite representations and the algorithm has to stop when the length of the interval
is no larger than some small constant~$\epsilon$, leading to an FPTAS rather than an exact solution.
\end{proof}

Note that finding the radius of the insphere in an irregular convex polytope is a non-trivial computation problem.
 Thus we require it is given as an input. Finding the radius is easy for regular polytopes.

\section{Proofs for Section \ref{sec:1-D}}
\label{apdx:1-D}

\subsection{Proof for Theorem \ref{thm:1d-lower}}
\label{appendix:Theorem 1}

\paragraph{Theorem \ref{thm:1d-lower}.} (restated) {\em
No algorithm
 achieves a competitive ratio better than $2\ln 2$~($\approx 1.386$) for the 1-dimensional online ATWC problem.
}

\begin{proof}
For any positive integer $r$, let $\sigma'_r = \sum_{i=r+1}^{2r} \frac{1}{i}$.
We show that no algorithm achieves a competitive ratio better than $2\sigma_r'$.
For the sake of contradiction, assume there exists an $r$ and
an online algorithm $\mathcal{A}$
with competitive ratio $\alpha< 2\sigma_r'$.
We construct an instance of the online ATWC problem with three stages.
In the first stage, $r-1$ points arrive simultaneously;
in the second stage, $r$ new points arrive one by one;
and then all $2r-1$ points depart simultaneously.

Since no point departs before the last point arrives,
by Claim \ref{clm:atwc=disp} we have $OPT_A(S_i; P) = Disp(i; P)$ for any point $i$, where $S_i$ is the instance containing
only the first $i$ points.
Since the online algorithm
is $\alpha$-competitive to $OPT_A$,
it must ensure that for each point $i$, after its arrival, its distance to all the other present points is at least $\frac{OPT_A(S_i; P)}{\alpha} = \frac{Disp(i; P)}{\alpha}$:
otherwise the adversary simply stops adding new points and
the competitive ratio is violated for the instance~$S_i$.

Nevertheless, we show that after the second stage,
the claimed competitive ratio must be violated.
To do so,
note that $Disp(i; P) = \frac{1}{i+1}$ for any point $i$, because the optimal dispersion without time
is to locate the points evenly on the segment, resulting in $i+1$ equal-length intervals.
Denote by $Q = \{q_\ell | 1\leq \ell \leq r-1\}$ the positions
of the first $r-1$ points given by the algorithm, and let $q_0 = 0$ and $q_r = 1$.
We claim that, in the second stage, no two points can be put into the same interval generated by~$Q$.
Assume otherwise, and
assume exactly two points $i$ and $j$ are put into the same interval, with $r\leq i < j \leq 2r-1$.
There must exist another interval $(q_{\ell},q_{\ell+1})$ with $\ell\geq 0$,
which does not contain any new point from stage two.
Any other interval contains exactly one new point.
Thus the total length of all the
 intervals generated by $Q$, denoted by $\mathcal{L}$, is
\begin{equation}\label{equ:2}
\mathcal{L} > (q_{\ell+1}-q_{\ell}) + \dfrac{Disp(i;P)}{2\sigma_r'} + 2\times\dfrac{Disp(j;P)}{2\sigma_r'} + \sum\limits_{r\leq h \leq 2r-1,\ h\not\in\{i,j\}} 2\times \dfrac{Disp(h;P)}{2\sigma_r'}.
\end{equation}

To see why Equation \ref{equ:2} is true, first note that $q_{\ell+1}-q_{\ell}$ is the
length of the interval which does not contain any new point.
Second, the length of the interval
split by $i$ and $j$ must be larger than
$$\frac{Disp(i;P)}{2\sigma_r'} + 2\times\frac{Disp(j;P)}{2\sigma_r'}.$$
Indeed,
upon arrival, point $i$ first splits this interval into two smaller intervals,
and one of them is further split by $j$; thus the sub-interval between $i$ and the adjacent end-point of the interval
is at least $\frac{Disp(i; P)}{\alpha}> \frac{Disp(i; P)}{2\sigma_r'}$,
and each of the two sub-intervals created by $j$ is at least
$\frac{Disp(j; P)}{\alpha}> \frac{Disp(j; P)}{2\sigma_r'}$,
due to the algorithm's claimed competitive ratio.
Moreover, for each point $h\in \{r, r+1,\dots, 2r-1\}\setminus\{i,j\}$, it splits one of the remaining intervals and
each of the two sub-intervals is at least $\frac{Disp(h; P)}{\alpha} > \frac{Disp(h; P)}{2\sigma_r'}$.

Again because algorithm  ${\cal A}$'s competitive ratio is $\alpha< 2\sigma_r'$, we have
\begin{equation}\label{equ:3}
q_{\ell+1}-q_{\ell} \geq \frac{Disp(r-1; P)}{\alpha} = \frac{1}{\alpha r}> \dfrac{1}{2\sigma_r' r} > \dfrac{1}{2\sigma_r' (i+1)} = \dfrac{Disp(i;P)}{2\sigma_r'}.
\end{equation}
Combining Equations \ref{equ:2} and \ref{equ:3}, we have
\begin{equation}\label{equ:3_1}
\mathcal{L} > \sum\limits_{r\leq h \leq 2r-1} 2\times \dfrac{Disp(h;P)}{2\sigma_r'} = \dfrac{1}{\sigma_r'} \sum\limits_{r\leq h \leq 2r-1} \dfrac{1}{h+1} = \dfrac{1}{\sigma_r'} \sum\limits_{r+1\leq h \leq 2r} \dfrac{1}{h} = 1.
\end{equation}
That is, the total length is larger than 1, a contradiction.

In general, having more than two points in the same interval
and having more than one interval
containing at least two points lead to the same contradiction.
More specifically, for all new points $i$ that contributes one copy of $\frac{Disp(i; P)}{2\sigma'_r}$ in the lower bound of $\cL$ as in Equation~\ref{equ:2},
there exists exactly the same number of intervals that are not split by any new point.
Thus we can arbitrarily fix a bijection between those points and intervals,
such that each unsplit interval contributes another copy of
$\frac{Disp(i; P)}{2\sigma'_r}$ for its corresponding point $i$ as in Equation \ref{equ:3},
and Equation \ref{equ:3_1} holds again.
Accordingly, we conclude that each interval generated by $Q$ is split by exactly one new point in stage two.

However, in this case, by a similar argument,
the total length of these intervals is
$$\mathcal{L} > \sum\limits_{r\leq h \leq 2r-1} 2\times \dfrac{Disp(h; P)}{2\sigma_r'} = \dfrac{1}{\sigma_r'} \sum\limits_{r\leq h \leq 2r-1} \dfrac{1}{h+1} = \dfrac{1}{\sigma_r'} \sum\limits_{r+1\leq h \leq 2r} \dfrac{1}{h} = 1,$$
the same contradiction.
Therefore such an algorithm $\cA$ does not exist.

In sum, for any integer $r>0$, no  algorithm achieves a competitive ratio better than $2\sigma_r'$.
Since $\sigma_r'$ is strictly increasing in $r$ and $\lim_{r\to \infty} 2\sigma_r' = 2\ln2$,
no algorithm achieves a competitive ratio better than $2\ln 2$ and Theorem \ref{thm:1d-lower} holds.
\end{proof}

\subsection{Proof for Lemma \ref{lem:1d-online}}
\label{appendix:Lemma 2}

\paragraph{Lemma \ref{lem:1d-online}.} (restated) {\em
%
Given any $r$ and $\tau$,
Algorithm \ref{alg:1d-online} is $2\sigma_r$-competitive to $OPT_A$
for any
$S= ((s_1, d_1),\dots, (s_n, d_n))$
with $m> r$, where $m$ is maximum number of points simultaneously present at some time $t$.
}

\begin{proof}
Note that for each arriving point Algorithm \ref{alg:1d-online} first checks the existing
positions in $\hat{Q}$.
Only when there is no {\it vacant} position will it create a new one, with the pre-fixed positions in $Q$ having the highest priority.
In particular, whenever a new position is added to $\hat{Q}$ at time $\hat{t}$,
$\min_{t<\hat{t}} d_{min}(t) = d_{min}(\hat{Q})$,
where $d_{min}(\hat{Q})\triangleq \min_{q, q'\in \hat{Q}} \{dis(q, \partial P), dis(q, q')\}$ is the minimum distance incurred by positions in $\hat{Q}$.
Accordingly, the minimum distance throughout Algorithm \ref{alg:1d-online}
happens when there are $m$ points on the line.
Note $OPT_A(S; P) = Disp(m; P) = \frac{1}{m+1}$ by Claim \ref{clm:atwc=disp}.
Thus, in order to lower bound the competitive ratio of the algorithm,
it suffices to consider the case
where each point $i$ arrives at time $i$ and all points depart at the same time $m+1$---
that is,
$S=((1,m+1),(2,m+1),\dots,(m,m+1))$. By the hypothesis of Lemma \ref{lem:1d-online}, $m> r$.


By the construction of the algorithm, upto time $r$, only the positions in $Q$ may be used for the points.
At time $r$, $\hat{Q} = Q$, all $r$ positions in $Q$ are {\it occupied} and the segment is sliced into $r+1$ intervals.
After that, at time $i>r$, the arriving point $i$ will split the current largest interval into
 two equal sub-intervals and
create a new position in $\hat{Q}$.
In fact, since the lengths of the intervals created by $Q$ strictly decrease from left to right,
the position creation procedure can be described by ``rounds'' as follows.
Round 0 is
$\tau_1, \dots, \tau_r$
(from time 1 to time $r$);
round $1$ splits existing intervals one by one into halves,
from the leftmost to the rightmost (from time $r+1$ to time ${2(r+1)}$);
round $2$ again splits existing intervals
from the leftmost to the rightmost (from time ${2(r+1)+1}$ to time ${4(r+1)}$);
 so on and so forth.
In particular,
by the end of each round $l$, the number of sub-intervals sliced by $\hat{Q}$ is $2^{l}(r+1)$
and the number of points
is $|\hat{Q}| = 2^l(r+1)-1$.
Also, for each interval $(q_{i-1}, q_i)$ with $i\in [r+1]$,
it has been sliced
into $2^l$ sub-intervals after round $l$.
Moreover, because the maximum interval $(q_0, q_1)$ is less than twice of the minimum interval $(q_r, q_{r+1})$,
whenever a point $i>r$ is added and $i\in (q_{j-1}, q_j)$ for some $j$,
the resulting minimum distance occurs between $i$ and its right neighbor in $\hat{Q}$
(and is also the length of all sub-intervals in $(q_{j-1}, q_j)$ before $i$).

Let point $m$ appear during round $(l+1)$ for some $l\geq 0$, then
there exits $1\leq i \leq r+1$ such that
$$2^{l}(r+1) + 2^{l}(i-1)\leq m < 2^{l}(r+1)+2^{l}i.$$
That is, $m$ appears in the interval $(q_{i-1}, q_{i})$.
Accordingly, after point $m$ is located, the minimum distance equals
the length of every sub-interval in $(q_{i-1}, q_i)$ before point $m$, as well as the sub-interval immediately after it.
And this length is
$$
d_{min}(m) = \frac{1}{2^{l+1}\sigma_r (r+i)}.
$$
Therefore the competitive ratio is
$$\frac{OPT_A(S; P)}{d_{min}(m)} = \dfrac{\dfrac{1}{m+1}}{\dfrac{1}{2^{l+1}\sigma_r (r+i)}} = \dfrac{2^{l+1}\sigma_r (r+i)}{m+1}$$
$$\leq \dfrac{2^{l+1}\sigma_r (r+i)}{2^{l}(r+1) + 2^{l}(i-1)+1}
=  \frac{2^{l+1}\sigma_r (r+i)}{2^l(r+i)+1} < 2\sigma_r,$$
and Lemma \ref{lem:1d-online} holds.
\end{proof}

\subsection{Proof for Lemma \ref{lem:1-dclose}}
\label{Appendix:claims567}

\paragraph{Lemma \ref{lem:1-dclose}.} (restated) {\em
\sloppy
%
For any positive integer $l$ and $r = 2^l-1$, there exists an
ordering $\tau$ for the corresponding set~$Q$,
such that
Algorithm \ref{alg:1d-online}
is $2\sigma_r$-competitive to $OPT_A$
for any
$S= ((s_1, d_1),\dots, (s_n, d_n))$
 with $m\leq r$.
}

\begin{proof}
As shown in Algorithm \ref{alg:1d-online}, for each arriving point,
when $m\leq r$,
it always assigns the point to the existing positions in $\hat{Q}$, and when there is no vacant position,
it creates a new one at the $(|\hat{Q}|+1)$st position in $\tau$.
Thus, it is again
sufficient to consider all the instances
$S = ((1, m+1), (2, m+1),\dots, (m, m+1))$ with $m\leq r$:
whenever a new position is created,
the minimum distance so far is incurred by $\hat{Q}$, and the size of $\hat{Q}$ only increases.
In other words, we can focus on the instance $S = ((1, r+1), (2, r+1), \dots, (r, r+1))$
and prove that the competitive ratio at any time $d\in [r]$ is smaller than $2\sigma_r$.

Below we construct the desired ordering $\tau$ for $Q$ by filling in a complete binary tree with~$r$ nodes.
We do so in $l$ rounds,
with round $j\in \{0, \dots, l-1\}$ filling in level $j$ of the tree and level 0 being the root.
In round 0, letting the {\em left end-point} be $0$ and the {\em right end-point} be $2^l$ (corresponding to $q_0 = 0$ and $q_{2^l} = 1$),
fill the root with the average of the two, namely, $2^{l-1}$.
In each round $j>0$, process the nodes in level $j$ from left to right.
For each node $x$, letting its two neighbors in the current tree be filled with $x_{left}$ and $x_{right}$,
fill node $x$
with $\frac{x_{left} + x_{right}}{2}$.
If node $x$ is the leftmost (respectively, rightmost) node in the current tree, then take $x_{left} = 0$ (respectively, $x_{right} = 2^l$).
After the whole tree being filled, $\tau$ is obtained by traversing it in a breadth-first manner starting from the root: letting the $j$-th node visited being filled with $x_j$,
then $\tau_j = q_{x_j}$.
Figure \ref{fig:tree} illustrates the structure of the tree and the ordering $\tau$, for general $l$ and for $l=3$.
We refer to the resulting~$\tau$ as the {\em binary ordering} and we have the following claim.

\begin{claim}\label{clm:1-d:formula}
\sloppy
%
%
For any positive integer $l$ and
$r = 2^l-1$,
for any $d\in [r]$, writing $d = 2^i + s$ with $i\in \{0, 1, \dots, l-1\}$ and $s\in\{0, 1, \dots, 2^i-1\}$,
then the binary ordering $\tau$ is such that
$$\tau_d = q_{2^{l-i-1}(2s+1)},$$
and the minimum distance according to Algorithm \ref{alg:1d-online} with parameter $r$ and $\tau$ at time $d$, $d_{min}(d)$, is
$$d_{min}(d) = \frac{1}{\sigma_r}\sum_{j=2^l+2^{l-i-1}(2s+1)}^{2^l-1+2^{l-i-1}(2s+2)} \frac{1}{j}.$$
\end{claim}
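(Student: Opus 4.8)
The plan is to establish the two formulas in turn, beginning with the position formula $\tau_d = q_{2^{l-i-1}(2s+1)}$ by induction on the level $i$ of the binary tree. The base case is the root, which is filled with $2^{l-1}$, so $\tau_1 = q_{2^{l-1}}$, matching $i=s=0$. For the inductive step I would show that once levels $0,\dots,i-1$ are filled, the occupied indices are exactly the multiples of $2^{l-i}$ strictly inside $(0,2^l)$, so they cut $[0,2^l]$ into $2^i$ equal index-blocks of width $2^{l-i}$; level $i$ then inserts the midpoints of these blocks from left to right, and the midpoint of the $s$-th block $[s\,2^{l-i},(s+1)2^{l-i}]$ is $2^{l-i-1}(2s+1)$. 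Since a breadth-first traversal reaches the $s$-th node of level $i$ precisely at step $2^i+s$ (because $\sum_{j<i}2^j = 2^i-1$ nodes precede it), this gives $\tau_d = q_{2^{l-i-1}(2s+1)}$.

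With the position formula in hand, the configuration at time $d=2^i+s$ is transparent: levels $0,\dots,i-1$ are complete and level $i$ has placed its first $s+1$ midpoints, so the occupied positions bisect the leftmost $s+1$ of the $2^i$ equal blocks and leave blocks $s+1,\dots,2^i-1$ untouched. The point placed at step $d$ sits at the midpoint of block $s$, and the gap between it and its right neighbor $q_{2^{l-i-1}(2s+2)}$ is the right half of block $s$, of length $\frac{1}{\sigma_r}\sum_{k=2^{l-i-1}(2s+1)+1}^{2^{l-i-1}(2s+2)}\frac{1}{2^l-1+k}$ by the closed form $\frac{1}{\sigma_r(2^l-1+k)}$ for the $k$-th grid interval; the substitution $j = 2^l-1+k$ turns this into exactly the claimed sum. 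It therefore remains to prove that this right half, which I will call $R_s$, is the global minimum gap at time $d$, boundary distances included.

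The minimality argument splits into two comparisons. Against every other bisected half the claim is immediate from the monotonicity $|(q_{j'-1},q_{j'})|<|(q_{j-1},q_j)|$ for $j<j'$ established earlier: each other half lies entirely to the left of $R_s$, contains the same number $2^{l-i-1}$ of grid intervals, and is hence term-by-term longer. The harder comparison, which I expect to be the crux, is against the unbisected blocks; since these strictly decrease from left to right, it suffices to beat the rightmost and smallest one, $U$, occupying indices $2^{l+1}-2^{l-i},\dots,2^{l+1}-1$, and since $R_s$ is decreasing in $s$ it suffices to treat $s=0$. A naive term-by-term comparison here only yields a factor of two, so instead I would pass to the integral estimates $\ln\frac{b+1}{a}<\sum_{j=a}^{b}\frac1j<\ln\frac{b}{a-1}$, using the right-hand bound on $R_0$ and the left-hand bound on $U$. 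The required inequality $R_0<U$ then reduces, after clearing denominators, to $g(1-g)<0$ with $g=2^{l-i}\ge 2$, which holds. This gives $R_s\le R_0<U$, so $R_s$ is indeed the minimum gap and equals the claimed value, completing the proof.
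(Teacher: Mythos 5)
Your proof is correct, and its skeleton matches the paper's: the position formula via induction on tree levels (the paper phrases the same induction in terms of round-$i$ intervals each containing $2^{l-i-1}$ pre-fixed intervals), the identification of the new gap as the right half $R_s$ of block $s$, and the split of the minimality check into a term-by-term comparison against the bisected halves and a comparison against the untouched blocks, reduced to the rightmost block $U$. Where you genuinely diverge is the crux inequality $R_s < U$. The paper proves it with no analytic estimates: writing $I^{i-1}_s$ for block $s$, the left half of $I^{i-1}_s$ is term-by-term longer than the right half, so $R_s < \frac{1}{2}|I^{i-1}_s|$; and pairing the pre-fixed intervals of block $s$ with those of the last block, the property $|(q_{j'-1}, q_{j'})|< |(q_{j-1}, q_j)| < 2|(q_{j'-1}, q_{j'})|$ for $j<j'$ gives $|I^{i-1}_s| < 2|I^{i-1}_{2^i-1}|$, whence $R_s < |I^{i-1}_{2^i-1}| = U$ directly, for every $s$ at once. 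So your remark that a term-by-term comparison only yields a factor of two is accurate for the direct comparison of $R_s$ with $U$, but the factor-of-two property does close the argument once it is routed through the parent block; this is exactly how the paper avoids your harder case. Your integral route is nonetheless valid: with $a=2^l$ and $g=2^{l-i}$, the bounds $\sigma_r R_0 < \ln\frac{a+g-1}{a+g/2-1}$ and $\sigma_r U > \ln\frac{2a}{2a-g}$ reduce the claim to $(a+g-1)(2a-g) < 2a\left(a+\frac{g}{2}-1\right)$, i.e.\ $g-g^2<0$, which holds since $i\leq l-1$ forces $g\geq 2$; together with the monotonicity $R_s\leq R_0$ (which the paper's version does not need) this completes the minimality. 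The trade-off is that the paper's argument is more elementary and reuses a structural property of $Q$ it has already established, while yours substitutes a short, self-contained logarithmic estimate for that structural step.
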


We prove Claim \ref{clm:1-d:formula} after
the proof of
Lemma \ref{lem:1-dclose}.
Note that $OPT_A(S_d;P) = Disp(d; P) = \frac{1}{d+1} = \frac{1}{2^i+s+1}$ for $d=2^i+s$, where $S_d$ contains only the first $d$ points in $S$.
Thus, by Claim~\ref{clm:1-d:formula} the competitive ratio at time $d$, $apx(d)$, is
$$apx(d) = \frac{OPT_A(S_d;P)}{d_{min}(d)}=\frac{\sigma_{r}}{(2^{i}+s+1)\cdot
\sum_{j=2^{l}+2^{l-i-1}(2s+1)}^{2^{l}-1+2^{l-i-1}(2s+2)}
\frac{1}{j}}.$$
To show $apx(d)< 2\sigma_r$,
we lower bound the denominator in two steps, by the following two claims, which are also proved after the proof of
Lemma \ref{lem:1-dclose}.

\begin{claim}\label{clm:f(s)}
%
Arbitrarily fixing $i\in \{0, 1, \dots, l-1\}$ and
letting
$$f(s)=(2^{i}+s+1)\cdot
\sum\limits_{j=2^{l}+2^{l-i-1}(2s+1)}^{2^{l}-1+2^{l-i-1}(2s+2)}
\frac{1}{j}$$
for all $s \in \{0, 1, \dots, 2^{i}-1\}$,
we have that $f(s)$ is strictly decreasing in $s$.
\end{claim}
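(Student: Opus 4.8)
The plan is to show $f(s+1) < f(s)$ for every $s \in \{0, 1, \dots, 2^i - 2\}$. To streamline the bookkeeping I would first introduce the shorthand $p = 2^{l-i-1}$ and $a = a(s) = 2^l + p(2s+1)$, and observe that the summation range in $f(s)$ runs from $a$ to $a + p - 1$, so it contains exactly $p$ terms. Writing $c(s) = 2^i + s + 1$ for the leading coefficient and $g(s) = \sum_{j=a}^{a+p-1} \frac{1}{j}$ for the sum, we have $f(s) = c(s)\, g(s)$. As $s$ increases by one, $c$ increases by exactly $1$, while the window of the sum shifts upward by $2p$ (since $a(s+1) = a(s) + 2p$), so $g$ strictly decreases; the content of the claim is that the decrease of $g$ dominates the increase of $c$.

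The key algebraic step is to rewrite the target inequality as a single sign condition. Since $c(s+1) = c(s) + 1$, I would expand
$$f(s) - f(s+1) = c(s)\,g(s) - (c(s)+1)\,g(s+1) = c(s)\bigl(g(s) - g(s+1)\bigr) - g(s+1),$$
so it suffices to prove $c(s)\bigl(g(s) - g(s+1)\bigr) > g(s+1)$. Pairing the two sums term by term, with $g(s+1) = \sum_{k=0}^{p-1} \frac{1}{a+2p+k}$, gives the telescoping identity
$$g(s) - g(s+1) = \sum_{k=0}^{p-1}\left(\frac{1}{a+k} - \frac{1}{a+2p+k}\right) = \sum_{k=0}^{p-1} \frac{2p}{(a+k)(a+2p+k)}.$$

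The crux, and the step I expect to require the most care, is a term-by-term comparison: I claim that for every $k \in \{0,\dots,p-1\}$,
$$c(s)\cdot\frac{2p}{(a+k)(a+2p+k)} > \frac{1}{a+2p+k},$$
which after clearing the common factor $\frac{1}{a+2p+k}$ is equivalent to the purely arithmetic inequality $c(s)\cdot 2p > a + k$. This is where the exact positions in $Q$ pay off: one computes $c(s)\cdot 2p = (2^i+s+1)\,2^{l-i} = 2^l + (s+1)2^{l-i}$, whereas $a + k \le a + p - 1 = 2^l + (s+1)2^{l-i} - 1$, so $a+k < c(s)\cdot 2p$ holds strictly for all $k$. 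Summing the term-by-term inequalities over $k = 0, \dots, p-1$ and substituting the telescoping identity yields $c(s)\bigl(g(s)-g(s+1)\bigr) > g(s+1)$, hence $f(s) - f(s+1) > 0$. Since $s$ was arbitrary in $\{0,\dots,2^i-2\}$, this establishes that $f$ is strictly decreasing. The only subtleties to verify carefully are that the window has exactly $p$ terms and that $a+p-1$ is the sharp upper bound on $a+k$, both of which are immediate from the definitions.
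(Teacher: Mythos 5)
Your proof is correct and takes essentially the same route as the paper's: after aligning the two summation windows (the paper reindexes $f(s+1)$ by $j \mapsto j + 2^{l-i}$, you pair the $k$-th terms), both arguments reduce each term-by-term comparison to the identical arithmetic inequality $j < 2^{l-i}(2^i+s+1) = c(s)\cdot 2p$, verified because the largest index in the window is $2^l + 2^{l-i-1}(2s+2) - 1$. Your splitting $f(s)-f(s+1) = c(s)\bigl(g(s)-g(s+1)\bigr) - g(s+1)$ followed by telescoping is just a two-step rearrangement of the paper's one-line summand difference $\frac{c(s)}{j} - \frac{c(s)+1}{j+2^{l-i}}$, so the content is the same.
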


By Claim \ref{clm:f(s)},
$apx(d) = \frac{\sigma_r}{f(s; i)} \leq \frac{\sigma_r}{f(2^i-1; i)} = \frac{\sigma_{r}}{(2^{i}+2^{i})\cdot
\sum_{j=2^{l}+2^{l-i-1}(2^{i+1}-1)}^{2^{l}-1+2^{l-i-1}2^{i+1}}\frac{1}{j}}
=\frac{\sigma_{r}}{2^{i+1}\cdot
\sum_{j=2^{l+1}-2^{l-i-1}}^{2^{l+1}-1}\frac{1}{j}}$.

\begin{claim}\label{clm:g(i)}

Letting
$$
g(i)=2^{i+1}\cdot
\sum\limits_{j=2^{l+1}-2^{l-i-1}}^{2^{l+1}-1}\frac{1}{j}
$$
for all $i\in\{0,1,\cdots,l-1\}$,
we have that $g(i)$ is strictly decreasing in $i$.
\end{claim}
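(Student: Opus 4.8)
The plan is to establish the strict inequality $g(i+1) < g(i)$ for every $i \in \{0,1,\dots,l-2\}$, which yields strict monotonicity across the whole range. The first step is a bookkeeping observation: the sum defining $g(i)$ runs over the $2^{l-i-1}$ consecutive integers $j = 2^{l+1}-2^{l-i-1},\dots, 2^{l+1}-1$, and the prefactor satisfies $2^{i+1}\cdot 2^{l-i-1} = 2^l$. Writing $M = 2^{l+1}$ and $a = 2^{l-i-1}$, this reads $g(i) = \frac{2^l}{a}\sum_{j=M-a}^{M-1}\frac 1j$. The crucial structural fact is that when $i$ increases by one the number of terms halves to $a/2$ and the range becomes exactly the \emph{upper half} $[M-a/2,\,M-1]$ of the old range, while the prefactor doubles: $g(i+1) = \frac{2^{l+1}}{a}\sum_{j=M-a/2}^{M-1}\frac 1j$.

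With this normalization, the desired inequality $g(i+1)<g(i)$, after multiplying through by $a/2^l$, becomes equivalent to $2\sum_{j=M-a/2}^{M-1}\frac 1j < \sum_{j=M-a}^{M-1}\frac 1j$. Splitting the right-hand sum into its lower half on $[M-a,\,M-a/2-1]$ and its upper half on $[M-a/2,\,M-1]$ and cancelling one copy of the upper half, the claim reduces to showing that the sum over the upper half is strictly smaller than the sum over the lower half:
\[
\sum_{j=M-a/2}^{M-1}\frac 1j < \sum_{j=M-a}^{M-a/2-1}\frac 1j.
\]

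The final step proves this last inequality by a term-by-term pairing. The two halves each contain exactly $a/2$ terms, so I would pair the upper-half index $M-a/2+t$ with the lower-half index $M-a+t$ for $t = 0,\dots,a/2-1$. Since $\frac 1j$ is strictly decreasing and the upper index exceeds the lower by the fixed positive offset $a/2$, each pair obeys $\frac{1}{M-a/2+t} < \frac{1}{M-a+t}$; summing over $t$ gives the strict inequality. Strictness requires only $a/2 \geq 1$, which holds because $i \leq l-2$ forces $a = 2^{l-i-1}\geq 2$.

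I expect the only real friction to be the index bookkeeping---verifying that the $g(i+1)$ range is precisely the upper half of the $g(i)$ range and that both halves have equal cardinality $a/2$. Once the ranges are aligned, the monotonicity-plus-pairing argument is immediate and, notably, requires no asymptotic estimation of the harmonic sums involved.
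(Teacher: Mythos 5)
Your proof is correct and takes essentially the same route as the paper's: the paper also splits the $g(i)$ sum into its lower half and upper half and reindexes the lower half as $\sum_{j=2^{l+1}-2^{l-i-2}}^{2^{l+1}-1}\frac{1}{j-2^{l-i-2}}$, which is exactly your term-by-term pairing with fixed offset $a/2 = 2^{l-i-2}$, yielding $g(i) > 2^{i+2}\sum_{j=2^{l+1}-2^{l-i-2}}^{2^{l+1}-1}\frac{1}{j} = g(i+1)$.
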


By Claim \ref{clm:g(i)},
for any $d\in [r]$,
$
apx(d)\leq
\frac{\sigma_{r}}{g(l-1)}=
\frac{\sigma_{r}}{2^{l}\cdot \sum_{j=2^{l+1}-1}^{2^{l+1}-1}\frac{1}{j}}
=\frac{\sigma_{r}}{2^{l}/(2^{l+1}-1)}
= \sigma_{r}(2-\frac{1}{2^{l}})< 2\sigma_{r},
$
and Lemma \ref{lem:1-dclose} holds.
\end{proof}

\paragraph{Remark.} Note the denominator of $apx(d)$ may not be
 decreasing in $d$,
and the jump may happen from $d= 2^i+(2^i-1)$ to $2^{i+1}$.
We bypass this problem by breaking the analysis into two steps, as above.

\medskip

We now prove the three claims.

%

\begin{proof}[Proof of Claim \ref{clm:1-d:formula}]
%
%
%
We first provide some simple facts about $\tau$ and Algorithm \ref{alg:1d-online}.
Since the algorithm
adds positions to $\hat{Q}$ according to $\tau$, the whole procedure can also be considered as $l$ rounds,
same as the construction of the binary tree.
By induction, for each $i = 0, 1, \dots, l-1$,
the number of positions added to $\hat{Q}$ in round $i$ is $2^i$
(i.e., the number of nodes in level~$i$ of the tree),
and
the number of intervals created by $\hat{Q}$ by the end of round $i$ is
$2^{i+1}$.
We denote these intervals by $I^i_0,I^i_1,\dots, I^i_{2^{i+1}-1}$ from left to right, and refer to them as the {\em round-$i$ intervals}.
Moreover, referring to the intervals $(q_{j-1}, q_j)$ with $j\in [r+1]$ as the {\em pre-fixed intervals},
we have that each round-$i$ interval contains $2^{l-i-1}$ pre-fixed intervals:
by construction,
each position added in round $i$ split the corresponding round-$(i-1)$ interval into two sub-intervals,
not with the same length but with the same number of pre-fixed intervals, thus
all round-$i$ intervals contain the same number of pre-fixed intervals.

Next, it is easy to see that the points that arrive in round $i$ are points $2^i+0, 2^i+1, \dots, 2^i + (2^i -1)$.
Thus point $d = 2^i + s$ arrives in round $i$ and the corresponding position $\tau_d$ is
in the round-$(i-1)$ interval $I^{i-1}_{s}$.
Accordingly,
there are $(2s+1)$
round-$i$ intervals to the left of $\tau_d$,
corresponding to
a total of $2^{l-i-1}(2s+1)$ pre-fixed intervals.
That is,
$\tau_d = q_{2^{l-i-1}(2s+1)}$ as we wanted to show.

Below we compute the minimum distance of the algorithm at time $d$.
Note that, after point $d$ is located, the intervals incurred by $\hat{Q}$ are
$I^i_0, I^i_1, \dots, I^i_{2s+1}, I^i_{2s+2}, I^{i-1}_{s+1}, \dots, I^{i-1}_{2^i-1}$.
By induction, we have that
\begin{itemize}
\item
the lengths of $I^{i}_0,I^{i}_1,\dots, I^{i}_{2s+2}$
are strictly decreasing,

\item
the lengths of $I^{i-1}_{s+1}, \dots, I^{i-1}_{2^i -1}$ are also strictly decreasing,

\item
$|I^{i-1}_{s}|= |I^i_{2s+1}|+|I^i_{2s+2}|$,
and

\item
$|I^{i-1}_{2^i -1}|< |I^{i-1}_{s}|< 2|I^{i-1}_{2^i -1}|$,
\end{itemize}
where the last inequality is because, for any two pre-fixed intervals
$(q_{j-1}, q_j)$ and $(q_{j'-1}, q_{j'})$ with $j< j'$,
we have
$|(q_{j'-1}, q_{j'})|< |(q_{j-1}, q_j)| < 2|(q_{j'-1}, q_{j'})|$.
Accordingly,
$|I^{i-1}_{2^i -1}|> \frac{|I^{i-1}_{s}|}{2} > |I^i_{2s+2}|$,
and the minimum distance at time $d$ is $d_{min}(d) = |I^i_{2s+2}|$.
As the left end-point of $I^i_{2s+2}$ is $\tau_d$ and the
 right end-point is $q_{2^{l-i-1}(2s+2)}$,
 we have
\begin{eqnarray*}
d_{min}(d) & = & q_{2^{l-i-1}(2s+2)} - q_{2^{l-i-1}(2s+1)} \\
& = & \left(\frac{1}{\sigma_r}\sum_{j=2^l}^{2^l-1+2^{l-i-1}(2s+2)} \frac{1}{j}\right)
- \left(\frac{1}{\sigma_r}\sum_{j=2^l}^{2^l-1+2^{l-i-1}(2s+1)} \frac{1}{j}\right)= \frac{1}{\sigma_r}\sum_{j = 2^l+2^{l-i-1}(2s+1)}^{2^l-1+2^{l-i-1}(2s+2)} \frac{1}{j},
\end{eqnarray*}
and Claim \ref{clm:1-d:formula} holds.
\end{proof}

%

\medskip

\begin{proof}[Proof of Claim \ref{clm:f(s)}]
For any $s<2^i-1$, we have

\begin{eqnarray*}
f(s+1) &=&
(2^{i}+s+2)\cdot
\sum\limits_{j=2^{l}+2^{l-i-1}(2s+3)}^{2^{l}-1+2^{l-i-1}(2s+4)}
\frac{1}{j}
= (2^{i}+s+2)\cdot
\sum\limits_{j=2^{l}+2^{l-i-1}(2s+1)+2^{l-i}}^{2^{l}-1+2^{l-i-1}(2s+2)+2^{l-i}} \frac{1}{j}\\
&=& (2^{i}+s+2) \cdot
\sum\limits_{j=2^{l}+2^{l-i-1}(2s+1)}^{2^{l}-1+2^{l-i-1}(2s+2)}\frac{1}{j+2^{l-i}}.
\end{eqnarray*}
It suffices to show that for all $s< 2^i-1$,
$f(s+1)-f(s)<0$:
that is,
\begin{equation}\label{equ:4}
f(s+1)-f(s)=
\sum\limits_{j=2^{l}+2^{l-i-1}(2s+1)}^{2^{l}-1+2^{l-i-1}(2s+2)}
\left(\frac{2^{i}+s+2}{j+2^{l-i}}-\frac{2^{i}+s+1}{j}\right)<0.
\end{equation}
Below we show that for each $j$ in the range of the summation,
$$\frac{2^{i}+s+2}{j+2^{l-i}}-\frac{2^{i}+s+1}{j}<0,$$
which is equivalent to
$(2^{i}+s+2)j<(j+2^{l-i})(2^{i}+s+1)$, or
$$j<2^{l-i}(2^{i}+s+1) = 2^l + 2^{l-i}(s+1) = 2^l + 2^{l-i-1}(2s+2).$$
However,
notice that the maximum value of $j$ in Equation \ref{equ:4} is
$2^{l}-1+2^{l-i-1}(2s+2)$,
thus all the inequalities above hold immediately.
Accordingly, Claim \ref{clm:f(s)} holds.
\end{proof}


%

\begin{proof}[Proof of Claim \ref{clm:g(i)}]
By definition,

\begin{eqnarray*}
g(i) &=& 2^{i+1}\cdot
\sum\limits_{j=2^{l+1}-2^{l-i-1}}^{2^{l+1}-1}\frac{1}{j}
= 2^{i+1}\cdot
\left(
\sum\limits_{j=2^{l+1}-2^{l-i-1}}^{2^{l+1}-2^{l-i-2}-1}\frac{1}{j} +
\sum\limits_{j=2^{l+1}-2^{l-i-2}}^{2^{l+1}-1}\frac{1}{j}
\right)\\
&=&
2^{i+1}\cdot
\left(
\sum\limits_{j=2^{l+1}-2^{l-i-2}}^{2^{l+1}-1}
\frac{1}{j-2^{l-i-2}}
+
\sum\limits_{j=2^{l+1}-2^{l-i-2}}^{2^{l+1}-1}\frac{1}{j}
\right) \\
&>& 2^{i+1} \cdot
\sum\limits_{j=2^{l+1}-2^{l-i-2}}^{2^{l+1}-1}\frac{2}{j}
= 2^{i+2} \cdot
\sum\limits_{j=2^{l+1}-2^{l-i-2}}^{2^{l+1}-1}\frac{1}{j}
= g(i+1),
\end{eqnarray*}
where the inequality
 is because
$\frac{1}{j-2^{l-i-2}}> \frac{1}{j}$ for all $j$ in the range of the summation.
Therefore Claim \ref{clm:g(i)} holds.
%
\end{proof}


\subsection{Proof for Theorem \ref{thm:1-dtight}}
\label{Appendix:Theorem2}


\paragraph{Theorem \ref{thm:1-dtight}.} (restated) {\em
There exists a deterministic polynomial-time online algorithm
for the ATWC problem, whose
competitive ratio can be arbitrarily close to
$2\ln 2$. Moreover, the running time is polynomial in $\frac{1}{\epsilon}$ for competitive ratio
$2\ln 2 + \epsilon$.
}

\begin{proof}
Note that $\sigma_{r}$ is strictly decreasing in $r$ and
$\lim_{l\to\infty}2\sigma_{2^l-1} = 2\ln 2$.
Thus for any small constant $\epsilon>0$, there exists $l$ such that
$2\sigma_{2^l-1}< 2\ln 2 + \epsilon$.
In particular,
it suffices to take
$l = \lceil \log_2 (\frac{2}{\epsilon}+1) -1 \rceil$.
The desired online algorithm first computes $l$ and $r = 2^l-1 \ (\leq \frac{2}{\epsilon})$,
then computes~$\tau$ and
runs Algorithm \ref{alg:1d-online} with $r$ and~$\tau$ on any online instance.

Since the selection of $l$ only depends on $\epsilon$ and not
on the input sequence, $l$ is a constant, and so is~$r$.
Given $l$ and $r$, the binary ordering $\tau$
can be constructed in time $O(r)= O(\frac{2}{\epsilon})$.
Given $r$ and~$\tau$,
when a point arrives or departs,
the running time of
Algorithm~\ref{alg:1d-online}
is polynomial in $|\hat{Q}|$,
thus polynomial in the
size of the input so far.
Accordingly, Theorem \ref{thm:1-dtight} holds.
%
\end{proof}

\subsection{Proof for Theorem \ref{thm:1-dexact}}
\label{Appendix:Theorem3}

\paragraph{Theorem \ref{thm:1-dexact}.} (restated) {\em
%
For any integer $d=2^i+s$ with $i\geq 0$ and $0\leq s\leq 2^i-1$, let
$\tau_d=
\frac{1}{\ln 2} \ln (1+\frac{2s+1}{2^{i+1}})=
\log_2 (1+\frac{2s+1}{2^{i+1}})$.
If Algorithm~\ref{alg:1d-online}
creates
the $d$-th new position in $\hat{Q}$ to be $\tau_d$,
 the
competitive ratio is exactly $2\ln 2$.
}

\begin{proof}
Similar to the proof of Lemma \ref{lem:1d-online},
the worst case happens
when each point arrives at different time and all points depart at the same time.
Arbitrarily fixing $d\geq 1$,
we consider
the minimum distance incurred by $\{\tau_1, \tau_2, \dots, \tau_d\}$ (at time $d$),
$d_{min}(d)$,
and the competitive ratio at time $d$,
$$apx(d) = \frac{1}{(d+1) d_{min}(d)}.$$
Similar to the proof of Claim \ref{clm:1-d:formula},
as $d$ grows,
$\frac{2s+1}{2^{i+1}}$ takes values $\frac{1}{2}, \frac{1}{4}, \frac{3}{4}, \frac{1}{8}, \frac{3}{8}, \frac{5}{8}, \frac{7}{8},\dots$ sequentially, following the breadth-first search on the infinite complete binary tree.
Therefore, when point $d$ is located,
the positions adjacent to
$\tau_d=\frac{1}{\ln 2} \ln (1+\frac{2s+1}{2^{i+1}})$ are
$d_{left} = \frac{1}{\ln 2} \ln (1+\frac{2s}{2^{i+1}})$ and
$d_{right} = \frac{1}{\ln 2} \ln (1+\frac{2s+2}{2^{i+1}})$,
with the minimum distance incurred by $\tau_d$ being either $d_{right}-\tau_d$ or $\tau_d - d_{left}$.
Note that, if $s = 0$
then $d_{left} = 0$,
which is the left end-point of the segment;
if $s=2^i-1$ then $d_{right}=1$,
which is the right end-point of the segment.
Moreover, if $1\leq s \leq 2^i-2$,
then
$\frac{2s}{2^{i+1}}$ and $\frac{2s+2}{2^{i+1}}$ can each
be written into
the form of
$\frac{2s'+1}{2^{j+1}}$ with $j\geq 0$ and $0\leq s'\leq 2^j-1$,
by taking out the greatest common diviser of the denominator and the enumerator.
Thus each of the two positions, $d_{left}$ and $d_{right}$,
corresponds to some position $\tau_{d'}$
where $d'=2^j+s'<d$.
Since
\begin{eqnarray*}
&& d_{right} - \tau_d = \frac{1}{\ln 2} \ln (1+\frac{2s+2}{2^{i+1}}) - \frac{1}{\ln 2} \ln (1+\frac{2s+1}{2^{i+1}})
= \frac{1}{\ln 2} \ln (1+\frac{1}{2^{i+1}+2s+1}) \\
&<& \frac{1}{\ln 2} \ln (1+\frac{1}{2^{i+1}+2s})
= \frac{1}{\ln 2} \ln (1+\frac{2s+1}{2^{i+1}}) - \frac{1}{\ln 2} \ln (1+\frac{2s}{2^{i+1}})
= \tau_d - d_{left},
\end{eqnarray*}
we have
$$d_{min}(d) = d_{right} - \tau_d = \frac{1}{\ln 2} \ln (1+\frac{1}{2^{i+1}+2s+1}).$$
Thus
\begin{eqnarray*}
apx(d) &=&
\frac{\ln2}{(2^{i}+s+1)\ln (1+\frac{1}{2^{i+1}+2s+1})} < \frac{\ln2}{(2^{i}+s+1)/(2^{i+1}+2s+2)} = 2\ln 2,
\end{eqnarray*}
where the inequality is because
$\ln(1+\frac{1}{x}) > \frac{1}{x+1}$ for any $x\geq 1$.
Moreover, when $d=2^i$ and $i\to \infty$,
$$
\lim_{i\to \infty} apx(2^i) = \frac{\ln2}{(2^{i}+1)\ln (1+\frac{1}{2^{i+1}+1})} = 2\ln 2.
$$
Therefore the desired competitive ratio is
$\sup_{d} apx(d) = 2\ln 2$
and
Theorem \ref{thm:1-dexact} holds.
\end{proof}

\paragraph{Remark:} Not only the algorithm  in Theorem \ref{thm:1-dexact} cannot be computed in polynomial time,
but it also cannot be properly approximated by just rounding each created position to a fixed precision.
The reason is that the online algorithm does not know $m$ beforehand and cannot adjust the precision according to it.
When the number of simultaneously present points grows,
 the optimal minimum distance may be much smaller than the precision,
the relative positions of the points may be completely
different, and the competitive ratio may be arbitrarily bad.
Instead, the polynomial-time algorithm in Algorithm \ref{alg:1d-online} deals with rational numbers and can adjust the precision as the number of present points grows.

\section{Proofs for Section \ref{sec:2D:ATWC}}\label{apx:2D}


\subsection{Proof for Lemma \ref{clm:boundsdisp}}
\label{appendix:clm:boundsdisp}

\paragraph{Lemma \ref{clm:boundsdisp}.} (restated) {\em
For any $n\geq 1$, $\frac{2}{5+\sqrt{2\sqrt{3}n}} \leq Disp(n; P) \leq \frac{2}{2+\sqrt{2\sqrt{3}n}}$,
where $P=[0, 1]^2$ is the unit square.
}

\begin{proof}
We first prove the upper-bound.
Recall that $DP(n; P)$ is
optimal radius for the
dispersal packing problem with $n$ balls.
By Claim \ref{clm:Disp-DP},
\begin{equation}
\label{eq:dp-disp}
 DP(n;P) = \frac{Disp(n;P)}{2(1-Disp(n;P))},
\end{equation}
as the radius of the insphere in the unit square $P$ is
$x = \frac{1}{2}$.
Because
the disk packing density in the unit square $P$
is upper-bounded by the disk packing density in $2$-dimensional infinite space,
which is
$\frac{\pi}{2\sqrt{3}}$~\cite{fejes1942dichteste},
we have $DP(n;P)\leq \frac{1}{\sqrt{2\sqrt{3}n}}$
by comparing the total area of
the $n$ disks packed in $P$ with $P$'s own area.
Accordingly,
\begin{equation}\label{eq:uperbpund:disp}
Disp(n; P) \leq \frac{2}{2+\sqrt{2\sqrt{3}n}}.
\end{equation}
In fact, $DP(n;P)\to \frac{1}{\sqrt{2\sqrt{3}n}}$ as $n\to\infty$,
thus Equation \ref{eq:uperbpund:disp} is tight as $n\to\infty$.

Next, we prove the lower-bound using
 hexagonal packing
in $P$, as illustrated in Figure \ref{fig:hex}.
In particular, we would like to find a radius $r^*$ such that,
each row of the packing
contains
$\left\lceil\sqrt{\frac{\sqrt{3}n}{2}}\right\rceil$
disks
 and
 there are $\left\lceil\sqrt{\frac{2n}{\sqrt{3}}}\right\rceil$ rows in total.
If so then we can pack $n$ disks in $P$
 with radius $r^*$, which implies
 $DP(n; P) \geq r^*$.

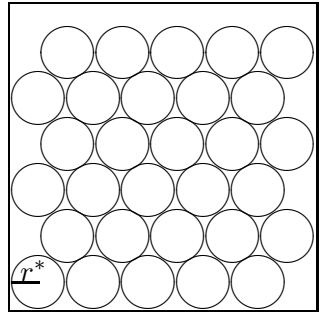
\begin{figure}[htbp]
\begin{center}
\setlength{\unitlength}{1cm}
\begin{picture}(4.1,4.1)

\thinlines

\multiput(0.39,0.39)(0.73,0){5}{\circle{0.75}}
\multiput(0.78,1)(0.73,0){5}{\circle{0.75}}
\multiput(0.39,1.61)(0.73,0){5}{\circle{0.75}}
\multiput(0.78,2.22)(0.73,0){5}{\circle{0.75}}
\multiput(0.39,2.83)(0.73,0){5}{\circle{0.75}}
\multiput(0.78,3.44)(0.73,0){5}{\circle{0.75}}

\put(0,0){\line(0,1){4.1}}
\put(0,0){\line(1,0){4.1}}
\put(4.1,0){\line(0,1){4.1}}
\put(0,4.1){\line(1,0){4.1}}

\put(0.03,0.38){\line(1,0){0.38}}
\put(0.15,0.4){$r^*$}

\end{picture}
\caption{Hexagonal packing.}
\label{fig:hex}
\end{center}
\end{figure}


%
%
%
%
%
%

Notice that, in order to pack $\left\lceil\sqrt{\frac{\sqrt{3}n}{2}}\right\rceil$
disks in a row,
the radius $r^*$
must satisfy
$$\left(2\left\lceil\sqrt{\frac{\sqrt{3}n}{2}}\right\rceil + 1\right)r^* \leq 1;$$
in order to pack
$\left\lceil\sqrt{\frac{2n}{\sqrt{3}}}\right\rceil$ rows,
the radius $r^*$ must satisfy
$$\left(\sqrt{3}\left(\left\lceil\sqrt{\frac{2n}{\sqrt{3}}}\right\rceil-1\right)+2\right) r^*\leq 1.$$
It is easy to verify that both inequalities are
satisfied by $r^*=\frac{1}{3+\sqrt{2\sqrt{3}n}}$.
Indeed,
$$\left(2\left\lceil\sqrt{\frac{\sqrt{3}n}{2}}\right\rceil + 1\right) \cdot \frac{1}{3+\sqrt{2\sqrt{3}n}}
\leq
\frac{2\sqrt{\frac{\sqrt{3}n}{2}} + 3}{3+\sqrt{2\sqrt{3}n}}=1$$
and
$$\left(\sqrt{3}\left(\left\lceil\sqrt{\frac{2n}{\sqrt{3}}}\right\rceil-1\right)+2\right)
\cdot \frac{1}{3+\sqrt{2\sqrt{3}n}}
\leq \frac{\sqrt{3}\cdot \sqrt{\frac{2n}{\sqrt{3}}}+2}{3+\sqrt{2\sqrt{3}n}}<1.$$
Accordingly, $DP(n; P) \geq \frac{1}{3+\sqrt{2\sqrt{3}n}}$
and
\begin{equation}\label{eq:lowerbound:disp}
Disp(n; P) \geq \frac{2}{5+\sqrt{2\sqrt{3}n}}
\end{equation}
by Equation \ref{eq:dp-disp}.
Thus Lemma \ref{clm:boundsdisp} holds.
\end{proof}

\subsection{Proof for Theorem \ref{thm:2d-lower}}
\label{appendix:thm:2d:lower}

\paragraph{Theorem \ref{thm:2d-lower}.} (restated) {\em
No online algorithm achieves a competitive ratio better than 1.183 for the 2-dimensional ATWC problem
in a square.
}

\begin{proof}
%
Arbitrarily fixing an online algorithm $\cal A$ and denoting its competitive ratio by $\sigma$,
we show that $\sigma \geq 1.183$. To do so,
arbitrarily fix a positive integer $r$
and consider the set of positive integers
$$\Delta = \left\{\delta' \ :\  \frac{2}{5+\sqrt{2\sqrt{3}(r+\delta')}}
\geq \frac{1}{2+\sqrt{2\sqrt{3}r}}\right\}.$$
Solving the inequality for $\delta'$, we have
%
$$
\delta' \leq \left\lfloor\frac{(2\sqrt{2\sqrt{3}r}-1)^{2}}{2\sqrt{3}} -r\right\rfloor =
\left\lfloor 3r-4\sqrt{\frac{r}{2\sqrt{3}}}+ \frac{1}{2\sqrt{3}}\right\rfloor.
$$
Letting $\delta = \max \Delta$, we have
$$\delta = \left\lfloor 3r-4\sqrt{\frac{r}{2\sqrt{3}}}+ \frac{1}{2\sqrt{3}}\right\rfloor
\ \mbox{ and }\ \Delta = \{1, \dots, \delta\}.$$
Following Lemma \ref{clm:boundsdisp}, it is easy to see that
$Disp(r+\delta;P) \geq \frac{Disp(r;P)}{2}$.
We now consider the input sequence
$S = ((1, r+\delta + 1),(2, r+\delta + 1),\dots,(r+\delta, r+\delta+1))$ and
 prove
\begin{equation}\label{eq:2d:lb}
r\pi \left(\frac{Disp(r;P)}{2\sigma} \right)^2
+\sum\limits_{i=1}^{\delta}
\pi\left(
\frac{Disp(r+i;P)}{\sigma} -\frac{Disp(r;P)}{2\sigma}
\right)^2
\leq \left(1-\frac{Disp(r;P)}{\sigma}\right)^2.
\end{equation}
Indeed, after the first $r$ points arrive,
$d_{min}(r) \geq \frac{Disp(r; P)}{\sigma}$ by assumption,
thus we can pack~$r$ disks with radius $\frac{Disp(r; P)}{2\sigma}$
in the square, centered at the $r$ positions where the points are located.
Their total area is
$$S_1 = r\pi \left(\frac{Disp(r;P)}{2\sigma}\right)^2.$$

Next, for each arriving point $r+i$ with $1\leq i\leq \delta$,
the distance between itself and any
other point $j$ with $1\leq j \leq r$ is at least

\begin{equation}\label{equ:8+1}
d_{min}(r+i) \geq \frac{Disp(r+i;P)}{\sigma}
= \left(\frac{Disp(r+i;P)}{\sigma} -\frac{Disp(r;P)}{2\sigma}\right) + \frac{Disp(r;P)}{2\sigma},
\end{equation}
and
the distance between itself and any
other point $r+j$ with $1\leq j <i$ is at least

\begin{eqnarray}\label{equ:8+2}
&& d_{min}(r+i) \geq \frac{Disp(r+i;P)}{\sigma} \nonumber \\
&=&
\left(\frac{Disp(r+i;P)}{\sigma} -\frac{Disp(r;P)}{2\sigma}\right) + \frac{Disp(r;P)}{2\sigma} \nonumber \\
&\geq&  \left(\frac{Disp(r+i;P)}{\sigma} -\frac{Disp(r;P)}{2\sigma}\right) +
\left(\frac{Disp(r+j;P)}{\sigma} -\frac{Disp(r;P)}{2\sigma}\right),
\end{eqnarray}
where the last inequality is because $Disp(n; P)$ is non-increasing in $n$ and
$Disp(r; P) \geq Disp(r+j; P)$.
By the definition of $\delta$, for any $i\leq \delta$ we have
$$\frac{Disp(r+i;P)}{\sigma} -\frac{Disp(r;P)}{2\sigma}
\geq \frac{Disp(r+\delta;P)}{\sigma} -\frac{Disp(r;P)}{2\sigma}
\geq 0,$$
 and the left-hand side of the first inequality is a well-defined radius.
Accordingly,
by Equations \ref{equ:8+1} and \ref{equ:8+2},
if we put a disk
with radius
$\frac{Disp(r+i;P)}{\sigma} -\frac{Disp(r;P)}{2\sigma}$
centered at the position of point $r+i$ for each $1\leq i \leq \delta$,
then
the disk $r+i$ does not overlap with the first $r$ disks
whose radius is $\frac{Disp(r; P)}{2\sigma}$,
neither does it overlap with
any disk $r+j$ with $1\leq j< i$, whose radius is $\frac{Disp(r+j;P)}{\sigma} -\frac{Disp(r;P)}{2\sigma}$.
Moreover,
since the radius of disk $r+i$ is at most $d_{min}(r+i)$,
it does not overlap with the boundary of $P$ either.
By induction,
all  $r+\delta$ disks do not overlap with each other or with the boundary of~$P$,
and they are a {\em non-uniform packing} in $P$.
The total area of all the disks $r+i$ with $1\leq i \leq \delta$ is
$$S_2 = \sum\limits_{i=1}^{\delta}
\pi\left(
\frac{Disp(r+i;P)}{\sigma} -\frac{Disp(r;P)}{2\sigma}
\right)^2.$$


Finally,
since
$$d_{min}(r) - \frac{Disp(r; P)}{2\sigma} \geq
\frac{Disp(r; P)}{2\sigma}
$$
and
$$d_{min}(r+i) - \left(\frac{Disp(r+i;P)}{\sigma} -\frac{Disp(r;P)}{2\sigma}\right)
\geq
\frac{Disp(r;P)}{2\sigma} \ \ \forall  1\leq i\leq \delta,$$
these $r+\delta$ disks are actually packed within
the square
$[\frac{Disp(r; P)}{2\sigma}, 1-\frac{Disp(r; P)}{2\sigma}]^2$.
Thus
$$S_1+ S_2 \leq \left(1-\frac{Disp(r;P)}{\sigma}\right)^2$$
and Inequality \ref{eq:2d:lb} holds.
%

Replacing
$Disp(r+i;P)$ and $Disp(r;P)$ in
the left-hand side of Inequality \ref{eq:2d:lb}
by proper lower- and upper-bounds
from Lemma \ref{clm:boundsdisp}
and
moving $\sigma$ to the right-hand side of the inequality, we have
\begin{eqnarray*}
& & r\pi \left(\dfrac{1}{5+\sqrt{2\sqrt{3}r}}\right)^2
+\sum\limits_{i=1}^{\delta}\pi\left(
\dfrac{2}{5+\sqrt{2\sqrt{3}(r+i)}} - \dfrac{1}{2+\sqrt{2\sqrt{3}r}}
\right)^2 \\
& = & r\pi \left(\dfrac{1}{5+\sqrt{2\sqrt{3}r}}\right)^2
+ \sum\limits_{i=1}^{\delta}\pi\left(\dfrac{2}{5+\sqrt{2\sqrt{3}(r+i)}}\right)^2
+ \sum\limits_{i=1}^{\delta}\pi\left(\dfrac{1}{2+\sqrt{2\sqrt{3}r}}\right)^2 \\
& & -\sum\limits_{i=1}^{\delta}
\dfrac{4 \pi}{\left(5+\sqrt{2\sqrt{3}(r+i)}\right)\left(2+\sqrt{2\sqrt{3}r}\right)} \\
&\leq& \sigma^2 \left(1-\frac{Disp(r;P)}{\sigma}\right)^2.
\end{eqnarray*}
Replacing $\delta$ with
$\left\lfloor 3r-4\sqrt{\frac{r}{2\sqrt{3}}}+ \frac{1}{2\sqrt{3}}\right\rfloor$
and letting $r\to\infty$,
we calculate the limits term by term as follows.

\begin{eqnarray*}
&&\lim\limits_{r\to \infty} r\pi \left(\dfrac{1}{5+\sqrt{2\sqrt{3}r}}\right)^2 = \frac{\pi}{2\sqrt{3}}; \\
&&\lim\limits_{r\to \infty} \sum\limits_{i=1}^{\delta}
\pi\left(\dfrac{2}{5+\sqrt{2\sqrt{3}(r+i)}}\right)^2
\geq \lim\limits_{r\to \infty}
\int_{r+1}^{r+\delta+1}
\pi\left(\dfrac{2}{5+\sqrt{2\sqrt{3}x}}\right)^2 dx \\
& & \quad \quad = \lim\limits_{r\to\infty} \frac{4\pi}{\sqrt{3}}
\ln \left(\frac{\sqrt{2\sqrt{3}(r+\delta+1)}+5}{\sqrt{2\sqrt{3}(r+1)}+5}\right)
= \lim\limits_{r\to\infty} \frac{4\pi}{\sqrt{3}} \ln \left(\frac{\sqrt{2\sqrt{3}\cdot 4r}}{\sqrt{2\sqrt{3}r}}\right)
= \frac{4\pi \ln 2}{\sqrt{3}}; \\
&&\lim\limits_{r\to \infty}
\sum\limits_{i=1}^{\delta}
\pi \left(\dfrac{1}{2+\sqrt{2\sqrt{3}r}}\right)^2
= \lim\limits_{r\to \infty}
\pi \delta \left(\dfrac{1}{2+\sqrt{2\sqrt{3}r}}\right)^2
= \lim\limits_{r\to \infty}
 \dfrac{3\pi r}{2\sqrt{3}r}
= \frac{\sqrt{3}\pi}{2}; \\
&&\lim\limits_{r\to \infty}
\sum\limits_{i=1}^{\delta}
\dfrac{4\pi}{\left(5+\sqrt{2\sqrt{3}(r+i)}\right) \left(2+\sqrt{2\sqrt{3}r}\right)}
\leq \lim\limits_{r\to \infty}
\sum\limits_{i=1}^{\delta}
\dfrac{4\pi}{\sqrt{2\sqrt{3}(r+i)} \cdot \sqrt{2\sqrt{3}r}} \\
& &  \quad \quad  = \frac{2\pi}{\sqrt{3}}\lim\limits_{r\to \infty}
\sum\limits_{i=1}^{\delta}
\dfrac{1}{\sqrt{r(r+i)}}
\leq \frac{2\pi}{\sqrt{3}} \lim\limits_{r\to \infty} \int_{0}^{\delta} \frac{1}{\sqrt{r(r+x)}}dx
= \frac{2\pi}{\sqrt{3}} \lim\limits_{r\to \infty} \frac{2}{\sqrt{r}}(\sqrt{r+\delta} - \sqrt{r}) \\
& &  \quad \quad  = \frac{2\pi}{\sqrt{3}}  \lim\limits_{r\to \infty} 2\left(\sqrt{1+\frac{\delta}{r}}-1\right)
= \frac{4\pi}{\sqrt{3}}; \\
&& \lim\limits_{r\to \infty}
\sigma^2 \left(1-\frac{Disp(r;P)}{\sigma}\right)^2 = \sigma^2.
\end{eqnarray*}
Accordingly,
$$\frac{\pi}{2\sqrt{3}} + \frac{4\pi\ln 2}{\sqrt{3}} + \frac{\sqrt{3}\pi}{2} - \frac{4\pi}{\sqrt{3}}
= \frac{2(2\ln 2 - 1)\pi}{\sqrt{3}}\leq \sigma^2$$
 and
 $\sigma \geq \sqrt{\frac{2(2\ln 2 - 1)\pi}{\sqrt{3}}} \approx 1.183$.
Therefore Theorem \ref{thm:2d-lower} holds.
\end{proof}

\subsection{Algorithm \ref{alg:2dcreation}}
\label{appendix:alg:2dcreation}
In this section we define Algorithm \ref{alg:2dcreation}, the Position Creation Phase used by Algorithm \ref{alg:2donline}.
It proceeds in rounds $i = 0, 1, \dots $, and Figure \ref{fig:round-0} illustrates
the colored areas after round 0.

\begin{algorithm}[htbp]\nonumber
\caption{{\hspace{-3pt}{\bf .}} The Position Creation Phase.}
  \label{alg:2dcreation}

In this phase, our algorithm
repeatedly splits the
rectangles into
smaller ones
by creating a position at its center.
The internal areas (green, pink and red)
will expand and the outer areas (orange, yellow and blue)
will still be the strips with width of a single rectangle.

\hspace{10pt}
After $q_{36}$ is added to $\hat{Q}$,
the position creation phase proceeds in rounds $i = 0, 1, \dots$ such that, after
each round, each of the previous rectangles has been divided into
four sub-rectangles by a position at its center.
Accordingly, the number of rectangles created in round $i\geq 0$,
referred to as {\em round-$i$} rectangles,
is $49\cdot 4^{i+1}$; the number of intervals in each dimension is $7\cdot 2^{i+1}$;
and the number of positions in $\hat{Q}$ is $(7\cdot 2^{i+1}-1)^2$, corresponding to
the grid vertices not on the boundary.
Moreover, the positions created in round $i$
are positions $(7\cdot 2^{i}-1)^2+1\leq n \leq (7\cdot 2^{i+1}-1)^2$.

\hspace{10pt}
The algorithm keeps the round number $i$ and updates as it proceeds. (Or, given $n>36$,
we can find in time $O(\log n)$ the round $i$ in which position $n$ is created
from the formula above.)
The location of position $n$ is decided by the following five cases.
We provide the ranges of position $n$ for each case,
so that it can be easily decided which case position $n$ belongs to.

\begin{enumerate}[{Case} 1.]
\item
If there is a round-$(i-1)$ square in the green area
whose center is not in $\hat{Q}$, arbitrarily
pick such a rectangle and put position $n$ at its center,
splitting it into four round-$i$
rectangles.

We claim these are positions $(7\times 2^i-1)^2+1\leq n\leq = 65\times 2^{2i}-22\times2^i+2$.

\item
Else, if there is a round-$(i-1)$ rectangle in the pink areas
whose center is not in $\hat{Q}$,
arbitrarily pick such a rectangle and put position $n$ at its center,
splitting it into four round-$i$
rectangles.

We claim these are positions
$65\times 2^{2i}-22\times2^i+3 \leq n \leq 89\times 2^{2i}-36\times2^i+4$.

%
%
%
%
\item
Else, if there exists a round-$(i-1)$ rectangle in the red area
 (which is actually a square)
whose center
is not in $\hat{Q}$, arbitrarily
pick such a rectangle and create position $n$ at its center,
splitting it into four round-$i$
rectangles (again, squares).

We claim these are positions
$89\times 2^{2i}-36\times2^i+5 \leq n \leq 98\times2^{2i}-42\times2^i+5$.

\item
Else, we distinguish two sub-cases.

First, if there exists a round-$(i-1)$ rectangle in the orange area and
the two blue areas whose center
is not in
$\hat{Q}$, arbitrarily
pick such a rectangle and create position $n$ at its center,
splitting it into four round-$i$
rectangles.

Otherwise, if there exists a vertex of a round-$i$ rectangle in the green and the orange
areas which
is (a) not in $\hat{Q}$,
(b) not
adjacent to the blue or the pink areas and (c) not on the boundary,
then arbitrarily
pick such a vertex for position $n$.

We claim these are positions
 $98\times2^{2i}-42\times2^i+6\leq n \leq 130\times2^{2i}-36\times2^i+2$.

\item
Else, among all the centers of the round-$(i-1)$ rectangles in the yellow area
and all the vertices of the round-$i$ rectangles, which
is not in $\hat{Q}$ and not on the boundary,
arbitrarily pick one for position $n$.

We claim these are positions
$130\times2^{2i}-36\times2^i+3 \leq n \leq (7\times 2^{i+1}-1)^2$.
\end{enumerate}

After all positions in round $i$ are created,
the yellow, blue, and orange areas shrink by one round-$i$ rectangle
towards the boundary: they only contain rectangles adjacent to the boundary.
The area released by orange is taken by green;
that released by blue is taken by pink;
and that released by yellow is taken by blue, pink, and red.
Figure \ref{fig:round-0} illustrates
the colored areas after round~0.
\end{algorithm}

\begin{figure}[htbp]
\begin{center}
\begin{tikzpicture}

\fill [color=green](0,0) rectangle (3.6,3.6);
\fill [color=red](0,0) rectangle (-2,-2);
\fill [color=pink](0,0) rectangle (-2,3.6);
\fill [color=pink](0,0) rectangle (3.6,-2);
\fill [color=yellow](-3,-3) rectangle (-2,4.8);
\fill [color=yellow](-3,-3) rectangle (4.8,-2);
\fill [color=orange](3.6,0) rectangle (4.8,4.8);
\fill [color=orange](0,3.6) rectangle (4.8,4.8);
\fill [color=blue](-2,3.6) rectangle (0,4.8);
\fill [color=blue](3.6,0) rectangle (4.8,-2);

\fill [color=green](0,0) rectangle (4.2,4.2);
\fill [color=red](-2,0) rectangle (-2.5,-2.5);
\fill [color=red](0,-2) rectangle (-2.5,-2.5);
\fill [color=pink](0,0) rectangle (-2.5,4.2);
\fill [color=pink](0,0) rectangle (4.2,-2.5);
\fill [color=blue](-2.5,4.2) rectangle (0,4.8);
\fill [color=blue](4.2,0) rectangle (4.8,-2.5);

\draw[xstep=.5cm,ystep=.5] (0,0) grid (-3,-3);
\draw[xstep=.5cm,ystep=.6] (0,0) grid (-3,4.8);
\draw[xstep=.6cm,ystep=.5] (0,0) grid (4.8,-3);
\draw[xstep=.6cm,ystep=.6] (0,0) grid (4.8,4.8);

\path (0.2,0.2) node {$1$};
\path (2.6,2.6) node {$2$};
\path (0.2,2.6) node {$3$};
\path (2.6,0.2) node {$4$};
\path (1.4,1.4) node {$5$};
\path (0.2,0.2) node {$1$};
\path (1.4,-0.8) node {$6$};
\path (-0.8,1.4) node {$7$};
\path (-0.8,-0.8) node {$8$};
\path (2.6,1.4) node {$9$};
\path (1.4,2.6) node {$10$};
\path (3.8,3.8) node {$11$};
\path (3.8,2.6) node {$12$};
\path (3.8,1.4) node {$13$};
\path (3.8,0.2) node {$14$};
\path (2.6,3.8) node {$15$};
\path (1.4,3.8) node {$16$};
\path (0.2,3.8) node {$17$};
\path (0.2,1.4) node {$18$};
\path (1.4,0.2) node {$19$};
\path (2.6,-0.8) node {$20$};
\path (-0.8,2.6) node {$21$};
\path (-0.8,0.2) node {$22$};
\path (0.2,-0.8) node {$23$};
\path (3.8,-0.8) node {$24$};
\path (3.8,-1.8) node {$25$};
\path (2.6,-1.8) node {$26$};
\path (1.4,-1.8) node {$27$};
\path (0.2,-1.8) node {$28$};
\path (-0.8,-1.8) node {$29$};
\path (-1.8,-1.8) node {$30$};
\path (-1.8,-0.8) node {$31$};
\path (-1.8,0.2) node {$32$};
\path (-1.8,1.4) node {$33$};
\path (-1.8,2.6) node {$34$};
\path (-1.8,3.8) node {$35$};
\path (-0.8,3.8) node {$36$};

%

\end{tikzpicture}

\caption{The round-0 rectangles and colored areas for Algorithm \ref{alg:2dcreation}.
The positions $q_1, \dots, q_{36}$ are still labelled,
in order to compare with Figure \ref{fig:36p:2nd}.}
\label{fig:round-0}
\end{center}
\end{figure}
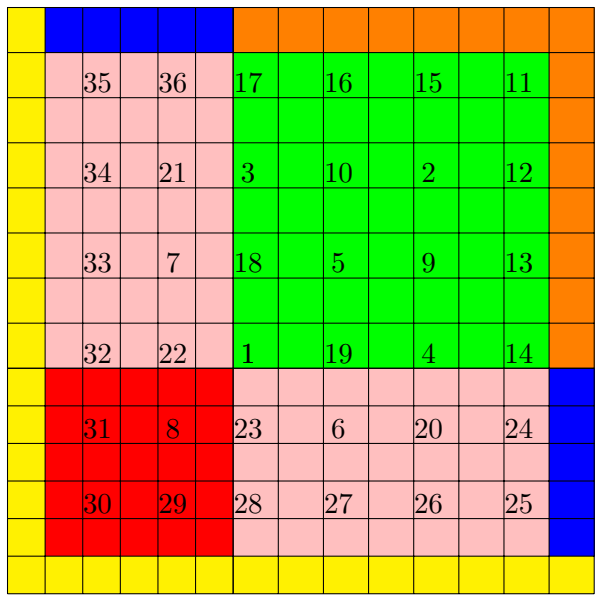

\subsection{Some Intuition for the Set $Q$ of Pre-fixed Positions}
\label{sec:intuition}

To help understanding the performance of
Algorithm \ref{alg:2donline},
we first provide
some intuition for
the selection of the first 36 positions in $Q$.
Similar to those for Algorithm \ref{alg:1d-online},
they cannot be too unevenly distributed,
and cannot be too evenly distributed either.
%
That is why
$q_1$ is not
at the center of the square.
Without loss of generality, it is closer to the left and the bottom boundary.
Letting $y$ be its distance
to the left (and the bottom) boundary,
$\frac{Disp(1; P)}{y} = \frac{1}{2y}$
will be roughly the competitive ratio.
Again similar to Algorithm \ref{alg:1d-online},
$q_1$'s distance to the right and the top boundary, $1-y$, is larger than $y$
but smaller than $2y$.

To make the minimum distance shrink as little as possible,
$q_2$ is put in the center of the larger square, to the upper-right of $q_1$.
Note that
the best position for $q_2$ given $q_1$
is of
equal distance to
$q_1$ and the boundary,
 as illustrated by Figure \ref{fig:point2}. We instead put it in the center of the upper-right square
 so as to leave enough space for $q_3$ and $q_4$.
 Indeed, $q_3$ and $q_4$ are symmetric and at the grid vertices
 induced by $q_1$ and $q_2$, so that adding them to $\hat{Q}$ does not
 shrink the minimum distance.

\begin{figure}[htbp]
\begin{minipage}[t]{0.5\textwidth}
\begin{center}
\setlength{\unitlength}{0.5cm}
\thicklines
\begin{picture}(8.6,8.6)
\put(0,0){\line(0,1){8}}

\put(3,0){\line(0,1){8}}
\put(5,5){\line(0,1){3}}

\put(8,0){\line(0,1){8}}

\put(0,0){\line(1,0){8}}

\put(0,3){\line(1,0){8}}
\put(5,5){\line(1,0){3}}

\put(0,8){\line(1,0){8}}

\put(3,3){\line(1,1){2}}

\put(2.3,2.3){$1$}
\put(1.2, 2.5){$y$}
\put(5.2,5.2){$2$}
\put(6,4.3){$d$}
\put(4.4,6){$d$}
\put(4.3,3.7){$d$}

\end{picture}
\caption{The best position for $q_2$ given $q_1$.}
\label{fig:point2}
\end{center}
\end{minipage}
\begin{minipage}[t]{0.5\textwidth}
\begin{center}
\setlength{\unitlength}{0.5cm}
\thicklines
\begin{picture}(8.6,8.6)
\put(0,0){\line(0,1){8}}

\put(3,0){\line(0,1){8}}
\put(0,5.5){\line(1,0){8}}
\put(5.5,0){\line(0,1){8}}

\put(8,0){\line(0,1){8}}

\put(0,0){\line(1,0){8}}

\put(0,3){\line(1,0){8}}

\put(0,8){\line(1,0){8}}


\put(2.3,2.3){$1$}
\put(5.6,5.6){$2$}
\put(2.4,5.6){$3$}
\put(5.6,2.4){$4$}

\put(1.3,6.3){$A$}
\put(4.0,6.3){$B$}
\put(6.6,6.3){$C$}
\put(1.3,3.8){$D$}
\put(4.0,3.8){$E$}
\put(6.6,3.8){$F$}
\put(1.3,1.3){$G$}
\put(4.0,1.3){$H$}
\put(6.6,1.3){$I$}

\end{picture}
\caption{After the first four positions.}
\label{fig:point5}
\end{center}
\end{minipage}
\end{figure}

After the first four positions,
 the square is split
into 9 small rectangles, from $A$ to $I$, as shown in Figure \ref{fig:point5}.
Let us now consider $q_5$.
Notice that among the four upper-right squares,
$E$ is better than $B, C$ and $F$,
as the latter three have the same size with $E$
but are adjacent to the boundaries.
If we put $q_5$ at the center of $E$,
then eventually we will slice the area to the upper-right of $q_1$
into four intervals in each dimension, all of length $\frac{1-y}{4}$.

Another possibility is to put $q_5$ in $G$.
However, note that the center of $G$ is not the
optimal position in $G$, for the same reason
as the choice of $q_2$.
We will choose $y$ such that
the center of $E$ is better than the optimal position in $G$.
In fact, a better competitive ratio is achieved
if we slice $G$ into
three intervals instead of four in each dimension, all of length $\frac{y}{3}$.
Let $x = \frac{y}{3}$ and $c x = \frac{1-y}{4}$ be the
corresponding lengths of the intervals after all the slicing,
as illustrated by Figure \ref{fig:36p:2nd}.
We have $x = \frac{1}{3+4c}$
and the minimum distance incurred by putting $q_5$ at the center of $E$ is
$\sqrt{2}cx$.

After the grid with 36 internal vertices is created,
the positions on the grid
appear naturally by always selecting the next optimum.
In order to find this optimal position
and compute the corresponding minimum distance easily,
in a 1-dimensional space we would choose $x< cx < 2x$ as in Algorithm \ref{alg:1d-online}.
In a 2-dimensional space we instead want $x<cx< \sqrt{2}x$, that is, $1<c< \sqrt{2}$,
because the distances are $L_2$-norm.
The resulting ordering for $Q$ is shown in Figure \ref{fig:36p:2nd}
and the
minimum distances after each position $n$ are shown in Table~\ref{table:Q}, grouped
into 7 cases.

\begin{table}[htbp]
\renewcommand{\arraystretch}{1.2}
\begin{center}
\begin{tabular}{|c|c|c|c|c|c|c|c|}
\hline
Case&1&2&3&4&5&6&7\\ \hline
$n$ & $1$ & $2,3,4$ & $5$ & $6,7$ &$8$ & $9,\cdots,17$ & $18,\cdots, 36$ \\ \hline
$d_{min}(n)$ & $3x$ & $2cx$ & $\sqrt{2}cx$ & $\sqrt{c^{2}+1}x$ & $\sqrt{2}x$ & $cx$ & $x$ \\  \hline
\end{tabular}
\end{center}

\caption{The minimum distances as positions in $Q$ are created.}
\label{table:Q}
\end{table}

\subsection{Proof for Theorem \ref{thm:2d-worst}}
\label{App:thm:2d-worst}

\paragraph{Theorem \ref{thm:2d-worst}.} (restated) {\em
Algorithm \ref{alg:2donline} runs in polynomial time and
it is $1.591$-competitive for the 2-dimensional online ATWC problem in a square.
}
\medskip

We first show that the algorithms are well defined; see the claim below.

\begin{claim}\label{clm:position-n}
The ranges of position $n$ for each case in Algorithm \ref{alg:2dcreation} are all correct.
\end{claim}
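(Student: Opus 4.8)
The plan is to prove the claim by an inductive counting argument over the round number~$i$, matching the number of positions created in each case against the length of the claimed interval. Since Algorithm~\ref{alg:2dcreation} processes the five cases in order and assigns consecutive indices, it suffices to show two things: that round~$i$ begins with position $(7\cdot 2^i-1)^2+1$ and ends with position $(7\cdot 2^{i+1}-1)^2$ (which follows from the fact that round~$i$ refines the grid from $7\cdot 2^i$ to $7\cdot 2^{i+1}$ intervals per side, creating $(7\cdot 2^{i+1}-1)^2-(7\cdot 2^i-1)^2=147\cdot 4^i-14\cdot 2^i$ new internal vertices), and that the number $N_c$ of positions created in each case~$c$ equals the length of the corresponding interval.

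First I would establish, by induction on $i$ and using the expand/shrink rules applied at the end of each round (orange's released area passes to green, blue's to pink, yellow's to blue/pink/red), the exact dimensions of every colored region at the start of round~$i$, measured in round-$(i-1)$ rectangles. Concretely, I expect to show that the green region is a $(4\cdot 2^i-1)\times(4\cdot 2^i-1)$ block, the red region is a $(3\cdot 2^i-1)\times(3\cdot 2^i-1)$ block, the two pink strips are each a $(4\cdot 2^i-1)\times(3\cdot 2^i-1)$ block (so together they hold $2(4\cdot 2^i-1)(3\cdot 2^i-1)=24\cdot 4^i-14\cdot 2^i+2$ rectangles, bridging the green and red corners), and the outer orange, blue, and yellow strips each have width one rectangle. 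The base case $i=0$ is read directly off Figure~\ref{fig:36p:2nd}: green is $3\times 3$, red is $2\times 2$, and pink is two $3\times 2$ blocks, as required; the round-$0$ configuration in Figure~\ref{fig:round-0} then seeds the induction.

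Next I would count $N_c$ for each case. Cases~1,~2, and~3 are the cleanest: each creates exactly one center position per round-$(i-1)$ rectangle in the green, pink, and red regions, so $N_1=(4\cdot 2^i-1)^2=16\cdot 4^i-8\cdot 2^i+1$, $N_2=24\cdot 4^i-14\cdot 2^i+2$, and $N_3=(3\cdot 2^i-1)^2=9\cdot 4^i-6\cdot 2^i+1$, which match the first three claimed intervals. Cases~4 and~5 are hybrid, creating both rectangle centers (in the orange/blue strips for Case~4, the yellow strip for Case~5) and new grid vertices, namely edge midpoints. For these I would count the centers from the strip dimensions and then count the admissible vertices, applying the exclusion conditions precisely so that no vertex is counted twice across the two cases. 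A global consistency check is that the five counts sum to $147\cdot 4^i-14\cdot 2^i$: the coefficients of $4^i$ sum as $16+24+9+32+66=147$, those of $2^i$ sum as $-8-14-6+6+8=-14$, and the constants as $1+2+1-3-1=0$. Hence the intervals tile the round with no gaps or overlaps, and their endpoints are forced to be exactly the stated formulas.

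I expect the main obstacle to be the vertex-counting in Cases~4 and~5. Unlike the centers, whose number is pinned down immediately by the block dimensions, the new grid vertices are shared along the seams between adjacent colored strips and along $\partial P$, so the inclusion/exclusion conditions (a)--(c) of Case~4 and the ``not on the boundary'' condition of Case~5 must be tracked carefully to ensure each admissible vertex is assigned to exactly one case. The other delicate point is maintaining the inductive description of the one-rectangle-wide outer strips through the shrink step, since that is what fixes their lengths and hence the center counts feeding into Cases~4 and~5; verifying that after round~$i$ the strips again reduce to width one (with the released area absorbed correctly) is what keeps the induction self-consistent.
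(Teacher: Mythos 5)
Your plan follows essentially the same route as the paper's proof: an induction on the round number that pins down the colored regions' dimensions (green $(4\cdot 2^i-1)\times(4\cdot 2^i-1)$, red $(3\cdot 2^i-1)\times(3\cdot 2^i-1)$, pink two $(4\cdot 2^i-1)\times(3\cdot 2^i-1)$ blocks, outer strips of width one), followed by a per-case count of created positions. Everything you actually execute is correct and agrees with the paper: your $N_1=(4\cdot 2^i-1)^2$, $N_2=2(3\cdot 2^i-1)(4\cdot 2^i-1)$, $N_3=(3\cdot 2^i-1)^2$ are exactly the paper's $N_g^{i-1}$, $N_p^{i-1}$, $N_r^{i-1}$, and your round total $147\cdot 4^i-14\cdot 2^i$ is right.

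The gap is that Case 4 is never counted, and this is the crux of the claim. You correctly flag the vertex count there as the main obstacle, but ``applying the exclusion conditions precisely'' is a promise, not an argument. The paper resolves it with a specific charging scheme: in the second sub-case, each of the $N_g^{i-1}$ green-square centers induces exactly two new admissible vertices (one to its right, one below); each of the $N_o^{i-1}-1$ orange centers other than the one at the bottom-right corner induces exactly one (the other candidate being on the boundary or adjacent to pink); and the bottom-right orange center induces none. This yields $2N_g^{i-1}+N_o^{i-1}-1=32\cdot 4^{i}-8\cdot 2^{i}$ vertex positions, which together with the $N_o^{i-1}+N_b^{i-1}=14\cdot 2^i-3$ centers of sub-case 1 gives the Case 4 range. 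Without some such scheme your count of Case 4 does not exist, and your global consistency check cannot fill the hole: the coefficients $32$, $66$, $+6$, $+8$, $-3$, $-1$ you sum are read off the very ranges being proved, so the check only confirms that the claimed intervals are mutually consistent with the round total---it is circular as a derivation of them. The clean way to finish (and what the paper implicitly does) is: derive $N_4$ by the charging argument, and then Case 5 needs no separate count at all, since by construction it takes all remaining positions up to $(7\cdot 2^{i+1}-1)^2$, so its range is forced once Cases 1--4 are established.
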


\begin{proof}
For each $i\geq 0$, recall that
the positions created in round $i$ are
$(7\cdot 2^{i}-1)^2+1\leq n \leq (7\cdot 2^{i+1}-1)^2$.
Below we show how the positions are distributed across
the five cases of Algorithm~\ref{alg:2dcreation},
where for $i=0$, round $i-1$ refers to the creation of the first 36 positions in $Q$.

\begin{enumerate}[{Case} 1.]
\item
Denote the number of round-$(i-1)$
squares in the green area by $N_g^{i-1}$.
By construction,
the number of round-$i$ squares in the green area
is $N_g^i = (2\sqrt{N_g^{i-1}}+1)^2$:
indeed, $\sqrt{N_g^{i-1}}$ is the number of intervals
in each dimension in the green area after round $i-1$;
$2\sqrt{N_g^{i-1}}$ is obtained because each one of them is split into two after round $i$;
and the extra $+1$ is because, at the end of round $i$,
 the orange area shrinks and the green area grows by one round-$i$ rectangle
 to the top and to the right.

Solving this inductive formula with initial condition $N_g^{-1} = 9$,
we have
$$
N_g^{i-1} = (4\times 2^i - 1)^2
$$
 for each $i\geq 0$,
which is exactly the number of positions
created in Case 1 of round $i$,
because one position is created
at the center of every such square.
Accordingly, in Case 1,
$$(7\cdot 2^{i}-1)^2+1\leq n \leq (7\cdot 2^{i}-1)^2 + (4\times 2^i - 1)^2
= 65\times 2^{2i}-22\times2^i+2.$$

\item
Let $N_p^{i-1}$ be the number of round-$(i-1)$ rectangles in the pink areas.
In order to compute $N_p^{i-1}$, we first
compute $N_r^{i-1}$,
the number of round-$(i-1)$ squares in the red area.
Similar to the induction above,
we have $N_r^{-1} = 4$ and
$N_r^i = (2\sqrt{N_r^{i-1}}+1)^2$,
thus
$$
N_r^{i-1} = (3\times 2^i-1)^2.
$$
Accordingly,
$$N_p^{i-1} = 2\times \sqrt{N_r^{i-1}} \times \sqrt{N_g^{i-1}} =
2(3\times 2^i-1)(4\times 2^i-1),$$
which is exactly the number of positions
created in Case 2 of round $i$,
because one position is created
at the center of every such square.
Thus in Case 2,
$$(7\cdot 2^{i}-1)^2 + (4\times 2^i - 1)^2 +1 \leq n \leq
(7\cdot 2^{i}-1)^2 + (4\times 2^i - 1)^2 + 2(3\times 2^i-1)(4\times 2^i-1),$$
that is,
$$65\times 2^{2i}-22\times2^i+ 3 \leq n \leq
89\times 2^{2i}-36\times2^i+4.$$

\item
Similarly, in Case 3 of
round $i$,
the number of positions created is
$N_r^{i-1} = (3\times 2^i-1)^2$ and
$$89\times 2^{2i}-36\times2^i+5 \leq n \leq 89\times 2^{2i}-36\times2^i+4 + (3\times 2^i-1)^2
= 98\times2^{2i}-42\times2^i+5.$$

\item
Furthermore,
the number of round-$(i-1)$ orange rectangles and that of round-$(i-1)$ blue rectangles are, respectively,
$$N_o^{i-1} =
(\sqrt{N_g^{i-1}}+1)^2 - N_g^{i-1} =
(4\times 2^i)^2 - (4\times 2^i - 1)^2 = 8\times 2^{i}-1$$
and
$$N_b^{i-1} = 2\times \sqrt{N_r^{i-1}} = 2(3\times 2^i-1).$$
Thus the first sub-case in Case 4 of round $i$
creates $N_o^{i-1} + N_b^{i-1} =
(8\times 2^{i}-1)+(6\times 2^{i}-2) = 14\times 2^i -3$ positions.

In the second sub-case, there are
$N_g^{i-1} + N_o^{i-1}$ round-$(i-1)$ squares in the green and the orange areas combined,
thus the same number of centers created in them so far.
For each of the upper-left $N_g^{i-1}$ round-$(i-1)$
squares, its center induces 2 new positions,
one to its right and one to its bottom.
For each of the remaining $N_o^{i-1}-1$ round-$(i-1)$ squares
with the one at the bottom-right corner excluded,
its center induces 1 new position,
either to the bottom or to the right, because the other one is either on the boundary
or is adjacent to the pink area below.
Moreover, the center of the
orange round-$(i-1)$ square at the bottom-right corner does not induce any new position, because
the one to its right is on the boundary and the one to its bottom is adjacent to the blue area below.
Accordingly, the number of new positions
 created in this subcase is
 $2N_g^{i-1} + N_o^{i-1}-1 =
 2(4\times 2^i - 1)^2 + 8\times 2^{i}-1 - 1 =
 32\times 2^{2i}-8\times 2^{i}$.

Combining the two sub-cases together,
the positions created in Case 4 of round $i$ are
$$98\times2^{2i}-42\times2^i+6\leq n \leq
98\times2^{2i}-42\times2^i+5 + (14\times 2^i -3) +  (32\times 2^{2i}-8\times 2^{i})$$
\hspace{108pt}  $= 130\times2^{2i}-36\times2^i+2.$

\item
Finally, all the remaining positions
in round $i$ are created in Case 5, with
$$130\times2^{2i}-36\times2^i+3 \leq n \leq (7\cdot 2^{i+1}-1)^2.$$
For completeness, we note that the number of
round-$(i-1)$ yellow rectangles is
$$N_y^{i-1} = (7\times 2^i)^2 - (7\times 2^i-1)^2 = 14\times 2^{i}-1.$$
\end{enumerate}
Combining all the cases together, Claim \ref{clm:position-n} holds.
\end{proof}

Finally we  are ready to prove Theorem~\ref{thm:2d-worst}.


\begin{proof}[Proof of Theorem \ref{thm:2d-worst}]
It is easy to see that
the running times of Algorithms~\ref{alg:2donline} and \ref{alg:2dcreation} are polynomial in $|\hat{Q}|$ and thus
in the number of arrived points.
For the competitive ratio,
again we can focus on instances of
the form
$S=((1,n+1),(2,n+1),\dots,(n,n+1))$,
with $n\geq 1$.
%
Letting $apx(n;P)=\frac{Disp(n; P)}{d_{min}(n)}$,
we would like to find
$c = \arg\min\limits_c \max\limits_n apx(n;P)$.
In the analysis below
we distinguish three cases:
first $n\leq 36$, then $n$ in rounds 0 and~1 of Algorithm \ref{alg:2dcreation},
and at last $n$ in round $i$ for each $i\geq 2$ of Algorithm \ref{alg:2dcreation}.

%

\paragraph{For $n\leq 36$.}
Following Table \ref{table:Q} in Section \ref{sec:intuition},
the minimum distances within each of the seven cases are the same.
Since $Disp(n ;P)$ is non-increasing,
the worst competitive ratios occur
at the first position~$n$ in each case.
For example,
the worst competitive ratio in case~2 is $\frac{Disp(2; P)}{2cx}$.

For $n = 1, 2$ and $5$, the exact solution for $Disp(n; P)$ can be easily found.
That is,
with one position in the center for $n=1$;
with two positions on a diagonal for $n=2$, where the ratios of the three intervals
are $\sqrt{2}:1:\sqrt{2}$;
and with one position in the center and four positions in the four resulting squares
for $n=5$, where the four positions are all on the diagonals and
the ratios of the resulting intervals on a diagonal is
$\sqrt{2}:1:1:\sqrt{2}$.

Although there is no general closed-form for $Disp(n; P)$ in the literature,
exact solutions for $DP(n; P)$ have been found for small $n$'s, and we use them
to get the worst competitive ratios of our algorithm for $6\leq n\leq 36$.
Extending Table \ref{table:Q}, the competitive ratios are shown in Table~\ref{table:ratioQ}.
Note that here we do not need the upper bound of $Disp(n;P)$, as we know the exact solution.



\begin{table}[htbp]
\renewcommand{\arraystretch}{1.3}
\begin{center}
\begin{tabular}{|c|c|c|c|c|c|c|c|}
\hline
Case&1&2&3&4&5&6&7\\ \hline
$n$ & $1$ & $2,3,4$ & $5$ & $6,7$ &$8$ & $9,\cdots,17$ & $18,\cdots, 36$ \\ \hline
$d_{min}(n)$ & $3x$ & $2cx$ & $\sqrt{2}cx$ & $\sqrt{c^{2}+1}x$ & $\sqrt{2}x$ & $cx$ & $x$ \\  \hline
$apx(n;P)$ & $\frac{0.5}{3x}$ & $\frac{0.36940}{2cx}$ & $\frac{0.29290}{\sqrt{2}cx}$ & $\frac{0.27292}{\sqrt{c^{2}+1}x}$ \cite{schwartz1970separating} & $\frac{0.25434}{\sqrt{2}x}$ \cite{schaer1965geometric} & $\frac{1}{4cx}$ \cite{schaer1965densest} & $\frac{0.18769}{x}$ \cite{peikert1992packing} \\
\hline
\end{tabular}
\end{center}
\caption{Competitive ratios for $n\leq 36$. 
Note that
\cite{schwartz1970separating, schaer1965geometric, schaer1965densest, peikert1992packing}
provide exact solutions for the corresponding $DP(n; P)$'s.
We compute the corresponding $Disp(n;P)$'s by Claim \ref{clm:Disp-DP}.}
\label{table:ratioQ}
\end{table}



%

\paragraph{Round $0$.}
Our construction of Algorithm \ref{alg:2dcreation} guarantees that,
in each round $i\geq 0$,
$d_{min}(n)$ is the same for all $n$'s in the same case.
The general formulas of $d_{min}(n)$ in all five cases are shown in Table~\ref{table:round0},
and the
competitive ratios
are obtained by
applying our upper-bound for
$Disp(n; P)$ in Lemma \ref{clm:boundsdisp}.
Again note that the worst competitive ratio in each case happens at the first position~$n$,
because $Disp(n; P)$ is non-decreasing.
The corresponding ratios for round 0 can be obtained by setting $i=0$.


\begin{table}[htbp]
\renewcommand{\arraystretch}{1.5}
\begin{center}
\begin{tabular}{|c|c|c|c|}
\hline
Case & $n$ & $d_{min}(n)$ & $apx(n;P)$\\ \hline
1 &
\begin{tabular}{l}
  $n\geq (7\times 2^i-1)^2+1$ and   \\
  $n \leq 65\times 2^{2i}-22\times2^i+2 $
\end{tabular}
 & {$\dfrac{\sqrt{2}cx}{2^{i+1}}$}
 & {$\frac{2\sqrt{2}\times2^{i}}{cx(2+\sqrt{2\sqrt{3}((7\times 2^i-1)^2+1)})}$}\\ \hline
2 &
\begin{tabular}{l}
$n \geq 65\times 2^{2i}-22\times2^i+3$ and  \\
$n \leq 89\times 2^{2i}-36\times2^i+4$
\end{tabular}
 & {$\dfrac{x\sqrt{1+c^2}}{2^{i+1}}$}
 & {$\frac{4\times 2^{i}}{x\sqrt{1+c^{2}}(2+\sqrt{2\sqrt{3}(65\times 2^{2i}-22\times2^i+3)})}$} \\ \hline
3 &
\begin{tabular}{l}
$n \geq 89\times 2^{2i}-36\times2^i+5 $ and \\
$n \leq 98\times2^{2i}-42\times2^i+5$
\end{tabular}
 & $\dfrac{\sqrt{2}x}{ 2^{i+1}}$
 & $\frac{2\sqrt{2}\times 2^{i}}{x(2+\sqrt{2\sqrt{3}(89\times 2^{2i}-36\times2^i+5)})}$\\ \hline
4 &
\begin{tabular}{l}
$n \geq 98\times2^{2i}-42\times2^i+6$ and  \\
$n \leq 130\times2^{2i}-36\times2^i+2$
\end{tabular}
 & $\dfrac{cx}{2^{i+1}} $
 & $\frac{4\times 2^{i}}{cx(2+\sqrt{2\sqrt{3}(98\times2^{2i}-42\times2^i+6)})}$ \\ \hline
5 &
\begin{tabular}{l}
$n \geq 130\times2^{2i}-36\times2^i+3 $ and \\
$n \leq (7\times 2^{i+1}-1)^2$
\end{tabular}
 & $\dfrac{x}{2^{i+1}}$
 & $\frac{4\times 2^{i}}{x(2+\sqrt{2\sqrt{3}(130\times2^{2i}-36\times2^i+3)})}$\\ \hline

\end{tabular}
\end{center}

\caption{Competitive ratios for round $i\geq 0$.}
\label{table:round0}
\end{table}
Now we show the correctness of $d_{min}(n)$ in
Table \ref{table:round0} for any $i\geq 0$.
If $n$ falls into Case 1, it is
at
the center of a round-$(i-1)$ square
in the green area, and
the length of the edge is
$\frac{4cx}{\sqrt{N_g^{i-1}+1}} = \frac{4cx}{4\times 2^i} = \frac{cx}{2^{i}}$.
Since the minimum distance is half of the diagonal of that square,
we have
$$d_{min}(n)=\frac{cx}{\sqrt{2} 2^{i}}.$$

If $n$ falls into Case 2,
it is at the center of
a round-$(i-1)$ rectangle in
the pink areas.
The length of the rectangle is
$\frac{cx}{2^{i}}$ and the width is
$\frac{3x}{\sqrt{N_r^{i-1}}+1} = \frac{3x}{3\times 2^i} =
\frac{x}{2^{i}}$.
Since the minimum distance is
half of the diagonal of that rectangle,
we have
$$d_{min}(n)=\frac{\sqrt{1+c^{2}}}{2}\times\frac{x}{2^{i}}=\frac{x\sqrt{1+c^2}}{2^{i+1}}.$$

If $n$ falls into Case 3,
it is at the center of a
round-$(i-1)$ square in the red area,
 and the length of the edge is
 $\frac{x}{2^{i}}$.
Again, the minimum distance is half of the diagonal of that square and we have
$$d_{min}(n)=\frac{x}{\sqrt{2} 2^{i}}.$$

If $n$ falls into Case 4,
then it
is either
at the center of a round-$(i-1)$ rectangle in the orange and the blue areas,
 or
it is a vertex of a round-$i$ rectangle in the green and the orange areas
such that it is not on the boundary or adjacent to the pink or the blue areas.
No matter which sub-case happens,
the minimum distance incurred by position $n$ is $\frac{1}{2}\times\frac{cx}{2^{i}}$,
either to the boundary
or to the center of a round-$(i-1)$ rectangle next to it.
Thus
$$d_{min}(n)=\frac{cx}{2^{i+1}}.$$

Finally, if $n$ falls into Case 5,
it is either at the center of a round-$(i-1)$ rectangle in yellow area or
it is
a vertex of a round-$i$ rectangle in the blue,
yellow, pink or red areas.
No matter which sub-case happens,
the minimum distance incurred by position $n$
is $\frac{1}{2}\times\frac{x}{2^{i}}$, either
to the boundary
or to the center of the round-$(i-1)$ rectangle next to it.
Therefore
$$d_{min}(n)=\frac{x}{2^{i+1}}.$$
In sum, the general formulas of $d_{min}(n)$ for any $i$ and $n$
are as shown in Table \ref{table:round0}.

\paragraph{Round 1.}
By setting $i=1$ in Table \ref{table:round0},
similar formulas for $apx(n; P)$ can be obtained for all five cases of round 1.

\paragraph{Round $i \geq 2$.} 
To find the worst competitive ratio for all rounds $i\geq 2$,
the difficulty is that the upper-bounds for $apx(n; P)$ in
Table \ref{table:round0} are not necessarily monotone, thus it is hard to tell
where the worst ratio occurs.
Instead,
we find a universal upper-bound for the ratios
and a novel way to analyze its monotonicity.
More precisely,
let $f(n;P)=\frac{2}{\sqrt{2\sqrt{3}n}\cdot d_{min}(n)}$.
By Lemma \ref{clm:boundsdisp}, for any $n\geq 1$,
$$apx(n;P)=\dfrac{Disp(n;P)}{d_{min}(n)}
\leq \frac{2}{(2+\sqrt{2\sqrt{3}n})d_{min}(n)}
\leq \frac{2}{\sqrt{2\sqrt{3}n}\cdot d_{min}(n)}
=f(n;P).$$
In Table \ref{table:roundi},
we replace $axp(n;P)$ by
its upper-bound
$f(n;P)$.

\begin{table}[htbp]
\renewcommand{\arraystretch}{1.5}
\begin{center}
\begin{tabular}{|c|c|c|c|}
\hline
Case & $n$ & $d_{min}(n)$ & $f(n;P)$\\ \hline
1 &
\begin{tabular}{l}
  $n\geq (7\times 2^i-1)^2+1$ and  \\
  $n \leq 65\times 2^{2i}-22\times2^i+2 $
\end{tabular}
 & $\dfrac{\sqrt{2}cx}{2^{i+1}}$
 & $\frac{2\sqrt{2}\times2^{i}}{cx\sqrt{2\sqrt{3}((7\times 2^i-1)^2+1)}}$\\ \hline
2 &
\begin{tabular}{l}
$n \geq 65\times 2^{2i}-22\times2^i+3$ and \\
$n \leq 89\times 2^{2i}-36\times2^i+4$
\end{tabular}
 &$\dfrac{x\sqrt{1+c^2}}{2^{i+1}}$
 &$\frac{4\times 2^{i}}{x\sqrt{1+c^{2}}\sqrt{2\sqrt{3}(65\times 2^{2i}-22\times2^i+3)}}$ \\ \hline
3 &
\begin{tabular}{l}
$n \geq 89\times 2^{2i}-36\times2^i+5 $ and \\
$n \leq 98\times2^{2i}-42\times2^i+5$
\end{tabular}
 & $\dfrac{\sqrt{2}x}{2^{i+1}}$
 & $\frac{2\sqrt{2}\times 2^{i}}{x\sqrt{2\sqrt{3}(89\times 2^{2i}-36\times2^i+5)}}$\\ \hline
4 &
\begin{tabular}{l}
$n \geq 98\times2^{2i}-42\times2^i+6$ and \\
$n \leq 130\times2^{2i}-36\times2^i+2$
\end{tabular}
 & $\dfrac{cx}{2^{i+1}} $
 & $\frac{4\times 2^{i}}{cx\sqrt{2\sqrt{3}(98\times2^{2i}-42\times2^i+6)}}$ \\ \hline
5 &
\begin{tabular}{l}
$n \geq 130\times2^{2i}-36\times2^i+3 $ and\\
$n \leq (7\times 2^{i+1}-1)^2$
\end{tabular}
& $\dfrac{x}{2^{i+1}}$ &
   $\frac{4\times 2^{i}}{x\sqrt{2\sqrt{3}(130\times2^{2i}-36\times2^i+3)}}$\\ \hline
\end{tabular}
\end{center}

\caption{Competitive ratios upper-bounded by $f(n; P)$.}
\label{table:roundi}
\end{table}
%

Given Tables \ref{table:round0} and \ref{table:roundi},
for any $n\geq 37$,
it falls into one entry
 based on its round number~$i$ and case number $j\in \{1, \dots, 5\}$.
Accordingly,
we also use $apx(i,j)$ and $f(i,j)$ to denote $apx(n;P)$  and $f(n;P)$.
When $1\leq n\leq 36$,
it falls into one of the seven cases in Table \ref{table:ratioQ},
and we denote the corresponding quantities by $apx(*,j)$ and $f(*,j)$ with $j \in \{1, \dots, 7\}$.
Since we want to find a proper parameter $c$ to minimize
$\max\{\max\limits_{i\geq 0,\ j\leq 5}apx(i,j),
\max\limits_{j\leq 7}apx(*,j)\} $,
we have to find where $apx(i,j)$ or $apx(*,j)$ is maximized.

When $i\geq 2$,
given any $j\leq 5$,
it is easy to see that
$f(i,j)$ is decreasing with respect to $i$---that is,
$f(n; P)$ may not be monotone overall, but it is monotone
across the same case!
Therefore we only need to compare
$apx(*,j)$ for $j=1,\cdots,7$, $apx(i,j)$ for $i = 0, 1$ and $j=1,\cdots,5$,
 and $f(2,j)$ for $j=1,\cdots,5$.
Note that $f(i, j)$ is also decreasing at $i=0, 1$,
but it is a loose upper-bound there, thus
we use $apx(i, j)$ for $i=0, 1$ in order to get a better bound.



What remains is simple.
Indeed, there are in total $22$ $apx(\cdot; \cdot)$'s and $f(\cdot; \cdot)$'s
to consider,
whose values can be directly calculated
from the tables.
More specifically,
each one of them is less
than or equal to $\max\{apx(*,6), apx(0,5)\}$.
Notice that, so far, we haven't used the value of $c$.
By choosing $c$ to minimize $\max\{apx(*,6), apx(0,5)\}$,
we get
$c = \frac{2+\sqrt{194\sqrt{3}}}{16} \approx 1.271 \in (1, \sqrt{2})$,
\footnote{In order to deal with irrational numbers,
we round $c$ to $1.271$ and the result still holds.}
and the two values are both $\frac{1}{4cx}$.
Accordingly, the competitive ratio
of Algorithm \ref{alg:2donline} is
$\max\limits_n\frac{Disp(n;P)}{d_{min}(n)} = \frac{1}{4cx} < 1.591$,
and Theorem \ref{thm:2d-worst} holds.
%
\end{proof}

\section{Proofs for Section \ref{sec:kdatwc}}
\label{appendix:offline:ATWC}
%



\paragraph{Theorem \ref{thm:kd-lower}.} (restated) {\em
For any $k \geq 2$, no algorithm achieves a competitive ratio better
than $\frac{7}{6}$ for the online ATWC problem for arbitrary polytopes.
}

\begin{proof}
We show that even when the given $k$-dimensional ``polytope'' $P$ is a sphere and there are only two points, no algorithm achieves a competitive ratio better than $\frac{7}{6}$.
As a sphere can be approximated arbitrarily closely by a polytope, this implies our theorem.

Without loss of generality, let the center of the sphere
be $(0,0,\dots,0)$ and the radius be~$\frac{1}{2}$.
We first consider
the optimal solution for
the dispersion problem without time.
Clearly, $Disp(1;P) = \frac{1}{2}$.
When there are two points, we have the following.

\begin{claim}\label{clm:kd-lower-off}
For any algorithm $\cA$ locating 2 points in $P$,
there exists another
%
algorithm $\cA'$ such that $\cA'$ locates
the two points on a diameter of the sphere
and the competitive ratio of $\cA'$ to $Disp(2; P)$ is no worse than $\cA$.
\end{claim}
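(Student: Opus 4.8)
The plan is to obtain $\cA'$ from $\cA$ by a reflection that preserves each point's distance to the center while forcing the second point onto the diameter determined by the first. First I would use the rotational symmetry of the sphere about its center: after a rotation I may assume that the point $X_1$ placed by $\cA$ upon the first arrival lies on the first coordinate axis, say $X_1 = (\rho_1, 0, \dots, 0)$ with $\rho_1 = |X_1| \in [0, \tfrac12)$. Let $\cA'$ place its first point at exactly $X_1' = X_1$; a single point always lies on the diameter through it, so this costs nothing. When the second point arrives, let $X_2$ be the location $\cA$ chooses and set $\rho_2 = |X_2|$. I would have $\cA'$ place its second point at the antipodal location $X_2' = (-\rho_2, 0, \dots, 0)$ on the same diameter (when $\rho_1 = 0$ any diameter works, so fix the first axis). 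By construction both $X_1'$ and $X_2'$ lie on the first-axis diameter, as the statement requires, and $\cA'$ is a legitimate online rule: it consults $\cA$'s decision for each arrival and only reflects it, using no future information.

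Next I would verify that $\cA'$ dominates $\cA$ at every measured time. Since only two points ever arrive, the sole events are the two arrivals, so by Claim \ref{clm:atwc=disp} the competitive ratio on this adversary equals $\max\{Disp(1;P)/d_{min}(1),\ Disp(2;P)/d_{min}(2)\}$, and it suffices to show $d_{min}'(1) = d_{min}(1)$ and $d_{min}'(2) \ge d_{min}(2)$. The first is immediate because $X_1' = X_1$. For the second, observe that a point's distance to $\partial P$ depends only on its distance to the center, so $dis(X_i', \partial P) = \tfrac12 - \rho_i = dis(X_i, \partial P)$ for $i = 1, 2$, leaving the two boundary terms unchanged. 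The inter-point term can only improve: by the triangle inequality $dis(X_1, X_2) \le |X_1| + |X_2| = \rho_1 + \rho_2 = dis(X_1', X_2')$, where the last equality holds because $X_1'$ and $X_2'$ lie on opposite rays from the center. Taking the minimum of the three terms gives $d_{min}'(2) \ge d_{min}(2)$.

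Combining these, each of the two ratios for $\cA'$ is at most the corresponding ratio for $\cA$, so their maximum does not increase and $\cA'$ is no worse. I expect the only point needing care to be the degenerate case $\rho_1 = 0$, where "the diameter through $X_1$" is ambiguous; there the first point sits at the center, $dis(X_1, X_2) = \rho_2$ for every placement, and any fixed diameter still yields $dis(X_1', X_2') = \rho_2$, so the reflection remains non-decreasing. The genuinely substantive ingredient is the triangle-inequality step, which shows that the antipodal placement maximizes the distance to $X_1$ among all points at a fixed distance from the center; everything else is bookkeeping against the two measurement times.
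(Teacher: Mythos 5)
Your proof is correct and is essentially the paper's own argument: both rotate so the first point lies on a coordinate axis, project the second point onto the opposite ray of that same diameter (preserving its distance to the center and hence to $\partial P$), and use the triangle inequality through the center to show the inter-point distance can only increase. Your additional checks (the ratio at the first arrival time and the degenerate case $\rho_1=0$) are fine but not substantively different from the paper's proof.
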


\begin{proof}
Without loss of generality,  assume $\cA$
locates point 1 at $p_1 = (x,0,0,\dots,0)$ with $x\leq 0$.
Denote the position of point 2 according to $\cA$ as $p_2=(y_1,y_2,\dots,y_k)$.
It is easy to see that $p_{2}$ has distance  $y'=\sqrt{\sum_i^k y_i^2}$ to the center and $\frac{1}{2}-y'$ to the boundary.
Let algorithm $\cA'$ locate point~1 at $p_{1}$ and point 2 at $p'_2=(y',0,0,\dots,0)$.
Note that $dis(p_2', \partial P) = \frac{1}{2}-y' = dis(p_2, \partial P)$
and $dis(p'_2, p_1) = y'-x \geq dis(p_2, p_1)$, by triangle inequality.
Thus Claim \ref{clm:kd-lower-off} holds.
\end{proof}

%

Following Claim \ref{clm:kd-lower-off},
it is not hard to see that
the optimal solution has $p_1$ and $p_2'$ with
$x=-\frac{1}{6}$ and $y'=\frac{1}{6}$, thus
$Disp(2;P)=\frac{1}{3}$.
Now we consider online algorithms for ATWC.
Following Claim~\ref{clm:kd-lower-off},
without loss of generality we focus on algorithms
that locate the second point on the same diameter as the first.
Again without loss of generality,
the two positions are
$p_1=(x,0,\dots,0)$ and $p_2=(y,0,\dots,0)$, with $x\leq 0\leq y$.
Moreover, given $p_1$,
the best position for $p_2$ is to set $y$ such that
 $dis(p_2, \partial P) = dis(p_2, p_1)$.
That is, $y - x = \frac{1}{2} - y$, which implies $y = \frac{1+2x}{4}$.
Accordingly, $d_{min}(2; P)=\min \left\{\frac{1}{2}+x, \frac{1-2x}{4}\right\}$
and the worst competitive ratio after the first two points is
$$\max \left\{\dfrac{\frac{1}{2}}{\frac{1}{2}+x},\dfrac{\frac{1}{3}}{\min \left\{\frac{1}{2}+x, \frac{1-2x}{4}\right\}}\right\}.$$
Choosing $x$ to minimize this maximum,
we have $x=-\frac{1}{14}$ and the competitive ratio cannot be better than $\frac{7}{6}$.
Thus Theorem \ref{thm:kd-lower} holds.
\end{proof}

Similarly, a lower-bound of $1+\frac{1}{4\sqrt{k}+2}$ can be obtained for $k$-dimensional cubes for any~$k\geq~2$.
To construct an online algorithm for ATWC for arbitrary polytopes,
we first consider the straightforward greedy algorithm, and we have the following.
Recall that the geometric problems
 of finding the minimum bounding cube, deciding whether a position is in~$P$,
 and finding the distance between a point in $P$ and the boundary of $P$
 are given as oracles.

\begin{lemma}\label{lem:exist}
For any $k\geq 2$, $n\geq 1$, $k$-dimensional polytope $P$ and $n$ positions $p_1,\dots, p_n\in P$,
there exists a position $p\in P$ such that $\min_{i\in [n]} \{dis(p, p_i), dis(p, \partial P)\}\geq \frac{Disp(n;P)}{2}$.
\end{lemma}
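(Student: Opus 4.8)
The plan is to reduce the statement to a single inequality about a covering radius and to establish that inequality by a volume (packing versus covering) comparison. Define
$$d^* \;=\; \max_{p\in P}\ \min_{i\in[n]}\{dis(p,p_i),\, dis(p,\partial P)\},$$
which is attained since $P$ is compact and the objective is continuous. If I let $p^*$ be a maximizer, then $p^*$ satisfies $dis(p^*,p_i)\ge d^*$ for all $i$ and $dis(p^*,\partial P)\ge d^*$, so it suffices to prove $d^*\ge \tfrac{Disp(n;P)}{2}$; the maximizer $p^*$ is then exactly the point the lemma asserts. I will argue this by contradiction, assuming $d^*<\tfrac{D}{2}$ where $D:=Disp(n;P)$.

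The packing side comes from an optimal dispersion configuration. By compactness there exist $X_1,\dots,X_n\in P$ realizing $Disp(n;P)=D$, so $dis(X_i,X_j)\ge D$ for $i\ne j$ and $dis(X_j,\partial P)\ge D$ for every $j$. Consider the closed balls $\overline B(X_j,D/2)\subseteq\bR^k$. Since the centers are pairwise at distance at least $D$, these balls have pairwise disjoint interiors; and since each center has distance at least $D$ to $\partial P$, every point of $\overline B(X_j,D/2)$ has distance at least $D/2$ to $\partial P$. Hence, writing $P'=\{x\in P: dis(x,\partial P)\ge D/2\}$ and $K=\bigcup_{j}\overline B(X_j,D/2)$, I have $K\subseteq P'$ and, by interior-disjointness, $\mathrm{vol}(K)=n\,\omega_k (D/2)^k$, where $\omega_k$ is the volume of the unit $k$-ball.

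The covering side uses the definition of $d^*$. For any $x\in P'$ I have $dis(x,\partial P)\ge D/2>d^*$, while by definition of $d^*$ the quantity $\min_i\{dis(x,p_i),dis(x,\partial P)\}\le d^*$; since the boundary term exceeds $d^*$, some $p_i$ must satisfy $dis(x,p_i)\le d^*$. Therefore $P'\subseteq\bigcup_{i}\overline B(p_i,d^*)$, and in particular $K\subseteq\bigcup_i\overline B(p_i,d^*)$. Comparing volumes and using monotonicity and subadditivity,
$$n\,\omega_k (D/2)^k=\mathrm{vol}(K)\le \mathrm{vol}\Big(\textstyle\bigcup_i \overline B(p_i,d^*)\Big)\le \sum_{i=1}^{n}\mathrm{vol}\big(\overline B(p_i,d^*)\big)=n\,\omega_k (d^*)^k,$$
which forces $D/2\le d^*$ and contradicts the assumption $d^*<D/2$. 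Hence $d^*\ge D/2$, proving the lemma.

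The step I expect to be the genuine crux is recognizing that the tempting combinatorial route fails: one checks easily that each $X_j$ must lie within $d^*$ of some $p_i$, and that a triangle-inequality argument makes this assignment injective, hence a bijection between the $X_j$'s and the $p_i$'s. But a bijection alone yields no contradiction (indeed $p_i=X_i$ is consistent with it and gives $d^*=D/2$ exactly), so the matching argument stalls precisely at the tight case. The volume comparison above is what breaks the tie: packing $n$ balls of radius $D/2$ inside $P'$ cannot be covered by $n$ balls of the strictly smaller radius $d^*$, and it is this strict radius gap $d^*<D/2$ that converts the setup into a clean contradiction. I should take care only to note that the covering balls $\overline B(p_i,d^*)$ may protrude from $P$, which is harmless since I bound their union from above by the sum of full ball volumes in $\bR^k$.
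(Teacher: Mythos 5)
Your proof is correct and follows essentially the same route as the paper's: both arguments restrict attention to the inner region of points at distance at least $\frac{Disp(n;P)}{2}$ from $\partial P$, pack $n$ interior-disjoint balls of radius $\frac{Disp(n;P)}{2}$ (centered at an optimal dispersion configuration) into that region, and then derive a contradiction from the fact that the region would be covered by $n$ balls of strictly smaller radius centered at $p_1,\dots,p_n$. The only cosmetic difference is that you compare volumes of the packed union $K$ directly against the covering balls, while the paper compares the volume of the whole inner region against both the packing and the covering bounds; the substance is identical.
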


\begin{proof}
Denote by $\cP$ the set of points in $P$ with distance larger than or equal to $\frac{Disp(n;P)}{2}$ from the boundary.
By contradiction, assume that for every $p \in \cP$, $\min_{i\in [n]} \{dis(p, p_i)\} < \frac{Disp(n;P)}{2}$.
Letting $r =\sup_{p\in \cP} \min_{i\in [n]} \{dis(p, p_i)\}$, we have $r< \frac{Disp(n;P)}{2}$, because $\cP$ is closed.
Accordingly, $\cP$ is covered by $n$ balls centered at $p_1,\dots, p_n$ with radius $r$
and
$$vol(\cP) \leq n Ball(r),$$
 where $vol(\cP)$ is the volume of $\cP$ and $Ball(r)$ is the volume of a ball with radius $r$.
However, by the definition of $Disp(n; P)$, there exist $n$ positions in $P$ such that
$$\min_{i, j\in [n]}\{dis(p_i, p_j), dis(p_i, \partial P)\} \geq Disp(n; P).$$
Thus $n$ balls with radius $\frac{Disp(n;P)}{2}$ can be packed into $\cP$, which implies
$$vol(\cP)\geq n Ball(\frac{Disp(n;P)}{2})$$
and we get a contradiction, because $Ball(r) < Ball(\frac{Disp(n;P)}{2})$.
Therefore Lemma \ref{lem:exist} holds.
\end{proof}

Consider the greedy algorithm that, for each $n\geq 1$,
creates the $(n+1)$-st position at $p = \arg\max_{p\in P} \min_{i\in [n]} \{dis(p, p_i), dis(p, \partial P)\}$.
(As Algorithm \ref{alg:1d-online}, it adds $p$ to $\hat{Q}$ and creates a position only if positions in $\hat{Q}$ are all occupied.)
Since $\frac{Disp(n;P)}{2} \geq \frac{Disp(n+1;P)}{2}$,
by Lemma~\ref{lem:exist} and an inductive reasoning,
it is easy to see that this online algorithm is 2-competitive for the ATWC problem.
However, finding the optimal position $p$
 may be time consuming even given the oracles, and
we design a polynomial-time competitive algorithm which achieves a competitive ratio of $\frac{2}{1-\epsilon}$ for any $\epsilon>0$.

\paragraph{Theorem \ref{thm:kd:approv}.} (restated) {\em
%
For any constants $\gamma, \epsilon>0$, for any integer $k\geq 2$ and any $k$-dimensional polytope $P$ with covering rate at least $\gamma$,
there exists a deterministic polynomial-time online algorithm for the ATWC problem,
with competitive ratio $\frac{2}{1-\epsilon}$ and running time polynomial in~$\frac{1}{(\gamma\epsilon)^k}$.
}

\begin{proof}
Without loss of generality, the minimum bounding cube of $P$ is the unit cube.
Thus the edge-length of the maximum inscribed cube $C$ is at least $\gamma$.
The idea is to simply
slice the unit cube into small cubes and
exhaustively search all the cube-centers that are in $P$.
The number of cubes need to be searched depends on the number $n$ of existing positions,
the approximation parameter~$\epsilon$,
and a lower-bound for $Disp(n; P)$, which ultimately depends on~$\gamma$.
%
In particular, we have the following.

\begin{claim}\label{lem:slice}
For any $n\geq 1$, let $m = \left\lceil\frac{\sqrt{k}(n^{1/k}+2)}{\gamma\epsilon}\right\rceil$.
If we uniformly
slice the unit cube into $m^k$ smaller cubes $\{C_1,C_2,\dots, C_{m^k}\}$ with edge-length $\frac{1}{m}$,
then for any $n$ positions $p_1, \dots, p_n \in P$,
there exists a cube $C_j$ whose center $c_j$ is in $P$ and $\min_{i\in [n]}\{dis(c_j, p_i), dis(c_j, \partial P)\}\geq \frac{(1-\epsilon)Disp(n;P)}{2}$.
\end{claim}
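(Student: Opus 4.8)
The plan is to combine Lemma~\ref{lem:exist} with a simple rounding-to-the-grid argument, where the granularity $m$ of the slicing is calibrated against a $\gamma$-dependent lower bound on $Disp(n;P)$. First I would invoke Lemma~\ref{lem:exist} to obtain a point $p\in P$ with $\min_{i\in[n]}\{dis(p,p_i),dis(p,\partial P)\}\geq \frac{Disp(n;P)}{2}$. Let $C_j$ be the small cube of the slicing that contains $p$ and let $c_j$ be its center. Since $C_j$ has edge-length $\frac{1}{m}$, every point of $C_j$ lies within half of the space diagonal of $c_j$, so $dis(p,c_j)\leq \frac{\sqrt{k}}{2m}$. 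The whole proof then reduces to showing that this displacement is small enough, relative to $Disp(n;P)$, to keep $c_j$ inside $P$ and to preserve the distance bounds up to the factor $(1-\epsilon)$.

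The crux is the lower bound $Disp(n;P)\geq \frac{\gamma}{n^{1/k}+2}$. Because the maximum inscribed cube $C\subseteq P$ has edge-length at least $\gamma$, it suffices to disperse $n$ points inside $C$: placing them on a uniform grid with $t=\lceil n^{1/k}\rceil$ points per dimension (so that $t^k\geq n$) and spacing $\frac{\gamma}{t+1}$ keeps every point at distance at least $\frac{\gamma}{t+1}$ from $\partial C$ and from each of its grid neighbors. Moreover, since $C\subseteq P$, any segment from a point $x\in C$ to its nearest point of $\partial P$ must cross $\partial C$, giving $dis(x,\partial P)\geq dis(x,\partial C)$; hence $Disp(n;P)\geq Disp(n;C)\geq \frac{\gamma}{\lceil n^{1/k}\rceil+1}\geq \frac{\gamma}{n^{1/k}+2}$. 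Substituting this into the choice $m=\lceil \frac{\sqrt{k}(n^{1/k}+2)}{\gamma\epsilon}\rceil$ yields exactly $\frac{\sqrt{k}}{2m}\leq \frac{\epsilon\, Disp(n;P)}{2}$, which is the slack needed to absorb the rounding error.

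Finally I would assemble the estimates. Since $dis(p,c_j)\leq \frac{\sqrt{k}}{2m}\leq \frac{\epsilon\,Disp(n;P)}{2}<\frac{Disp(n;P)}{2}\leq dis(p,\partial P)$ (using $\epsilon<1$), the center $c_j$ lies strictly inside the open ball of radius $dis(p,\partial P)$ about $p$, which is contained in $P$; hence $c_j\in P$. The triangle inequality then gives $dis(c_j,p_i)\geq dis(p,p_i)-dis(p,c_j)$ for each $i$ and $dis(c_j,\partial P)\geq dis(p,\partial P)-dis(p,c_j)$, so that $\min_{i\in[n]}\{dis(c_j,p_i),dis(c_j,\partial P)\}\geq \frac{Disp(n;P)}{2}-\frac{\sqrt{k}}{2m}\geq \frac{(1-\epsilon)Disp(n;P)}{2}$, as claimed. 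The step I expect to be the main obstacle is the $\gamma$-dependent lower bound on $Disp(n;P)$: making the grid construction inside the inscribed cube fully self-contained and cleanly justifying $dis(x,\partial P)\geq dis(x,\partial C)$. The remainder is a routine perturbation, but the entire argument hinges on the calibration of $m$ so that the rounding displacement $\frac{\sqrt{k}}{2m}$ is dominated by $\frac{\epsilon}{2}\,Disp(n;P)$, which is possible precisely because of that bound.
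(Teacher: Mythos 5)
Your proposal is correct and takes essentially the same approach as the paper's own proof: invoke Lemma~\ref{lem:exist} to get the greedy point $p$, round it to the center $c_j$ of its grid cube, establish the key lower bound $Disp(n;P)\geq \frac{\gamma}{n^{1/k}+2}$ by placing a uniform grid of $\lceil n^{1/k}\rceil$ points per dimension inside the inscribed cube, and finish with the triangle inequality. The only cosmetic difference is that you phrase the calibration directly as $\frac{\sqrt{k}}{2m}\leq \frac{\epsilon}{2}Disp(n;P)$, whereas the paper carries the intermediate quantity $\frac{\epsilon\gamma}{2(n^{1/k}+2)}$ through the estimates; the substance is identical.
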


\begin{proof}
Let $p$ be the optimal position chosen by the greedy algorithm given $p_1, \dots, p_n$,
$C_j$ be the cube containing $p$,  and $c_j$ the center of $C_j$.
We have that
\begin{equation}\label{equ:7-1}
 \min_{i\in [n]}\{dis(p, p_i), dis(p, \partial P)\} \geq \frac{Disp(n;P)}{2}
\end{equation}
 and
\begin{equation}\label{equ:7-2}
dis(p, c_j)\leq \frac{\sqrt{k}}{2m}\leq \frac{\epsilon \gamma}{2(n^{1/k}+2)}.
\end{equation}

Now we show $Disp(n;P) \geq \frac{\gamma}{n^{1/k}+2}$
by finding $n$ positions $p'_1, \dots, p'_n\in P$ such that
$$\min_{i, j\in [n]}\{dis(p'_i, p'_j), dis(p'_i, \partial P)\}\geq \frac{\gamma}{n^{1/k}+2}.$$
In particular, we uniformly slice cube $C$ into
$(\lceil n^{1/k}\rceil+1)^k$ smaller cubes with
edge-length at least $\frac{\gamma}{\lceil n^{1/k}\rceil+1}$
and arbitrarily choose $n$ vertices of them from the interior of $C$: note that there are $(\lceil n^{1/k}\rceil)^k\geq n$ such vertices.
It is easy to see that
$$\min_{i, j\in [n]}\{dis(p'_i, p'_j), dis(p'_i, \partial P)\}
\geq \min_{i, j\in [n]}\{dis(p'_i, p'_j), dis(p'_i, \partial C)\}
\geq \frac{\gamma}{\lceil n^{1/k}\rceil+1} \geq \frac{\gamma}{n^{1/k}+2},$$
which implies
\begin{equation}\label{equ:7-3}
Disp(n;P) \geq \frac{\gamma}{n^{1/k}+2}
\end{equation}
 as desired.
Combining Equations \ref{equ:7-1}, \ref{equ:7-2}, \ref{equ:7-3}, for any $\epsilon< 1$, we have $dis(p, c_j) < \frac{\gamma}{2(n^{1/k}+2)} \leq  \frac{Disp(n;P)}{2} \leq dis(p, \partial P)$,
thus $c_j\in P$.
Moreover, for any $i\in [n]$, $dis(p, c_j) < \frac{Disp(n;P)}{2}\leq dis(p, p_i)$.
By the triangle inequality,
$$dis(c_j, p_i) \geq dis(p, p_i) - dis(p, c_j) \geq \frac{Disp(n;P)}{2} -  \frac{\gamma\epsilon}{2(n^{1/k}+2)} \geq \frac{(1-\epsilon)Disp(n;P)}{2}.$$
Similarly,
$$dis(c_j, \partial P) \geq dis(p, \partial P) - dis(p, c_j)\geq  \frac{(1-\epsilon)Disp(n;P)}{2}.$$
Accordingly,
$$\min_{i\in [n]}\{dis(c_j, p_i), dis(c_j, \partial P)\} \geq \frac{(1-\epsilon)Disp(n;P)}{2}$$
as desired, and Claim \ref{lem:slice} holds.
\end{proof}


Thus by going through all the $c_j$'s in $P$ and choosing the one that maximizes \\
$\min_{i\in [n]}\{dis(c_j, p_i), dis(c_j, \partial P)\}$, we find the $(n+1)$-st position in time $O(m^k\cdot n) = O(\frac{n^2 k^{k/2}}{\gamma^k\epsilon^k})$.
Again because
$Disp(n; P) \geq Disp(n+1;P)$, applying Claim \ref{lem:slice} to each round of the greedy algorithm,
by induction we have that the resulting algorithm runs in polynomial time and is $\frac{2}{1-\epsilon}$-competitive.
Therefore Theorem \ref{thm:kd:approv} holds.
\end{proof}

\begin{corollary}
\label{col:onlineCD}
For any $k$-dimensional polytope $P$ with covering rate at least
$\gamma$,
the algorithm in Theorem \ref{thm:kd:approv} is $O(n)$-competitive
for the online CD problem.
\end{corollary}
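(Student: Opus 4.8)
The plan is to read the $O(n)$ bound directly off the all-time worst-case guarantee of the algorithm $\cA$ of Theorem~\ref{thm:kd:approv}. Fix any instance $S = ((s_1,d_1),\dots,(s_n,d_n))$ and recall that, like Algorithm~\ref{alg:1d-online}, $\cA$ creates a new position only when the number of simultaneously present points strictly increases; hence it never uses more than $m = \max_{t\leq T}|\{i : s_i\leq t\leq d_i\}|\leq n$ positions. Since $\cA$ is $\frac{2}{1-\epsilon}$-competitive for the ATWC problem and $OPT_A(S;P)=Disp(m;P)$ by Claim~\ref{clm:atwc=disp}, the first step is to observe that this competitiveness is a \emph{pointwise} statement: it bounds $\min_{t\leq T}d_{min}(t)$ from below, so the minimum distance produced by $\cA$ satisfies $d_{min}(t)\geq \frac{1-\epsilon}{2}\,Disp(m;P)$ at every time $t$ at which a point is present (including times when only one point is present, where $d_{min}(t)$ is just its distance to $\partial P$). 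Writing $V_\cA=\int_0^T d_{min}(t)\,dt$ for the cumulative distance achieved by $\cA$, and letting $U=\{t : \text{some point is present at }t\}$ be the only times contributing to either integral, this gives $V_\cA\geq \frac{1-\epsilon}{2}\,Disp(m;P)\cdot|U|$.

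Next I would bound $OPT_C$ from above. For any fixed placement and any time $t\in U$ with $n_t\geq 1$ points present, the present points constitute a dispersion instance, so the instantaneous minimum distance is at most $Disp(n_t;P)\leq Disp(1;P)$. Integrating over $U$ yields $OPT_C(S;P)\leq Disp(1;P)\cdot|U|$. Taking the ratio cancels the common factor $|U|$:
\[
\frac{OPT_C(S;P)}{V_\cA}\;\leq\;\frac{2\,Disp(1;P)}{(1-\epsilon)\,Disp(m;P)}.
\]

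Finally I would convert this into the claimed bound using the covering-rate hypothesis. Normalizing the minimum bounding cube of $P$ to the unit cube, the inradius of $P$ is at most $\frac12$, so $Disp(1;P)\leq\frac12$; and the inequality $Disp(m;P)\geq \frac{\gamma}{m^{1/k}+2}$ established inside the proof of Claim~\ref{lem:slice} (Equation~\ref{equ:7-3}), together with $m\leq n$, gives
\[
\frac{OPT_C(S;P)}{V_\cA}\;\leq\;\frac{n^{1/k}+2}{(1-\epsilon)\,\gamma}\;=\;O(n^{1/k})\;=\;O(n),
\]
since $\epsilon$ and $\gamma$ are constants; this matches (up to constants) the $\Omega(n)$ lower bound for the online CD problem in Claim~\ref{claim:ic:online}. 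The only point requiring care—and the sole, minor, obstacle—is the bookkeeping over $U$: one must restrict both integrals to the same set of times with a present point so that the per-instant comparison of integrands is valid and the factor $|U|$ cancels cleanly. Once that is in place the argument is purely a matter of substituting the boundary bound $Disp(1;P)\leq\frac12$ and the per-instance lower bound on $Disp(m;P)$ already available from Theorem~\ref{thm:kd:approv}.
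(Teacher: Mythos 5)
Your proof is correct and takes essentially the same route as the paper's: both arguments apply the ATWC guarantee of Theorem~\ref{thm:kd:approv} pointwise in time to lower-bound the algorithm's cumulative distance, upper-bound $OPT_C$ by a trivial bound on the instantaneous minimum distance, and divide. Your bookkeeping is in fact slightly sharper than the paper's (which simply uses $d_{min}(t;X)\leq 1$, integrates over all of $[0,T]$, and invokes $Disp(n;P)\geq \gamma/n$): by restricting both integrals to the occupied times $U$ and using $Disp(1;P)\leq \tfrac12$ together with $Disp(m;P)\geq \gamma/(m^{1/k}+2)$ from Equation~\ref{equ:7-3}, you obtain the stronger $O(n^{1/k})$ ratio, which of course implies the stated $O(n)$ bound.
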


\begin{proof}
Arbitrarily fix a small constant $\epsilon$ in this greedy algorithm.
For any $S = ((s_1, d_1),\dots, (s_n, d_n))$ and
at any time $0\leq t \leq T$,
the output of the algorithm satisfies $d_{min}(t)\geq \frac{(1-\epsilon)OPT_A(S; P)}{2} \geq \frac{(1-\epsilon)Disp(n;P)}{2}\geq \frac{(1-\epsilon)\gamma}{2n}$.
Thus $\int_{0}^{T}d_{min}(t) \geq \frac{(1-\epsilon)\gamma T}{2n}$.
Since $OPT_{C}(S;P) = \max_{X_1, \dots, X_n} \int_0^T d_{min}(t; X) dt$$ \leq T\cdot 1 =T$, the competitive ratio is $O(n)$.
\end{proof}

\section{Proofs for Section \ref{sec:kdcd}}
\label{appendix:offline:cd}

\begin{claim}\label{claim:A_I}
The output of Algorithm $\cal A _{I}$ satisfies properties $\Phi.1$  and $\Phi.2$ and the running time of $\cal A _{I}$ is $O(n^{2})$.
\end{claim}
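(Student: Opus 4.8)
The plan is to track the sequence of values taken by the right pointer $d$. Write $\delta_{-1}=-1$, $\delta_0=0$, and let $\delta_1<\delta_2<\cdots<\delta_L=T$ be the successive values of $d$ after each iteration, so that iteration $m$ begins in state $(s,d)=(\delta_{m-2},\delta_{m-1})$, examines the window $(\delta_{m-2},\delta_{m-1}]$ to form $\hat S$, and ends with $d=\delta_m$. First I would record three facts that follow directly from the update rules: (i) $d$ is strictly increasing, since a non-empty iteration sets $d=d_j>\delta_{m-1}$ and an empty one sets $d=\min\{s_i:s_i>\delta_{m-1}\}>\delta_{m-1}$; (ii) every value of $d$ lies in $\{0\}\cup\{s_i\}\cup\{d_i\}$; and (iii) if iteration $m$ is non-empty and selects $j$, then $s_j\in(\delta_{m-2},\delta_{m-1}]$, $d_j=\delta_m$, and $j$ has the largest departure time among all points of $\hat S$. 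Also, the variable $index$ equals $1$ on odd iterations and $2$ on even ones, independently of whether the iteration is empty.

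For $\Phi.1$, I would consider two points $a,b$ added to the same group, at iterations $k_a<k_b$. Being in the same group forces $k_a\equiv k_b\pmod 2$, and since both iterations perform an addition (hence are non-empty) and $k_b>k_a$, we get $k_b\ge k_a+2$. By fact (iii), $d_a=\delta_{k_a}$ while $s_b>\delta_{k_b-2}\ge\delta_{k_a}=d_a$, so the intervals $[s_a,d_a]$ and $[s_b,d_b]$ are disjoint. As $a,b$ were arbitrary this gives $\Phi.1$; note that the empty iterations do not interfere, because the parity bookkeeping survives them.

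For $\Phi.2$, I would fix a time $t$ with some point $p$ present ($s_p\le t\le d_p$); the case $t=0$ is covered by iteration $1$, so assume $t>0$ and let $m$ be the unique index with $t\in(\delta_{m-1},\delta_m]$. If iteration $m$ is non-empty with selection $j$, then $s_j\le\delta_{m-1}<t\le\delta_m=d_j$, so $j$ is present at $t$. The difficult case, which I expect to be the main obstacle, is when iteration $m$ is empty, so that $\delta_m$ is the next arrival after $\delta_{m-1}$ and one must show $p$ was already captured. I would split on $s_p$: if $s_p=\delta_m=t$, then $p$ reappears as a candidate in iteration $m+1$ (whose window $(\delta_{m-1},\delta_m]$ contains $s_p$, with $d_p>\delta_m$), so that iteration is non-empty and its selection, having departure $\ge d_p\ge t$ and start $\le\delta_m=t$, is present at $t$; otherwise $s_p\le\delta_{m-1}$, and since $\hat S$ was empty at iteration $m$ (while $d_p\ge t>\delta_{m-1}$) we must in fact have $s_p\le\delta_{m-2}$. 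I would then locate the window $(\delta_{m'-2},\delta_{m'-1}]$ containing $s_p$, with $m'<m$, and note $d_p>\delta_{m-1}\ge\delta_{m'-1}$, so $p\in\hat S$ at iteration $m'$; by the max-departure rule the point selected there has departure $\ge d_p\ge t$ and start $\le\delta_{m'-1}<t$, hence is present at $t$. This establishes $\Phi.2$.

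Finally, for the running time I would bound the number of iterations: by facts (i)–(ii) the values $\delta_m$ are strictly increasing and drawn from the $O(n)$ event times, so the loop runs $O(n)$ times, and it terminates at $d=T$ because the point departing at $T$ enters $\hat S$ at the iteration whose window contains its arrival, forcing $d=T$. Each iteration scans $S$ once to build $\hat S$ and to compute an $\arg\max$ or a $\min$, costing $O(n)$. Hence the total running time is $O(n^2)$, which completes the claim. The only genuinely delicate part is the empty-iteration analysis of $\Phi.2$ — in particular ruling out an uncovered present instant at a window boundary — while $\Phi.1$ and the complexity bound are routine.
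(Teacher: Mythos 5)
Your proof is correct and follows essentially the same route as the paper's: identical sliding-window bookkeeping (your $\delta$-sequence is exactly the paper's sequence of window endpoints), the same parity argument for $\Phi.1$, the same non-empty/empty case split for $\Phi.2$, and the same iteration-counting argument for the $O(n^{2})$ bound. The only divergence is in the last sub-case of $\Phi.2$ (a present point $p$ with $s_p\le\delta_{m-2}$): the paper proves that this case simply cannot occur --- the point $j$ selected at iteration $m'$ satisfies $d_j=\delta_{m'}\le\delta_{m-1}<t$, contradicting $d_j\ge d_p\ge t$ forced by the max-departure rule --- whereas you use the same max-departure inequality positively and exhibit $j$ itself as a selected point present at $t$. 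Your version is sound (every inference follows from the case hypotheses, so coverage holds there, vacuously or not) and is marginally more economical, since it never needs the identity $d_j=\delta_{m'}$; the paper's contradiction buys the extra structural fact that such an instant $t$ is always covered either by the point that created the window containing $t$ or by a point arriving exactly at $t$. A further small plus on your side: your termination remark (the point departing at $T$ forces $d=T$ in the iteration whose window contains its arrival) also justifies that the $\min$ in the empty-window step is always well defined, a detail the paper glosses over.
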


\begin{proof}
Given an input sequence $S = ((s_{1},d_{2}),...,(s_{n},d_{n}))$, let $\cT$ be the union of all the intervals that contain at least one point.
Note that, for the original input sequence, we may assume $\cT = [0, T]$ without loss of generality. However, this may change in later iterations.

In each round of $\cal A _{I}$'s {\bf while} loop,
the algorithm considers all the points that arrive within the sliding window $(s, d]$ and depart after the sliding window.
If there are such points, then it selects the point $j$ with the latest departure time
 among them and moves the sliding window
to $(d, d_j]$. It puts point $j$ in $\cI_{index}$, which alternates between $\cI_1$ and $\cI_2$ in different rounds.
If there is no such point, then it finds the earliest arriving time after $d$, $s' = \min\limits_{i\in S, s_i>d} s_i$, and moves the sliding window to $(d, s']$.
Note that whichever case happens, the new sliding window satisfies $s<d$, thus is well defined.

Now we prove the output $\{\cI_1, \cI_2\}$  satisfies $\Phi.1$.
Indeed, note that the variable $index$ has the same value for any two rounds $l$ and $l+2$. Let $(s, d]$ be the sliding window in round $l$.
If point $j$ is added to $\cI_{index}$ in round $l$ and point $j'$ is added to $\cI_{index}$ in round $l+2$,
then the three sliding windows in rounds $l, l+1$ and $l+2$ are, respectively,
$(s, d], (d, d_j]$ and $(d_j, x]$ with some $x> d_j$, computed either in Step \ref{step:8-8} or Step \ref{step:8-11}.
By the definition of $j'$, $s_{j'}> d_j$ and points $j$ and $j'$ do not overlap.
By induction, we have that no two points in $\cI_1$ overlap, neither do any two points in $\cI_2$.
Thus property $\Phi.1$ holds.

Next, we prove the output $\{\cI_1, \cI_2\}$ satisfies $\Phi.2$: in other words, the time intervals of points in $\cI = \cI_1\cup \cI_2$ cover $\cT$.
To do so, note that the sliding windows in Algorithm ${\cal A_I}$ are all disjoint
and their union covers $[0, T]$. Accordingly, for any time $t\in [0, T]$ with at least one point $j'$ at time $t$ in $S$,
there exists a unique round $i$ with $t$ in its sliding window $(s^i, d^i]$: that is,
\begin{equation}\label{equ:8-1}
s_{j'}\leq t\leq d^i \mbox{ and } d_{j'}\geq t > s^i.
\end{equation}
We distinguish three cases.
\begin{enumerate}[{Case} 1.]
\item
If $s^i=-1$ (i.e., round $i$ is the first round in the algorithm), then it must be $d^i = 0 = s_{j'} = t$ and $d_{j'}>0$.
By definition, $\hat{S}\neq \emptyset$ and the chosen point $j$ is such that $s_j = 0$ and $d_j\geq d_{j'}$. Therefore at least one point at time $t$ (i.e., point $j$) is in $\cI$.

\item
$s^i\geq 0$ and the sliding window $(s^i, d^i]$ is obtained from round $i-1$ in Step \ref{step:8-8}.
Letting $j$ be the point chosen in round $i-1$ in Step \ref{step:8-6}, we have $s_j \leq s^i$ and $d_j = d^i$, thus $j$ is a point present at time $t$ in $S$ and $j\in \cI$.

\item
$s^i\geq 0$ and the sliding window $(s^i, d^i]$ is obtained from round $i-1$ in Step \ref{step:8-11}.
Letting $(s^{i-1}, d^{i-1}]$ be the sliding window in round $i-1$. By the construction of the algorithm,
we have $\hat{S} = \emptyset$ in round $i-1$ and $s^i = d^{i-1}$, thus
$d_{j'}> d^{i-1}$ by the second part of Equation~\ref{equ:8-1}. Below we consider the three possible intervals for $s_{j'}$.

If $s_{j'}> d^{i-1}$, then $d^i\leq s_{j'}$ by Step \ref{step:8-11}, thus $d^i = s_{j'} = t$ by the first part of Equation \ref{equ:8-1}.
Since $d_{j'}> s_{j'}$, we have $j'\in \hat{S}$ in round $i$, thus some point $j$ is chosen in Step \ref{step:8-6} of round $i$ with $s_j\leq d^i$ and $d_j> d^i$.
Therefore $j$ is present at time $t$ and $j\in \cI$, as desired.

If $s^{i-1}< s_{j'}\leq d^{i-1}$, then $j'\in \hat{S}$ in round $i-1$, which is a contradiction. Thus $j'$ cannot be in this interval.

Finally, if $s_{j'}\leq s^{i-1}$, then there exists another round $i'$ and sliding window $(s^{i'}, d^{i'}]$ with $i'<i-1$, which contains $s_{j'}$.
Since $d^{i'}\leq s^{i-1} < d^{i-1}$, we have $d_{j'}> d^{i'}$ and $j'\in \hat{S}$ in round~$i'$.
Accordingly, there is a point $j$ chosen in round $i'$ and $d^{i'+1} = d_j$.
However,
$d_{j'} \geq t> s^i = d^{i-1} \geq d^{i'+1} = d_j$, contradicting the fact that $j$ has the latest departure time among all points in $\hat{S}$ in round $i'$.
Thus $j'$ cannot be in this interval either.

Accordingly, in Case 3 we have $s_{j'}> d^{i-1}$ and there is a point in $\cI$ which is present at $t$.
\end{enumerate}
Combining all the cases, property $\Phi.2$ holds.

%
%

Finally, because there are $2n$ arriving and departure times in the sequence $S$ and the end-points of the sliding windows only occur at those times,
algorithm $\cal A_I$ uses at most $2n$ sliding windows (including $(-1, 0]$) and at most $2n$ rounds.
Since each round  takes $O(n)$ time, the running time of $\cal A_I$ is $O(n^2)$.
%
Therefore Claim \ref{claim:A_I} holds.
\end{proof}

%
%
%
%
%
%
%
%


\paragraph{Theorem \ref{thm:integral}.} (restated) {\em
For any $k\geq 1$ and $k$-dimensional polytope $P$, given any polynomial-time online algorithm $\mathcal{A}_{ATWC}$ for the ATWC problem with competitive ratio $\sigma$, there is
a polynomial-time offline algorithm $\cA_{CD}$ for the CD problem with competitive ratio $\sigma\max\limits_{i\geq 1}\frac{Disp(i;P)}{Disp(2i;P)} \leq 2\sigma$, using $\mathcal{A}_{ATWC}$ as a black-box.
}

\begin{proof}

We first consider the running time of $\cA_{CD}$.
Each time it runs $\cal A_{I}$, the number of points in $S$ decreases by at least one,
thus $\cA_{CD}$ runs for at most $n$ rounds in the {\bf while} loop: that is, $r\leq n$ in the end.
By Claim~\ref{claim:A_I}, it takes $O(n^3)$ time to finish the {\bf while} loop of $\cA_{CD}$.
Therefore the total running time of $\cA_{CD}$ is $O(n^3 + Time(\cA_{ATWC}))$,
where the second term is the running time of the blackbox algorithm ${\cal A}_{ATWC}$ on $2r\leq 2n$ points.
Since ${\cal A}_{ATWC}$ runs in polynomial time, $\cA_{CD}$ runs in polynomial time.

Below we analyze the competitive ratio of $\cA_{CD}$.
After the {\bf while} loop of $\cA_{CD}$,
$S$ is partitioned into $\{\cI_{1},\cI_{2}\},\dots, \{\cI_{2r-1},\cI_{2r}\}$.
Note that for each $i\in \{0, 1, \dots, r-1\}$,
by Claim~\ref{claim:A_I}, $\{\cI_{2i+1},\cI_{2i+2}\}$ satisfies property $\Phi.1$ and
can be located at two positions,
 one for all points in $\cI_{2i+1}$ and the other for all points in $\cI_{2i+2}$.
Given the online instance created in Step \ref{step:CD-7}, at each time step $i\leq r-1$, algorithm $\cA_{ATWC}$
creates 2 positions $x_{2i+1}$ and $x_{2i+2}$, for $\cI_{2i+1}$ and $\cI_{2i+2}$.


Let $T_{1},\cdots,T_{l}$ be the sequence of time intervals
sliced by the arriving times and departure times of all points in $S$.
For each $i\in [l]$, let $T_i = [left_i, right_i]$ and $|T_i| = right_i - left_i$ be the length of~$T_i$. 
Note that $left_1 = 0$ and $right_l = T$.
By construction, the set of present points only changes at the end-points of the intervals
and stays the same within each interval.
Accordingly, given the optimal offline positions $X_1,\dots, X_n\in P$,
$$OPT_C(S; P) = \int_0^T d_{min}(t; X_1,\dots, X_n) dt = \sum_{i\in[l]} |T_i| d_{min}(T_i; X_1, \dots, X_n),$$
where $d_{min}(T_i; X_1, \dots, X_n)$ is the minimum distance incurred by $X_1, \dots, X_n$ at any time $t$
within~$T_i$ (that is, $t\in (left_i, right_i)$).

Now we consider the minimum distance incurred by algorithm $\cA_{CD}$ in each $T_i$.
For each $i\in[l]$, let $S_{i} = \{j | j\in[n], s_j < right_i, d_j> left_i\}$ be the set of points that overlap with $T_{i}$,
and $n_{i}=|S_{i}|$.
Again by Claim \ref{claim:A_I} and in particular by property $\Phi.2$,
all points in $S_i$ are removed from $S$ in the first $n_i$ rounds of algorithm $\cA_{CD}$ and
$$S_i \subseteq \cI_1\cup \cI_2 \cup \dots\cup\cI_{2n_i}.$$
Accordingly,
all points in $S_i$ are located at the first $2n_i$ positions
created by algorithm $\cA_{ATWC}$ from time 0 to time $n_i-1$.
Since $\cA_{ATWC}$ is a $\sigma$-competitive algorithm,
it ensures that at time $n_i-1$ in the online instance,
the minimum distance
incurred by positions $x_1, \dots, x_{2n_i}$
is at least $\frac{Disp(2n_{i};P)}{\sigma}$.
Thus the cumulative distance in $T_i$  according to $\cA_{CD}$ is
$$|T_i|d_{min}(T_i; x_1, \dots, x_{2n_i}) \geq \frac{|T_i| Disp(2n_{i};P)}{\sigma},$$
and the total cumulative distance is
$$\int_0^T d_{min}(t; x_1,\dots, x_{2r}) dt = \sum_{i\in [l]} |T_i| d_{min}(T_i; x_1,\dots, x_{2n_i}) \geq \sum_{i\in [l]} \frac{|T_i| Disp(2n_{i};P)}{\sigma}.$$

Again because the set of present points stays the same throughout $T_i$, we have
$$d_{min}(T_i; X_1, \dots, X_n) \leq Disp(n_i; P)$$ and
$$OPT_C(S; P) \leq \sum_{i\in[l]} |T_i| Disp(n_i; P).$$
Therefore the competitive ratio is no larger than
$$\frac{\sigma \sum_{i\in[l]} |T_i| Disp(n_i; P)}{\sum_{i\in [l]} |T_i| Disp(2n_{i};P)}
\leq \sigma\cdot \max_{i\in [l]} \frac{Disp(n_i; P)}{Disp(2n_{i};P)}\leq \sigma \max\limits_{i\geq 1}\frac{Disp(i;P)}{Disp(2i;P)},$$
as we wanted to show.
Finally, all that remains is to prove the following claim.

\begin{claim}\label{claim:2i-i}
For any $k\geq 1$, $i\geq 1$ and $k$-dimensional polytope $P$,
$$\frac{Disp(i;P)}{Disp(2i;P)}\leq 2.$$
\end{claim}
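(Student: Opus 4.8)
The plan is to establish the equivalent statement $Disp(2i;P) \geq \tfrac{1}{2}Disp(i;P)$ by an explicit construction that ``splits'' an optimal $i$-point dispersion into a $2i$-point one while losing at most a factor of two. First I would set $d = Disp(i;P)$ and fix an optimal placement $X_1,\dots,X_i \in P$, so that $dis(X_a,X_b)\geq d$ for all $a\neq b$ and $dis(X_a,\partial P)\geq d$ for every $a$. For each center $X_a$ I would replace it by the diametral pair $X_a \pm \tfrac{d}{4}e_1$, where $e_1$ is a fixed unit direction; this produces $2i$ candidate points, two per cluster. The underlying intuition is that each optimal point has a ``private'' ball of radius $d/4$ around it, and within that ball we can afford to host two points a half-distance apart without interfering with the other clusters.

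Then I would verify the three requirements that make these $2i$ points a valid dispersion with parameter $d/2$. (i) \emph{Feasibility:} since $dis(X_a,\partial P)\geq d > d/4$, the closed ball $B(X_a,d/4)$ lies inside $P$, so both new points belong to $P$. (ii) \emph{Boundary distance:} any point at distance at most $d/4$ from $X_a$ has boundary distance at least $d - \tfrac{d}{4} = \tfrac{3d}{4} \geq \tfrac{d}{2}$. (iii) \emph{Pairwise distance:} the two points of the same cluster are exactly $d/2$ apart (the diameter of a radius-$d/4$ ball), and for points $p \in B(X_a,d/4)$ and $q \in B(X_b,d/4)$ from different clusters the triangle inequality gives $dis(p,q) \geq dis(X_a,X_b) - \tfrac{d}{4} - \tfrac{d}{4} \geq d - \tfrac{d}{2} = \tfrac{d}{2}$. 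Hence the minimum over all boundary and pairwise distances is at least $d/2$, so $Disp(2i;P)\geq \tfrac{d}{2}$, which rearranges to the claim.

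The construction is deliberately dimension-free: it works verbatim for every $k\geq 1$ (for $k=1$ the ``ball'' is an interval and its two endpoints form the diametral pair), and it uses nothing about $P$ beyond the defining inequalities of an optimal placement, so no insphere or regularity assumption enters and Lemma~\ref{lem:exist} is not even needed. Consequently I do not anticipate a genuine obstacle; the only point requiring a moment's care is checking that the intra-cluster pair can attain distance exactly $d/2$ while remaining inside $P$, which is immediate because a ball of radius $d/4$ has diameter $d/2$ and sits wholly within $P$. I expect the full argument to run to just a few lines.
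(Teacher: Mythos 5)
Your proof is correct and is essentially the paper's own argument: the paper likewise takes an optimal $i$-point placement, shifts each point by $\pm\frac{Disp(i;P)}{4}$ along the first coordinate axis (your $\pm\frac{d}{4}e_1$), and verifies the same three conditions (intra-pair distance exactly $d/2$, inter-pair distance $\geq d/2$ via the triangle inequality, boundary distance $\geq 3d/4$). No meaningful difference in approach.
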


\begin{proof}
Use $\{x_l\}_{l\in [i]}$ to denote the optimal positions for dispersing $i$ points in~$P$
and
let $\delta=\left(\frac{Disp(i;P)}{4},0,0,\dots, 0\right)$ be a $k$-dimensional vector.
Now we can define a set $X'$ of $2i$ positions,
$$X'=\{x'_j | x'_j = x_{\lceil j/2 \rceil} + \delta(-1)^{j \text{ mod }2}\}_{j\in [2i]}.$$
That is, $x'_{2\lceil j/2 \rceil-1}$ and $x'_{2\lceil j/2 \rceil}$ are obtained by shifting the first coordinate of
$x_{\lceil j/2 \rceil}$
by $\frac{Disp(i;P)}{4}$ in two directions.
By the definition of $X'$, for any $j$ and $j'$ in $[2i]$, 
\begin{itemize}
\item if ${\lceil j/2 \rceil} = {\lceil j'/2 \rceil}$ then
\begin{equation}\label{equ:CD-1}
dis(x'_j,x'_{j'})=\frac{Disp(i;P)}{2};
\end{equation}
\item if ${\lceil j/2 \rceil} \neq {\lceil j'/2 \rceil}$ then
\begin{equation}\label{equ:CD-2}
dis(x'_j,x'_{j'})\geq \frac{Disp(i;P)}{2}.
\end{equation}
\end{itemize}
Indeed, Equation \ref{equ:CD-1} is because $x'_j$ and $x'_{j'}$ only differ at their first coordinates by $\frac{Disp(i;P)}{2}$;
and Equation \ref{equ:CD-2} is by applying the triangle inequality twice and because $dis(x_{\lceil j/2 \rceil}, x_{\lceil j'/2 \rceil})\geq Disp(i; P)$:
\begin{eqnarray*}
& & dis(x_j',x_{j'}') \geq dis(x'_j,x_{\lceil j'/2 \rceil})-dis(x'_{j'},x_{\lceil j'/2 \rceil})
= dis(x'_j,x_{\lceil j'/2 \rceil})-\frac{Disp(i;P)}{4}\\
& \geq & dis(x_{\lceil j/2 \rceil}, x_{\lceil j'/2 \rceil})-
dis(x'_j,x_{\lceil j/2 \rceil})-\frac{Disp(i;P)}{4}\\
&=&  dis(x_{\lceil j/2 \rceil}, x_{\lceil j'/2 \rceil})- \frac{Disp(i;P)}{2}\geq \frac{Disp(i;P)}{2}.
\end{eqnarray*}

%
%
%
%
%
%
%
%


Moreover, for any $j\in[2i]$, since $x'_j$ differs from $x_{\lceil j/2 \rceil}$ only in the first coordinate by $\frac{Disp(i;P)}{4}$,
\begin{equation}\label{equ:CD-3}
dis(x'_j, \partial P) \geq dis(x_{\lceil j/2 \rceil}) - \frac{Disp(i;P)}{4} \geq Disp(i; P) - \frac{Disp(i; P)}{4} > \frac{Disp(i;P)}{2}.
\end{equation}
By Equations \ref{equ:CD-1}, \ref{equ:CD-2} and \ref{equ:CD-3}, and by the definition of $Disp(2i; P)$, we have
$$Disp(2i; P)\geq d_{min}(X') \geq \frac{Disp(i;P)}{2},$$
and Claim \ref{claim:2i-i} holds.
\end{proof}

By Claim \ref{claim:2i-i}, $\max\limits_{i\geq 1}\frac{Disp(i;P)}{Disp(2i;P)}\leq 2$ and Theorem \ref{thm:integral} holds.
\end{proof}

Combining Theorems \ref{thm:1-dtight}, \ref{thm:2d-worst}, \ref{thm:kd:approv} and \ref{thm:integral},
 we immediately have the following.

\begin{corollary}
For the CD problem,
there exists a polynomial-time offline algorithm that achieves competitive ratio arbitrarily close to $4\ln2$ in dimension 1;
competitive ratio $3.182$ for 2-dimensional squares;
and competitive ratio $\frac{4}{1-\epsilon}$ for any $k\geq 2$, $k$-dimensional polytope $P$ and $\epsilon>0$.
\end{corollary}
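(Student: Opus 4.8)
The plan is to instantiate the black-box reduction of Theorem~\ref{thm:integral} with each of the three online ATWC algorithms established earlier, in each case multiplying the ATWC competitive ratio by the blow-up factor $\max_{i\geq 1}\frac{Disp(i;P)}{Disp(2i;P)}$, which Theorem~\ref{thm:integral} (via Claim~\ref{claim:2i-i}) bounds by~$2$. First I would recall exactly what Theorem~\ref{thm:integral} delivers: for any $k$ and any $k$-dimensional polytope $P$, any polynomial-time online ATWC algorithm with competitive ratio $\sigma$ yields, as a black box, a polynomial-time offline CD algorithm whose competitive ratio is $\sigma\max_{i\geq 1}\frac{Disp(i;P)}{Disp(2i;P)}\leq 2\sigma$. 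Since all three ATWC algorithms (Theorems~\ref{thm:1-dtight},~\ref{thm:2d-worst}, and~\ref{thm:kd:approv}) run in polynomial time, the composed CD algorithms do too; the entire content of the corollary is therefore the arithmetic of substituting the three values of $\sigma$ and applying the factor-$2$ bound.

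For $k=1$ and the unit segment, Theorem~\ref{thm:1-dtight} provides a polynomial-time online ATWC algorithm with $\sigma$ arbitrarily close to $2\ln 2$. Feeding this into Theorem~\ref{thm:integral} and using $\max_{i\geq 1}\frac{Disp(i;P)}{Disp(2i;P)}\leq 2$ gives a CD competitive ratio at most $2\sigma$, which can thus be made arbitrarily close to $4\ln 2$. If one wishes to confirm this is essentially tight, note that for the unit segment $Disp(n;P)=\frac{1}{n+1}$, so $\frac{Disp(i;P)}{Disp(2i;P)}=\frac{2i+1}{i+1}\to 2$ as $i\to\infty$; but only the inequality $\leq 2$ is needed to conclude the upper bound $4\ln 2+\epsilon$.

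For the 2-dimensional square, Theorem~\ref{thm:2d-worst} gives $\sigma=1.591$, so Theorem~\ref{thm:integral} yields a CD competitive ratio at most $2\times 1.591=3.182$. For arbitrary $k\geq 2$ and any $k$-dimensional polytope $P$ of positive covering rate, Theorem~\ref{thm:kd:approv} gives $\sigma=\frac{2}{1-\epsilon}$ for every constant $\epsilon>0$, so the reduction yields a CD competitive ratio at most $\frac{4}{1-\epsilon}$.

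There is essentially no obstacle to overcome here; the corollary is a direct composition. The only two points requiring any care are, first, that the blow-up factor of the reduction is a \emph{uniform} $2$ valid for every polytope (guaranteed by Claim~\ref{claim:2i-i}), so that no polytope-specific estimate of $\max_i\frac{Disp(i;P)}{Disp(2i;P)}$ is required in the $k=2$ and general-$k$ cases; and second, that the black-box composition of a polynomial-time reduction with a polynomial-time ATWC algorithm preserves polynomial running time, which is immediate from the running-time guarantees already proved for each ingredient.
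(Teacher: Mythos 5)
Your proposal is correct and is exactly the paper's argument: the paper derives this corollary by directly combining Theorems~\ref{thm:1-dtight}, \ref{thm:2d-worst}, and \ref{thm:kd:approv} with the black-box reduction of Theorem~\ref{thm:integral}, using the uniform factor-$2$ bound from Claim~\ref{claim:2i-i}, just as you do. Your extra observations (the tightness check $\frac{Disp(i;P)}{Disp(2i;P)}=\frac{2i+1}{i+1}\to 2$ on the segment, and the preservation of polynomial running time under composition) are consistent with, and slightly more explicit than, the paper's one-line justification.
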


\section{Dispersion Without the Boundary Condition}
\label{appendix:scattering}
The literature has considered the dispersion problem when the distance to the boundary is not taken into consideration:
referred to as {\em spreading points} \cite{cabello2007approximation} or {\em facility dispersion} \cite{ravi1994heuristic}).
The objective is
$$SP(n; P) \triangleq \max_{X_1,\dots, X_n \in P}\ \min_{i, j\in [n]} dis(X_i, X_j).$$
In this section we consider the online dispersion problem for this objective.

Recall that for any instance $S = ((s_1, d_1), \dots, (s_n, d_n))$, $T = \max_{i\in [n]} d_i$.
Given locations $X = (X_1,\dots, X_n)$, for any $t\leq T$, let
$$d^{SP}_{min}(t; X) = \min_{i, j\in [n] : s_i\leq t \leq d_i, s_j\leq t\leq d_i} dis(X_i, X_j)$$
be the minimum distance among the points that are present at time $t$.
When $X$ is clear from the context, we may write $d^{SP}_{min}(t)$ for short.
Here we also consider two natural objectives:
the {\em all-time worst-case} (ATWC) problem, where the objective is
$$OPT^{SP}_A(S; P)  \triangleq \max_{X_1,\dots, X_n} \min_{t\leq T} d^{SP}_{min}(t);$$
and the {\em cumulative distance} (CD) problem, where the objective is
$$OPT^{SP}_C(S; P) \triangleq \max_{X_1, \dots, X_n} \int_0^T d^{SP}_{min}(t) dt.$$
Similar to Claim \ref{clm:atwc=disp},
letting $m = \max_{t\leq T} |\{i : s_i\leq t\leq d_i\}|$, we have $OPT^{SP}_A(S;P) = SP(m;P)$.

In the dispersion without boundary condition problem,
the optimal solution and the optimal competitive ratio of online algorithms may be
very different from the dispersion problem.
For instance, consider the polytope shown in Figure \ref{fig:scp}.
When the number of points is small,
the optimal dispersion without boundary condition locates most points in the left part of the polytope, while
the optimal dispersion locates most points in the right part.
\begin{figure}[htbp]
\begin{center}
\setlength{\unitlength}{0.6cm}
\thicklines
\begin{picture}(16,2)

\put(0, 0){\line(1,0){16}}
\put(0, 0.3){\line(1,0){14}}
\put(14, 2){\line(1,0){2}}
\put(14, 0.3){\line(0,1){1.7}}
\put(16, 0){\line(0,1){2}}
\put(0, 0){\line(0,1){0.3}}

\end{picture}
\caption{A $2$-dimensional polytope}
\label{fig:scp}
\end{center}
\end{figure}
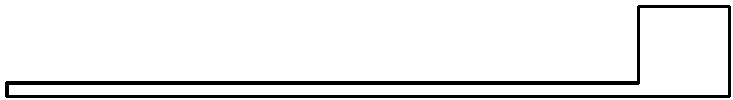

However, we show that most of our techniques for the dispersion problem
can be carried through to dispersion without boundary condition.
Indeed, for the 1-dimensional case, by locating the first two positions at the two end points of the segment,
we can lower- and upper-bound the optimal
competitive ratio for online ATWC as $2\ln 2$.
Because there is no boundary condition,
this lower-bound continues to hold for higher dimensions.
Indeed, we have the following.

\begin{theorem}
For any $k \geq 1$, no algorithm achieves a competitive ratio better
than $2\ln 2$ for online ATWC without boundary condition for arbitrary $k$-dimensional polytopes.
\end{theorem}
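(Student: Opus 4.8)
The plan is to first pin the one-dimensional bound and then flatten the polytope into a thin slab to reach every $k\ge 2$. In dimension~$1$, note that $SP(n;[0,1])=\frac{1}{n-1}$ for $n\ge 2$: the optimal boundary-free dispersion pushes the two extreme points to the endpoints and spreads the remaining $n-2$ evenly, so the two extreme points play exactly the role that $\partial P$ played under the boundary condition (concretely $SP(n;[0,1])=Disp(n-2;[0,1])$). I would therefore replay the adversary of Theorem~\ref{thm:1d-lower} under this index shift, with $\sigma'_r=\sum_{i=r+1}^{2r}\frac1i$ as there: $r+1$ points arrive simultaneously, then points $r+2,\dots,2r+1$ arrive one by one, and finally all depart. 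If an online algorithm kept its ratio below $2\sigma'_r$ at every prefix, then upon arrival each stage-two point $i$ would have to sit at distance more than $\frac{SP(i)}{2\sigma'_r}$ from all present points, hence would contribute more than $2\cdot\frac{SP(i)}{2\sigma'_r}$ of length. The bijection bookkeeping of Theorem~\ref{thm:1d-lower} then applies almost verbatim, the only new wrinkle being that a stage-two point may land in an end margin rather than split an inner gap; such a placement merely consumes extra length while leaving an inner gap unsplit, so it is charged exactly as an ``unsplit interval'' is in Theorem~\ref{thm:1d-lower}. Summing the charges yields total consumed length exceeding $\frac{1}{\sigma'_r}\sum_{h=r+1}^{2r}\frac1h=1$, contradicting that every point lies in $[0,1]$; letting $r\to\infty$ gives the bound $2\ln 2$.

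The crucial structural feature is that this adversary fixes its entire event sequence in advance, with only the analysis inspecting the algorithm's placements, so the same sequence can be replayed in higher dimensions. For $k\ge 2$ and a small parameter $\tau>0$, let $P_\tau=[0,1]\times[0,\tau]^{k-1}$. I would first record two facts about this slab:
\[
\tfrac{1}{n-1}=SP(n;[0,1])\le SP(n;P_\tau)\le \tfrac{1}{n-1}+\tau\sqrt{k-1}.
\]
The left inequality holds because $P_\tau$ contains the segment, and the right one follows by projecting any $n$ points onto the $x$-axis, observing that some two of the projections are within $\frac{1}{n-1}$ of each other and that passing from a projected distance to the true distance changes it by at most $\tau\sqrt{k-1}$.

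Now run the fixed one-dimensional adversary on $P_\tau$ against a $k$-dimensional algorithm $\mathcal A$ and project its placements onto the $x$-axis. The projection is itself a legal online one-dimensional strategy, so the dimension-$1$ analysis supplies a stage-two index $i^\ast$ at which the projected minimum distance is at most $\frac{1}{2\sigma'_r(i^\ast-1)}$. Truncating the instance immediately after point $i^\ast$ (admissible since $\mathcal A$ is online and ignorant of the future) makes $OPT^{SP}_A\ge \frac{1}{i^\ast-1}$, whereas the true $k$-dimensional minimum distance realized by $\mathcal A$ at that moment is at most the projected one plus $\tau\sqrt{k-1}$, i.e.\ at most $\frac{1}{2\sigma'_r(i^\ast-1)}+\tau\sqrt{k-1}$. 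Hence the ratio achieved on $P_\tau$ is at least $\frac{2\sigma'_r}{1+2\sigma'_r(i^\ast-1)\tau\sqrt{k-1}}$. For a fixed $r$ we have $i^\ast\le 2r+1$, so sending $\tau\to 0$ drives this quantity to $2\sigma'_r$; taking the supremum over $r$ gives $2\ln 2$. Since the competitive ratio is measured over all polytopes, and the case $k=1$ is the first paragraph (where the $\tau\sqrt{k-1}$ term vanishes), the theorem follows for every $k\ge 1$.

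I expect two points to require the most care. The genuinely new step relative to Theorem~\ref{thm:1d-lower} is the margin case in dimension~$1$: one must check that a point dropped outside the current span still fits the charging scheme, which it does because the number of margin placements never exceeds the number of inner gaps left unsplit, so the bijection can always be completed. The second delicate point is the interaction of the two limits: the clean route is to freeze $r$ (so that $i^\ast\le 2r+1$ is a finite bound), send $\tau\to 0$ to recover $2\sigma'_r$ exactly, and only afterwards let $r\to\infty$, thereby avoiding any coupling between the rates of $\tau$ and $r$. A minor verification along the way is that the coordinate projection of an online $k$-dimensional algorithm is indeed an online one-dimensional algorithm, to which the dimension-$1$ lower bound applies unchanged.
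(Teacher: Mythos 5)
Your proposal is correct and follows essentially the same route as the paper's own proof: replay the Theorem~\ref{thm:1d-lower} adversary in dimension~1 using $SP(n;[0,1])=\frac{1}{n-1}$ (the index shift by two), and then pass to $k\geq 2$ via a thin hyperrectangle $[0,1]\times[0,\tau]^{k-1}$ approximating a segment, letting $\tau\to 0$ before $r\to\infty$. In fact your writeup is more detailed than the paper's (which states these steps rather tersely), notably in handling the end-margin placements within the charging bijection and in making the projection-plus-limit-ordering argument explicit.
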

\begin{proof}
Consider a unit segment in a $k$-dimensional space. The lower-bound of $2\ln 2$ can be proved by directly applying Theorem \ref{thm:1d-lower}.
If we restrict ourselves to non-degenerate $k$-dimensional polytopes,
we can consider the hyperrectangle with radius $\epsilon$ in $k-1$ dimensions and radius 1 in one dimension.
It can be viewed as an approximation of a segment in $k$-dimensional space.
When $\epsilon$ is sufficiently small, $SP(n;P) \approx \frac{1}{n-1}$ for $n\geq 2$.
Thus, by applying a similar technique to the proof of Theorem \ref{thm:1d-lower},
the lower-bound of $2\ln 2$ holds.
\end{proof}

For squares, our technique in designing the online algorithm can still be applied, but the parameters need to be re-calculated. Accordingly, below we only show how to generalize the greedy algorithm for arbitrary polytopes to dispersion without boundary condition.

\begin{theorem}
For any $k\geq 2$, polytope $P$ with covering rate at least $\gamma>0$, and $\epsilon>0$,
there exists a deterministic polynomial-time algorithm for online ATWC without boundary condition,
with competitive ratio $\frac{2}{1-\epsilon}$ and running time polynomial in $\frac{1}{(\gamma\epsilon)^k}$.
\end{theorem}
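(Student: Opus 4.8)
The plan is to mirror the proof of Theorem~\ref{thm:kd:approv}, replacing the boundary-aware volume argument of Lemma~\ref{lem:exist} by a purely combinatorial pigeonhole estimate, and then to repair the one place where dropping the boundary term genuinely breaks the discretization. As in the other ATWC proofs, it suffices to treat the insert-only worst case $S=((1,m+1),\dots,(m,m+1))$, since $OPT^{SP}_A(S;P)=SP(m;P)$ and the minimum distance only degrades as positions are created; so I only need to show that the $m$ positions $q_1,\dots,q_m$ produced by a discretized greedy rule satisfy $\min_{i<j}dis(q_i,q_j)\ge \frac{1-\epsilon}{2}\,SP(m;P)$.

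First I would record the continuous $2$-approximation. For \emph{any} $n$ points $q_1,\dots,q_n\in P$, I claim $r_n:=\max_{p\in P}\min_{i\le n}dis(p,q_i)\ge \tfrac12 SP(n+1;P)$: if not, then letting $o_1,\dots,o_{n+1}$ be an optimal $SP$-configuration (pairwise distance $\ge SP(n+1;P)$), each $o_j$ would lie within $\tfrac12 SP(n+1;P)$ of some $q_{\phi(j)}$, and by pigeonhole two of them would share the same $q_i$, contradicting $dis(o_j,o_{j'})\ge SP(n+1;P)$ through the triangle inequality. This replaces Lemma~\ref{lem:exist} and, crucially, uses no boundary term.

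Second is the discretization step, where the main obstacle lies. In Claim~\ref{lem:slice} one rounds the continuous greedy maximizer $p^\ast$ to the center $c$ of its grid cube; for $Disp$ this center automatically lay in $P$ because $p^\ast$ was $\tfrac12 Disp(n;P)$ away from $\partial P$. Without the boundary term the $SP$-maximizer $p^\ast$ may sit on or near $\partial P$, so its cube center can fall outside $P$. I would fix this by retreating $p^\ast$ slightly toward the center $o$ of the maximum inscribed cube $C$ (edge $\ge\gamma$, so $d(o,\partial P)\ge\gamma/2$): set $p'=(1-t)p^\ast+t\,o$. Since $P$ is convex, $d(\cdot,\partial P)$ is concave along segments, giving $d(p',\partial P)\ge t\gamma/2$, while $\min_i dis(p',q_i)\ge r_n-t\,dis(p^\ast,o)\ge r_n-t\sqrt{k}$ because $\mathrm{diam}(P)\le\sqrt k$. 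Slicing the bounding unit cube into $m^k$ cubes of edge $1/m$ and letting $c$ be the center of the cube containing $p'$, one has $dis(p',c)\le \tfrac{\sqrt k}{2m}$; choosing $t=\tfrac{\sqrt k}{m\gamma}$ forces $dis(p',c)\le d(p',\partial P)$, hence $c\in P$, and then $\min_i dis(c,q_i)\ge r_n-t\sqrt k-\tfrac{\sqrt k}{2m}\ge r_n-\tfrac{3k}{2m\gamma}$. Using the inscribed-grid bound $SP(n;P)\ge \frac{\gamma}{n^{1/k}+2}$ (the same estimate behind Claim~\ref{lem:slice}), the choice $m=\big\lceil \frac{3k((n+1)^{1/k}+2)}{\gamma^2\epsilon}\big\rceil$ makes the error at most $\epsilon r_n$, so some grid center $c\in P$ achieves $\min_i dis(c,q_i)\ge (1-\epsilon)r_n\ge \frac{1-\epsilon}{2}SP(n+1;P)$.

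Finally I would run the greedy that, whenever a new position is needed, selects the grid center in $P$ maximizing the minimum distance to the existing positions. Combining the two steps, the created point $q_{n+1}$ satisfies $\min_{i\le n}dis(q_{n+1},q_i)\ge \frac{1-\epsilon}{2}SP(n+1;P)$; since $SP(\cdot;P)$ is non-increasing, an induction identical to the one after Lemma~\ref{lem:exist} yields $\min_{i<j\le n}dis(q_i,q_j)\ge \frac{1-\epsilon}{2}SP(n;P)$ for all $n$, and in particular competitive ratio $\frac{2}{1-\epsilon}$ at $n=m$. Each position is found by scanning the $O(m^k)$ grid centers and, for each, its distance to the $\le n$ present points, so the per-step time is $O(m^k n)$, polynomial in $n$ and in $\frac{1}{(\gamma\epsilon)^k}$, using exactly the membership and distance oracles of Theorem~\ref{thm:kd:approv}. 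The one new ingredient, and the step I expect to be most delicate, is the convex retreat toward the inscribed cube, which simultaneously guarantees $c\in P$ and controls the loss in the minimum distance.
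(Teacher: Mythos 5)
Your continuous step coincides with the paper's: the pigeonhole argument showing that any $n$ occupied positions leave a point $p^*\in P$ with $\min_i dis(p^*,q_i)\ge \frac{1}{2}SP(n+1;P)$ is exactly the covering argument in the paper's own proof, and the insert-only reduction, the induction over arrivals, and the choice of grid resolution $m$ are all in order. So the $2$-competitiveness of the exact greedy is established the same way in both proofs.

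The gap is in your discretization repair, and it is a real one. The theorem is stated for an \emph{arbitrary} polytope $P$ with covering rate at least $\gamma$; convexity is nowhere assumed, and the motivating example of this very section (Figure~\ref{fig:scp}) is non-convex. Your retreat $p'=(1-t)p^*+t\,o$ uses convexity twice: to guarantee $p'\in P$ at all, and to get concavity of $d(\cdot,\partial P)$ along the segment, which fails on non-convex regions. Moreover the failure infects the algorithm itself, not just its analysis: take $P$ to be the union of the maximum inscribed cube (edge $\gamma$) with a sliver of width smaller than the grid spacing $\frac{1}{m}$ reaching the far corner of the bounding cube, a shape of the same flavor as Figure~\ref{fig:scp}. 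Then no grid center lies in the sliver, every grid center in $P$ lies in the inscribed cube, and already for two points your greedy-over-grid-centers-in-$P$ yields minimum distance at most $\gamma\sqrt{k}$ while $SP(2;P)$ is close to $\sqrt{k}$; the ratio degrades like $\frac{1}{\gamma}$ instead of staying below $\frac{2}{1-\epsilon}$. The paper's fix is different and avoids convexity entirely: enumerate the grid cubes that \emph{intersect} $P$ (rather than the grid centers lying in $P$), choose the cube whose center maximizes the minimum distance to the occupied positions, and place the new point at an arbitrary point of the intersection of that cube with $P$. The cube containing $p^*$ always intersects $P$ and its center is within $\frac{\sqrt{k}}{2m}$ of $p^*$, and the placed point is within $\frac{\sqrt{k}}{2m}$ of its own cube's center, so the total loss is at most $\frac{\sqrt{k}}{m}$, which your choice of $m$ already absorbs, with no requirement that any rounded point's cube center lie in $P$. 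In short: your proof is correct if you add the hypothesis that $P$ is convex; to prove the theorem as stated, replace the retreat step by this cube-intersection search.
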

\begin{proof}

We prove the competitive ratio $2$ for the greedy algorithm by induction.
First when there is only one point in $P$,
the optimal distance is equal to the maximal distance given by the greedy algorithm since the boundary is not taken into consideration.
Suppose when there are $n$ points in $P$ and the allocation of the $n$ points satisfies the competitive ratio $2$,
we find a feasible position for the $(n+1)th$ point satisfying the requirement.
Note that for arbitrarily fixed $n$ points $\{p_1,\dots,p_n\}$ where $p_i$ is in $P$ for $i\in [n]$, there exists a point $p^*$ such that
$$dis(p_i, p^*) \geq \frac{SP(n+1;P)}{2}, \forall i\in [n].$$
This is true because otherwise $n$ spheres centered at $\{p_1,\dots,p_n\}$ with radius $\frac{SP(n+1;P)}{2}$ is a covering for $P$.
Thus for $n+1$ points arbitrarily allocated in $P$,
there exists $i\in[n+1]$ such that there are at least two different points allocated in the same sphere centered at~$p_i$, excluding the boundary.
Therefore, the minimum distance among those two points, and hence the minimum distance among $n+1$ points is less than $SP(n+1;P)$.
Thus we derive a contradiction and the above inequality holds.
By allocating the $(n+1)$th point at position~$p^*$ gives us the~$2$-competitive ratio.

The method to approximate the greedy mechanism is similar to the proof of Theorem~\ref{thm:kd:approv}, but not exactly the same.
Note that when the boundary is not taken into consideration,
there may not exist a cube whose center is in $P$ and its distance to the greedy solution (denoted by $p^*$) is small.
However, the distance from $p^*$ to any point within the cube which contains $p^*$ is small.
By brute force searching the cube which intersects with $P$ and whose center has largest distance to the existing points,
allocating any point in the intersection of the cube and $P$ gives us a $\frac{2}{1-\epsilon}$-competitive algorithm with running time polynomial in $\frac{1}{(\gamma\epsilon)^k}$.
\end{proof}
%

Finally, the offline CD problem without boundary condition is similar to the original CD problem (see Section \ref{sec:kdcd} and Appendix \ref{appendix:offline:cd}),
and can be reduced to online ATWC without boundary condition with the competitive ratio scaling up by a factor of at most 2.

\section{The insert-only model}\label{app:insert_only}
Under the insert-only model
the online CD problem and the online ATWC problem are actually equivalent. On the one hand,
if an algorithm is $\sigma$-competitive for the online ATWC problem,
then directly applying the same algorithm achieves the same competitive ratio of $\sigma$
for the online CD problem. Indeed, since the algorithm is $\sigma$-competitive whenever a new point arrives,
it is actually $\sigma$-competitive within each time interval.
Taking integral over all time intervals, the competitive ratio $\sigma$ is preserved.

On the other hand, if an algorithm is $\sigma$-competitive for the online CD problem,
directly applying the same algorithm would be $\sigma$-competitive
for the online ATWC problem. Indeed, assuming otherwise,
 there must exist a point $i$ such that, in the time interval $[s_i, s_{i+1})$,
 the minimum distance $d_{min}(t)$ with $t\in [s_i, s_{i+1})$ 
 is smaller than a $\sigma$-fraction of $Disp(i; P)$.
So we can construct another instance $S'$ where the algorithm's competitive ratio for the online CD problem is violated.
More precisely, $S'$ keeps the arriving time of the first $i$ points unchanged, there is no other point arriving,
and 
the departure time $T$ of the $i$ points is set to be sufficiently large.
By doing so, the ratio between the cumulative distance and $d_{min}(t)T$ with $t\in [s_i, s_{i+1})$ is arbitrarily close to 1, and the ratio between the optimal cumulative distance and $Disp(i; P)T$ is arbitrarily close to 1.
Accordingly, the algorithm's competitive ratio for $S'$ for the online CD problem is worse than $\sigma$.%
\footnote{Intuitively $T =+\infty$, but it would be sufficient to set $T$ to be a large-enough finite number.}

Thus, all our algorithms for the online ATWC problem can be
directly applied to the online (and also offline) CD problem, with the competitive ratios unchanged.
Finally, all our inapproximability results for the online ATWC problem hold under the insert-only model.




\end{document}